\newenvironment{customthm}[1]
  {\innercustomthm}
  {\endinnercustomthm}
\DeclareSymbolFont{extraup}{U}{zavm}{m}{n}
\DeclareMathSymbol{\vardiamond}{\mathalpha}{extraup}{87}
\providecommand{\funding}[1]{\textbf{Funding.} #1}
\newtheorem{theorem}{Theorem}
\newtheorem{lemma}[theorem]{Lemma}
\newtheorem{corollary}[theorem]{Corollary}
\newtheorem{remark}[theorem]{Remark}
\newtheorem{definition}[theorem]{Definition}
\newtheorem{example}[theorem]{Example}
\newcommand{\ntr}{\mathfrak{N}}
\newcommand{\ltr}{\mathfrak{L}}
\newcommand{\seqcomp}{\otimes}
\newcommand{\ddn}[1]{#1^{N}}
\definecolor{tim}{RGB}{0, 0, 250}
\newcommand{\ifandonlyif}{\textit{iff} }
\newcommand{\iffi}{\textit{iff} }
\newcommand{\etc}{$\ldots$ }
\newcommand{\dfn}{Definition}
\newcommand{\fig}{Figure}
\newcommand{\lem}{Lemma}
\newcommand{\thm}{Theorem}
\newcommand{\cor}{Corollary}
\newcommand{\sect}{Section}
\newcommand{\cptr}{Ch.}
\newcommand{\D}{\mathrm{D}_{\charx}}
\newcommand{\T}{\mathrm{T}_{\charx}}
\newcommand{\four}{\mathrm{4}_{\charx}}
\newcommand{\B}{\mathrm{B}_{\charx}}
\newcommand{\five}{\mathrm{5}_{\charx}}
\newcommand{\ipa}{\mathrm{IPA}}
\newcommand{\bl}{[}
\newcommand{\br}{]}
\newcommand{\sbl}{\{}
\newcommand{\sbr}{\}}
\newcommand{\nika}{\mathsf{NIK}(\axs)}
\newcommand{\nikam}{\mathsf{NIK}(\axsii)}
\newcommand{\calc}{\mathbf{L}_{\albet}(\axs)}
\newcommand{\rcalc}{\mathbf{L}_{\albet}^{*}(\axs)}
\newcommand{\ncalc}{\mathbf{N}_{\albet}^{*}(\axs)}
\newcommand{\rccalc}{\mathbf{L}^{\mathrm{C}}_{\albet}(\axs)}
\newcommand{\inp}{\bullet}
\newcommand{\outp}{\circ}
\newcommand{\ns}{\Sigma}
\newcommand{\empseq}{\emptyset}
\newcommand{\botin}{(\bot^{\inp})}
\newcommand{\botout}{(\bot^{\outp})}
\newcommand{\conin}{(\land^{\inp})}
\newcommand{\conout}{(\land^{\outp})}
\newcommand{\disin}{(\lor^{\inp})}
\newcommand{\disout}{(\lor^{\outp})}
\newcommand{\iimpin}{(\iimp^{\inp})}
\newcommand{\iimpout}{(\iimp^{\outp})}
\newcommand{\xboxout}{(\xbox^{\outp})}
\newcommand{\yboxout}{(\ybox^{\outp})}
\newcommand{\xdiain}{(\xdia^{\inp})}
\newcommand{\ydiain}{(\ydia^{\inp})}
\newcommand{\sar}{\vdash}
\newcommand{\lseq}{\Lambda}
\newcommand{\der}{\vdash_{\axs}^{\albet}}
\newcommand{\ent}{\Vdash_{\axs}^{\albet}}
\newcommand{\sat}{\Vdash^{\albet}}
\newcommand{\h}{\mathsf{H}}
\newcommand{\axd}{\text{D}}
\newcommand{\axhsl}{\text{HSL}}
\newcommand{\prf}{\mathcal{D}}
\newcommand{\km}{\mathsf{K_{m}}}
\newcommand{\ikm}{\mathsf{IK_{m}}}
\newcommand{\ik}{\mathsf{IK}}
\newcommand{\ikt}{\mathsf{IKt}}
\newcommand{\ikma}{\mathsf{IK_{m}}(\albet,\mathcal{A})}
\newcommand{\ika}{\mathsf{IK}(\mathcal{A})}
\newcommand{\ikam}{\mathsf{IK}(\mathcal{B})}
\newcommand{\cate}{}
\newcommand{\concat}{\cdot}
\newcommand{\g}[1]{g(#1)}
\newcommand{\glang}{L_{\g{\axs}}}
\newcommand{\pto}{\longrightarrow}
\newcommand{\cfg}{G}
\newcommand{\prgr}[1]{PG(#1)}     
\newcommand{\prgrdom}{V}
\newcommand{\prgredges}{E}    
\newcommand{\empstr}{\varepsilon} 
\newcommand{\emppath}{\varepsilon}
\newcommand{\ppath}{\pi}
\newcommand{\charx}{x}
\newcommand{\chary}{y}
\newcommand{\charz}{z}
\newcommand{\chara}{a}
\newcommand{\charb}{b}
\newcommand{\charc}{c}
\newcommand{\stra}{s}
\newcommand{\strb}{t}
\newcommand{\strc}{r}
\newcommand{\conv}[1]{\overline{#1}}
\newcommand{\albetstr}{\albet^{\ast}}
\newcommand{\strabox}{[\stra]}
\newcommand{\stradia}{\langle \stra \rangle}
\newcommand{\osdr}{\pto_{\g{\axs}}}
\newcommand{\dr}{\pto_{\g{\axs}}^{*}}
\newcommand{\iimp}{\supset}
\newcommand{\ieq}{\equiv}
\newcommand{\inot}{\neg}
\newcommand{\xbox}{[\charx]}
\newcommand{\ybox}{[\chary]}
\newcommand{\xdia}{\langle \charx \rangle}
\newcommand{\ydia}{\langle \chary \rangle}
\newcommand{\xboxc}{[\conv{\charx}]}
\newcommand{\xdiac}{\langle \conv{\charx} \rangle}
\newcommand{\abox}{[\chara]}
\newcommand{\adia}{\langle \chara \rangle}
\newcommand{\dia}{\Diamond}
\newcommand{\diap}[1]{\langle #1 \rangle}
\newcommand{\boxp}[1]{[ #1 ]}
\newcommand{\albet}{\Upsigma}
\newcommand{\albetf}{\albet^{+}}
\newcommand{\albetb}{\albet^{-}}
\newcommand{\lang}[1]{\mathcal{L}(#1)}
\newcommand{\langc}[1]{\mathcal{L}^{\mathrm{C}}(#1)}
\newcommand{\prop}{\Upphi}
\newcommand{\sufo}[1]{\mathrm{S}(#1)}
\newcommand{\axs}{\mathcal{A}}
\newcommand{\axsii}{\mathcal{B}}
\newcommand{\lab}{\mathrm{Lab}}
\newcommand{\fseti}{\mathscr{A}}
\newcommand{\id}{(id)}
\newcommand{\botl}{(\bot_{l})}
\newcommand{\botr}{(\bot_{r})}
\newcommand{\disr}{(\lor_{r})}
\newcommand{\conr}{(\land_{r})}
\newcommand{\ddr}{(d_{\charx})}
\newcommand{\ipar}{(i^{\stra}_{\charx})}
\newcommand{\convr}{(c_{\charx})}
\newcommand{\xboxr}{(\xbox_{r})}
\newcommand{\xboxl}{(\xbox_{l})}
\newcommand{\xdiar}{(\xdia_{r})}
\newcommand{\xdial}{(\xdia_{l})}
\newcommand{\iimpl}{(\iimp_{l})}
\newcommand{\iimpr}{(\iimp_{r})}
\newcommand{\disl}{(\lor_{l})}
\newcommand{\conl}{(\land_{l})}
\newcommand{\prdiai}{(p_{\xdia}^{1})}
\newcommand{\prdiaii}{(p_{\xdia}^{2})}
\newcommand{\prboxi}{(p_{\xbox}^{1})}
\newcommand{\prboxii}{(p_{\xbox}^{2})}
\newcommand{\wk}{(w)}
\newcommand{\wkl}{(w_{l})}
\newcommand{\wkr}{(w_{r})}
\newcommand{\nec}{(n_{\charx})}
\newcommand{\med}{(m_{\charx})}
\newcommand{\ctr}{(c)}
\newcommand{\ctrl}{(ctr_{l})}
\newcommand{\ctrli}{(ctr_{l1})}
\newcommand{\ctrlii}{(ctr_{l2})}
\newcommand{\refi}{(r_{\xdia})}
\newcommand{\refii}{(r_{\xbox})}
\newcommand{\cut}{(cut)}
\newcommand{\ccut}{(cut_{\mathrm{C}})}
\newcommand{\sub}{(s)}
\newcommand{\disru}{(\lor)}
\newcommand{\conru}{(\land)}
\newcommand{\idru}{(id)}
\newcommand{\xboxru}{(\xbox)}
\newcommand{\xdiaru}{(\xdia)}
\newcommand{\prbox}{(p_{\xbox})}
\newcommand{\prdia}{(p_{\xdia})}
\newcommand{\rel}{\mathcal{R}}
\newcommand{\md}[1]{\mathrm{m_{d}}(#1)}
\journal{Logic Journal of the IGPL} 
\begin{document}

\begin{frontmatter}



\title{Nested Sequents for Intuitionistic Grammar Logics via Structural Refinement}


\author{Tim S. Lyon} 
\ead{timothy_stephen.lyon@tu-dresden.de}
\ead[url]{https://sites.google.com/view/timlyon}

\affiliation{organization={Institute of Artificial Intelligence, Technische Universität Dresden},
            addressline={Nöthnitzerstra{\ss}e 46}, 
            city={Dresden},
            postcode={01187}, 
            state={Saxony},
            country={Germany}}

\begin{abstract}
Intuitionistic grammar logics fuse constructive and multi-modal reasoning while permitting the use of converse modalities, serving as a generalization of standard intuitionistic modal logics. In this paper, we provide definitions of these logics as well as establish a suitable proof theory thereof. In particular, we show how to apply the structural refinement methodology to extract cut-free nested sequent calculi for intuitionistic grammar logics from their semantics. This method proceeds by first transforming the semantics of these logics into sound and complete labeled sequent systems, which we prove have favorable proof-theoretic properties such as syntactic cut-elimination. We then transform these labeled systems into nested sequent systems via the introduction of propagation rules and the elimination of structural rules. Our derived proof systems are then put to use, whereby we prove the conservativity of intuitionistic grammar logics over their modal counterparts, establish the general undecidability of these logics, and recognize a decidable subclass, referred to as \emph{simple} intuitionistic grammar logics.
\end{abstract}



\begin{keyword}
Decidability \sep 
Grammar logic \sep 
Intuitionistic logic \sep
Nested sequent \sep
Proof theory \sep
Structural refinement
\MSC[2020] 03B25 
 \sep 03B45 \sep 03B70 \sep 03F05 \sep 03F07 
 \sep 03F55
\end{keyword}

\end{frontmatter}


\section{Introduction}\label{sec:introduction}


 \emph{Grammar logics} form a prominent class of normal, multi-modal logics~\cite{CerPen88} extending classical propositional logic with a set of modalities indexed by characters from a given alphabet. These logics obtain their name on the basis of their relationship to context-free grammars. In particular, grammar logics incorporate axioms which may be viewed as production rules in a context-free grammar, and which generate sequences of edges indexed with characters (and thus may be viewed as words) in corresponding relational models. Due to the generality of this class of logics, it has been found that this class includes many well-known and useful logics such as description logics~\cite{HorSat04}, epistemic logics~\cite{FagMosHalVar95}, information logics~\cite{Vak86}, temporal logics~\cite{CerHer95}, and standard modal logics (e.g. $\mathsf{K}$, $\mathsf{S4}$, and $\mathsf{S5}$)~\cite{DemNiv05}. Despite the 
 various modes of reasoning offered within the class of grammar logics, such logics are nevertheless classical at their core, being defined atop classical propositional logic. Thus, it is interesting to place such logics on an intuitionistic, rather than a classical, footing.
 
 \emph{Intuitionistic logic} is one of the most eminent formulations of constructive reasoning, that is, reasoning where the claimed existence of an object implies its constructibility~\cite{Bro75}. Resting on the philosophical work of L.E.J. Brouwer, propositional intuitionistic logic was axiomatized in the early 20\textsuperscript{th} century by Kolmogorov~\cite{Kol25}, Orlov~\cite{Orl28}, and Glivenko~\cite{Gli29}, with a first-order axiomatization given by Heyting~\cite{Hey30}. Rather naturally, as the paradigm of intuitionistic reasoning evolved, it was eventually integrated with the paradigm of modal reasoning, begetting so-called \emph{intuitionistic modal logics}. 
 
 A plethora of intuitionistic (and constructive) modal logics have been proposed in the literature~\cite{BiePai00,BovDov84,Dov85,Fit48,PloSti86,Ser84,Sim94}, though the logics introduced by Fischer-Servi~\cite{Ser84}, and Plotkin and Stirling~\cite{PloSti86}, have become (most notably through the work of Simpson~\cite{Sim94}) one of the most popular formulations. 
 Around the same time, Ewald introduced \emph{intuitionistic tense logic}~\cite{Ewa86}, which not only includes modalities that make reference to future states in a relational model, but also includes modalities that make reference to past states. As with (multi-)modal and intuitionistic logics, intuitionistic modal logics have proven useful in computer science; e.g. such logics have been used to design verification techniques~\cite{FaiMen95}, in reasoning about functional programs~\cite{Pit91}, and in the definition of programming languages~\cite{DavPfe01}. Continuing in the same vein, the first contribution of this paper is to generalize both the intuitionistic mono-modal logics of Fischer-Servi~\cite{Ser84} as well as Plotkin and Stirling~\cite{PloSti86}, and Ewald's intuitionistic tense logic, giving rise to the class of \emph{intuitionistic grammar logics}. We provide a semantics for this class of logics, along with axiomatizations, and confirm soundness and completeness thereof. 
 
 Beyond defining the class of intuitionistic grammar logics, we also provide a suitable proof theory for this class of logics. As with any logic, proof calculi are indispensable for facilitating reasoning and prove valuable in establishing non-trivial properties of logics. A prominent formalism that has arisen within the last 30 or so years for formalizing reasoning for modal and intuitionistic logics is the \emph{nested sequent} formalism. Initiated by Bull~\cite{Bul92} and Kashima~\cite{Kas94}, nested sequent systems perform reasoning over trees of (pairs of) multisets of formulae, proving worthwhile in developing automated reasoning techniques for logics. Such systems have been used to write decision algorithms for logics supporting automated counter-model extraction~\cite{GalSal15,TiuIanGor12}, have been employed in constructive proofs of interpolation~\cite{FitKuz15,LyoTiuGorClo20}, and have even been applied in knowledge integration scenarios~\cite{LyoGom22}. 
 
 The value of nested sequent systems comes from the fact that such systems normally exhibit a collection of useful proof-theoretic properties simultaneously. As a first example, the well-known cut rule (cf.~\cite{Gen35a,Gen35b}), which serves as a generalization of \emph{modus ponens}, and is frequently employed in proofs of completeness, tends to be admissible (i.e. redundant) in nested sequent calculi. 
 As the cut rule deletes formulae from the premises to the conclusion when applied, the rule is unsuitable for bottom-up proof-search as the deleted formula must be guessed during proof-search, which can obstruct proofs of termination. Since cut is not required to appear in most nested sequent calculi, such calculi tend to be \emph{analytic}, i.e. one can observe that any formula occurring in a proof occurs as a subformula of the conclusion, which bounds the space of possible proofs of a given formula, proving beneficial in establishing decidability~\cite{Bru09,LyoBer19}. Still, nested sequent systems possess favorable properties beyond analyticity. For instance, such systems reason within the economical structure of trees, easing proofs of termination of associated proof-search algorithms, such systems permit the admissibility of useful structural rules (e.g. contractions and weakenings), and such systems tend to have invertible rules, which is helpful in extracting counter-models from failed proof-search.
 
 Recently, the \emph{structural refinement} methodology was developed as a means of generating nested sequent systems for diverse classes of logics~\cite{Lyo21thesis}. The methodology exploits the formalism of \emph{labeled sequents} whereby calculi are obtained by transforming the semantics of a logic into inference rules~\cite{Sim94,Vig00}. As such, labeled sequent systems perform reasoning within the semantics of a logic, and reason over structures closely resembling a logic's models, with sequents encoding arbitrary graphs of (pairs of) multisets of formulae (thus generalizing the data structure used in nested sequents). A nice feature of labeled sequent systems is that general results exist for their construction~\cite{CiaMafSpe13,Sim94} and such systems tend to exhibit desirable properties such as cut-elimination, admissibility of certain structural rules, and invertibility of rules~\cite{Neg05}. However, labeled systems have a variety of drawbacks as such systems typically involve superfluous structures in sequents, yielding larger proofs than necessary, making proof-search algorithms less efficient, and obfuscating termination proofs of associated proof-search algorithms (cf.~\cite{Lyo21thesis}).
 
 To circumvent the drawbacks of labeled sequent systems, the structural refinement methodology (in a broad sense) leverages the general construction techniques in the labeled setting to extract labeled systems from a logic's semantics, and then systematically transforms these systems into nested systems, which are more economical and better suited to applied scenarios. In a narrow sense, structural refinement consists of transforming a labeled sequent system into a nested system through the introduction of \emph{propagation rules} (cf.~\cite{CasCerGasHer97,Fit72}) or \emph{reachability rules}~\cite{Lyo21thesis,Lyo22} and the elimination of structural rules, followed by a notational translation. The propagation rules operate by viewing labeled sequents (which encode binary labeled graphs) as automata, allowing for formulae to be propagated along a path in the underlying graph of a labeled sequent, so long as the path is encoded by a string derivable in a certain formal grammar. The refinement methodology grew out of works relating labeled systems to `more refined' or nested systems~\cite{CiaLyoRam18,GorRam12,LyoBer19}. The propagation rules we use are largely based upon the work of~\cite{GorPosTiu11,TiuIanGor12}, where such rules were used in the setting of display and nested calculi. These rules were then transported to the labeled setting to prove the decidability of agency (STIT) logics~\cite{LyoBer19}, to establish translations between calculi within various proof-theoretic formalisms~\cite{CiaLyoRamTiu21}, and to provide a basis for the structural refinement methodology~\cite{Lyo21}. In this paper, we apply this methodology in the setting of intuitionistic grammar logics, obtaining analytic nested systems for these logics, which are then put to use to establish conservativity and (un)decidability results.
 
 This paper accomplishes the following: In \sect~\ref{sec:log-prelims}, we define intuitionistic grammar logics, providing a semantics, axiomatizations, and confirming soundness and completeness results. We also introduce the grammar-theoretic foundations necessary to define propagation rules. In \sect~\ref{sec:labeled-systems}, we define labeled sequent calculi 
 for intuitionistic grammar logics (which generalize Simpson's labeled systems for intuitionistic modal logics characterized by Horn properties~\cite{Sim94}) and provide admissibility, invertibility, and cut-elimination results, which also establishes syntactic cut-elimination for the labeled systems of Simpson mentioned above. In Sections~\ref{sec:refinement} and~\ref{sec:nested-calculi}, we demonstrate how to apply the structural refinement method to extract `refined' labeled and analytic nested sequent systems for intuitionistic grammar logics. In \sect~\ref{sec:(un)decid}, we leverage our nested and refined labeled systems to show three results: (1) Conservativity: we prove that intuitionistic grammar logics are conservative over intuitionistic modal logics, (2) Undecidability: we prove the undecidability of determining if a formula is a theorem of an arbitrary intuitionistic grammar logic by giving a proof-theoretic reduction of the problem from classical context-free grammar logics, and (3) Decidability: we adapt a method due to Simpson~\cite{Sim94} to our setting, recognizing that a subclass of intuitionistic grammar logics, referred to \emph{simple}, is decidable. Last, \sect~\ref{sec:conclusion} concludes and discusses future work.

This paper serves as a journal version extending the conference papers~\cite{Lyo21b} and~\cite{Lyo21a}. The first conference paper~\cite{Lyo21b} introduces intuitionistic grammar logics and provides sound and complete axiomatizations. The second conference paper~\cite{Lyo21a} shows how to derive nested sequent systems for intuitionistic mono-modal logics with seriality and Horn-Scott-Lemmon axioms by means of structural refinement, solving an open problem in~\cite{MarStr14}. We note that the work in Sections~\ref{sec:refinement} and~\ref{sec:nested-calculi} significantly generalizes the work in~\cite{Lyo21a} and that all the work in \sect~\ref{sec:(un)decid} is entirely new.


\section{Preliminaries}\label{sec:log-prelims}

    \subsection{Intuitionistic Grammar Logics}


 The language of each intuitionistic grammar logic is defined relative to an \emph{alphabet} $\albet$, which is a non-empty countable set of \emph{characters}, used to index modalities. As in~\cite{DemNiv05}, we stipulate that each alphabet $\albet$ can be partitioned into a \emph{forward part} $\albetf := \{\chara, \charb, \charc, \ldots\}$ and a \emph{backward part} $\albetb := \{\conv{\chara}, \conv{\charb}, \conv{\charc}, \ldots\}$ where the following is satisfied:
$$
\albet := \albetf \cup \albetb \text{ where } \albetf \cap \albetb = \emptyset \text{ and } \chara \in \albetf \text{ \ifandonlyif } \conv{\chara} \in \albetb.
$$
 $\albetf$ contains \emph{forward characters}, which we denote by $\chara$, $\charb$, $\charc$, \etc (possibly annotated), and $\albetb$ contains \emph{backward characters}, which we denote by $\conv{\chara}$, $\conv{\charb}$, $\conv{\charc}$, \etc (possibly annotated). A \emph{character} is defined to be either a forward or backward character, and we use $\charx$, $\chary$, $\charz$, \etc (possibly annotated) to denote them. In what follows, modalities indexed with forward characters will be interpreted as making reference to future states along the accessibility relation within a relational model, and modalities indexed with backward characters will make reference to past states. We define the \emph{converse operation} to be a function $\conv{\cdot}$ mapping each forward character $\chara \in \albetf$ to its \emph{converse} $\conv{\chara} \in \albetb$ and vice versa; hence, the converse operation is its own inverse, i.e. for any $\charx \in \albet$, $\charx = \conv{\conv{\charx}}$.

 We let $\prop := \{p, q, r, \ldots\}$ be a denumerable set of \emph{propositional atoms} and define the language $\lang{\albet}$ relative to a given alphabet $\albet$ via the following grammar in BNF:
$$
A ::= p \ | \ \bot \ | \ A \lor A \ | \ A \land A \ | \ A \iimp A \ | \ \xdia A \ | \ \xbox A
$$
where $p$ ranges over the set $\prop$ of propositional atoms and $\charx$ ranges over the characters in the alphabet $\albet$. We use $A$, $B$, $C$, \etc (possibly annotated) to denote formulae in $\lang{\albet}$ and define $\inot A := A \iimp \bot$. 
 Formulae are interpreted over \emph{bi-relational $\albet$-models}~\cite{Lyo21b}, which are inspired by the models for intuitionistic modal and tense logics presented in~\cite{BovDov84,Dov85,Ewa86,PloSti86}: 

\begin{definition}[Bi-relational $\albet$-Model~\cite{Lyo21b}]\label{def:bi-relational-model} We define a \emph{bi-relational $\albet$-model} to be a tuple $M = (W, \leq, \{R_{\charx} \ | \ \charx \in \albet\}, V)$ such that:
\begin{itemize}

\item $W$ is a non-empty set of \emph{worlds} $\{w, u, v, \ldots\}$;

\item The \emph{intuitionistic relation} $\leq \ \subseteq W \times W$ is a preorder, i.e. it is reflexive and transitive;

\item The \emph{accessibility relation} $R_{\charx} \subseteq W \times W$ satisfies:

\begin{itemize}

\item[(F1)] For all $w, v, v' \in W$, if $w R_{\charx} v$ and $v \leq v'$, then there exists a $w' \in W$ such that $w \leq w'$ and $w' R_{\charx} v'$;

\item[(F2)] For all $w, w', v \in W$, if $w \leq w'$ and $w R_{\charx} v$, then there exists a $v' \in W$ such that $w' R_{\charx} v'$ and $v \leq v'$;

\item[(F3)] $w R_{\charx} u$ \ifandonlyif $u R_{\conv{\charx}} w$;

\end{itemize}


\item $V : W \to 2^{\prop}$ is a \emph{valuation function} satisfying the \emph{monotonicity condition}:  for each $w, u \in W$, if $w \leq u$, then $V(w) \subseteq V(u)$.

\end{itemize}

\end{definition}

\begin{remark} The (F1) condition is implied by the (F2) and (F3) conditions, and the (F2) condition is implied by the (F1) and (F3) conditions. In other words, we need only impose $\{(F1),(F3)\}$ or $\{(F2),(F3)\}$ on 
our bi-relational $\albet$-models to characterize our logics. We mention both conditions (F1) and (F2) however since we make explicit use of both conditions later on.
\end{remark}

\begin{figure}[t]\label{fig:f1-f2}

\begin{center}
\begin{tabular}{c @{\hskip 3em} c}
\xymatrix@=2em{
w'\ar@{.>}[rr]|-{R_{\charx}}  & & v' \\
 & (F1) & \\
w\ar@{.>}[uu]|-{\leq}\ar[rr]|-{R_{\charx}} & & v\ar[uu]|-{\leq}
}

&

\xymatrix@=2em{
w'\ar@{.>}[rr]|-{R_{\charx}}  & & v' \\
 & (F2) & \\
w\ar[uu]|-{\leq}\ar[rr]|-{R_{\charx}} & & v\ar@{.>}[uu]|-{\leq}
}

\end{tabular}
\end{center}

\caption{Depictions of the (F1) and (F2) conditions imposed on bi-relational $\albet$-models. Dotted arrows indicate the relations implied by the presence of the solid arrows.} 
\end{figure}

The (F1) and (F2) conditions are depicted in \fig~\ref{fig:f1-f2} and ensure the monotonicity of complex formulae (see \lem~\ref{lem:persistence}) in our models, which is a property characteristic of intuitionistic logics.\footnote{For a discussion of these conditions and related literature, see~\cite[\cptr~3]{Sim94}.} If an accessibility relation $R_{\chara}$ is indexed with a forward character, then we interpret it as a relation to \emph{future} worlds, and if an accessibility relation $R_{\conv{\chara}}$ is indexed with a backward character, then we interpret it as a relation to \emph{past} worlds. Hence, our formulae and related models have a tense character, showing that our logics generalize the intuitionistic tense logics of Ewald~\cite{Ewa86}.

We interpret formulae from $\lang{\albet}$ over bi-relational models via the following clauses.

\begin{definition}[Semantic Clauses~\cite{Lyo21b}]
\label{def:semantic-clauses} Let $M$ be a bi-relational $\albet$-model with $w \in W$. The \emph{satisfaction relation} $M,w \sat A$ between $w \in W$ of $M$ and a formula $A \in \lang{\albet}$ is inductively defined as follows:

\begin{itemize}

\item $M,w \sat p$ \ifandonlyif $p \in V(w)$, for $p \in \prop$;

\item $M,w \not\sat \bot$;

\item $M,w \sat A \lor B$ \ifandonlyif $M,w \sat A$ or $M,w \sat B$;

\item $M,w \sat A \land B$ \ifandonlyif $M,w \sat A$ and $M,w \sat B$;


\item $M,w \sat A \iimp B$ \ifandonlyif for all $w' \in W$, if $w \leq w'$ and $M,w' \sat A$, then $M,w' \sat B$;

\item $M,w \sat \xdia A$ \ifandonlyif there exists a $v \in W$ such that $w R_{\charx} v$ and $M,v \sat A$;

\item $M,w \sat \xbox A$ \ifandonlyif for all $w', v' \in W$, if $w \leq w'$ and $w' R_{\charx} v'$, then $M,v' \sat A$.

\end{itemize}
\end{definition}

\begin{lemma}[Persistence]\label{lem:persistence}
Let $M$ be a bi-relational $\albet$-model with $w,u \in W$ of $M$. If $w \leq u$ and $M, w \sat A$, then $M, u \sat A$.
\end{lemma}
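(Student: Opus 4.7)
The plan is to prove the lemma by induction on the structure of the formula $A$, following the standard strategy for persistence in intuitionistic (modal) settings. The semantic clauses given in \dfn~\ref{def:semantic-clauses} have been designed so that the intuitionistic relation $\leq$ is already "baked into" the clauses for $\iimp$ and $\xbox$, which means most cases reduce to applications of the induction hypothesis together with transitivity of $\leq$; the genuinely substantive work is localized to the propositional base case and to the diamond case.

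First I would dispatch the base cases. For $A = p$, I apply the monotonicity condition on the valuation $V$ directly: from $w \leq u$ and $p \in V(w)$ we obtain $p \in V(u)$, whence $M,u \sat p$. The case $A = \bot$ is vacuous since $M,w \not\sat \bot$ for any $w$. The cases $A = B \lor C$ and $A = B \land C$ then follow immediately by invoking the induction hypothesis on each disjunct/conjunct. For $A = B \iimp C$, I assume $M,w \sat B \iimp C$ together with $w \leq u$, pick an arbitrary $u'$ with $u \leq u'$ and $M,u' \sat B$, and then appeal to transitivity of $\leq$ to conclude $w \leq u'$, which with the assumption on $w$ yields $M,u' \sat C$.

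The modal cases require more care. For $A = \xbox B$, the definition is stated with a quantifier over $w'$ with $w \leq w'$, so the argument again reduces to transitivity of $\leq$: given $u \leq u'$ and $u' R_{\charx} v'$, from $w \leq u \leq u'$ and the hypothesis $M,w \sat \xbox B$ I directly extract $M,v' \sat B$. The hard part, and the only place where the frame conditions on bi-relational $\albet$-models are actually invoked, will be the case $A = \xdia B$. Here, from $M,w \sat \xdia B$ I obtain a witness $v \in W$ with $w R_{\charx} v$ and $M,v \sat B$, but I must now produce a witness relative to $u$. This is precisely what condition (F2) supplies: since $w \leq u$ and $w R_{\charx} v$, there exists $v' \in W$ with $u R_{\charx} v'$ and $v \leq v'$. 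Applying the induction hypothesis to $B$ at $v$ and $v'$ gives $M,v' \sat B$, and hence $M,u \sat \xdia B$ as required.

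The main obstacle, such as it is, is ensuring that the diamond case closes cleanly, since it is the one case where persistence is not immediate from the shape of the semantic clause and where the structural condition (F2) must be used in an essential way. Condition (F1) plays no role in this proof; it will pay off elsewhere (e.g. in soundness or in validating further closure properties), but for persistence the asymmetric condition (F2) is exactly what is needed to transport diamond-witnesses along the preorder $\leq$.
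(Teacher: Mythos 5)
Your proof is correct and follows exactly the route the paper intends: the paper's own proof is just the one-line remark ``by induction on the complexity of $A$,'' and your case analysis---monotonicity of $V$ for atoms, transitivity of $\leq$ for $\iimp$ and $\xbox$, and condition (F2) for the $\xdia$ case---is the standard way to fill in that induction. Your observation that (F1) is not needed here (it is used instead in the soundness argument for $\iimpr$) is also accurate.
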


\begin{proof} By induction on the complexity of $A$.
\end{proof}

 Given an alphabet $\albet$, the set of formulae valid with respect to the class of bi-relational $\albet$-models is axiomatizable~\cite{Lyo21b}. We refer to the axiomatization as $\h\ikm(\albet)$ (with $\h$ denoting the fact that the axiomatization is a \emph{Hilbert calculus}), and call the corresponding logic that it generates $\ikm(\albet)$.

\begin{definition}[Axiomatization]\label{def:axiomatization} 
 We define our axiomatization $\h\ikm(\albet)$ below, where we have an axiom and inference rule for each $\charx \in \albet$.

\begin{multicols}{2}
\begin{itemize}

\item[A0] Any axiomatization for intuitionistic propositional logic

\item[A1] $\xbox (A \iimp B) \iimp (\xbox A \iimp \xbox B)$

\item[A2] $\xbox (A \land B) \ieq (\xbox A \land \xbox B)$

\item[A3] $\xdia (A \lor B) \ieq (\xdia A \lor \xdia B)$

\item[A4] $\xbox (A \iimp B) \iimp (\xdia A \iimp \xdia B)$

\item[A5] $(\xbox A \land \xdia B) \iimp \xdia (A \land B)$

\item[A6] $\inot \xdia \bot$ 

\item[A7] $(A \iimp \xbox \xdiac A) \land (\xdia \xboxc A \iimp A)$

\item[A8] $(\xdia A \iimp \xbox B) \iimp \xbox (A \iimp B)$

\item[A9] $\xdia (A \iimp B) \iimp (\xbox A \iimp \xdia B)$

\item[R0] \AxiomC{$A$}\AxiomC{$A \iimp B$}\RightLabel{$(mp)$}\BinaryInfC{$B$}\DisplayProof

\item[R1] \AxiomC{$A$}\RightLabel{$\nec$}\UnaryInfC{$\xbox A$}\DisplayProof

\end{itemize}
\end{multicols}
We define the logic $\ikm(\albet)$ to be the smallest set of formulae from $\lang{\albet}$ closed under substitutions of the axioms and applications of the inference rules. A formula $A$ is defined to be a \emph{theorem} of $\ikm(\albet)$  \ifandonlyif $A  \in \ikm(\albet)$.
\end{definition}

 We note that if we let $\albet := \{a,\conv{a}\}$, then the resulting logic is a notational variant of Ewald's intuitionistic tense logic $\ikt$~\cite{Ewa86}, which is a conservative extension of the mono-modal intuitionistic modal logic $\ik$~\cite{PloSti86}. In our setting, we let $\ikm(\albet)$ be the base intuitionistic grammar logic relative to $\albet$, and consider extensions of $\ikm(\albet)$ with sets $\axs$ of the following axioms. 
$$
\D: \xbox A \iimp \xdia A \quad \ipa: (\diap{\charx_{1}} \cdots \diap{\charx_{n}} A \iimp \diap{\charx} A) \land (\boxp{\charx} A \iimp \boxp{\charx_{1}} \cdots \boxp{\charx_{n}} A)
$$
We refer to axioms of the form shown above left as \emph{seriality axioms}, and axioms of the form shown above right as \emph{intuitionistic path axioms} ($\ipa$s). We use $\axs$ to denote any arbitrary collection of axioms of the above forms. Moreover, we note that the collection of $\ipa$s subsumes the class of Horn-Scott-Lemmon axioms~\cite{Lyo21a} and includes multi-modal and intuitionistic variants of standard axioms such as $\T$, $\B$, $\four$, and $\five$.
$$
\T: (A \iimp \xdia A) \land (\xbox A \iimp A) \quad
\four: (\xdia \xdia A \iimp \xdia A) \land (\xbox A \iimp \xbox \xbox A)
$$
$$
\B: (\xdiac A \iimp \xdia A) \land (\xbox A \iimp \xboxc A)
\quad
\five: (\xdiac \xdia A \iimp \xdia A) \land (\xbox A \iimp \xboxc \xbox A)
$$
 It was proven that any extension of $\h\ikm(\albet)$ with a set $\axs$ of axioms is sound and complete relative to a sub-class of the bi-relational $\albet$-models satisfying frame conditions related to each axiom~\cite{Lyo21b}. 
 Axioms and related frame conditions are displayed in \fig~\ref{fig:axioms-related-conditions}, and extensions of $\h\ikm(\albet)$ with seriality and $\ipa$ axioms, along with their corresponding models, are defined below.

\begin{figure}[t]
\begin{center}
\bgroup
\def\arraystretch{1.5}
\begin{tabular}{| c || c |}
\hline
$\xbox A \iimp \xdia A$  & $(\diap{\charx_{1}} \cdots \diap{\charx_{n}} A \iimp \diap{\charx} A) \land (\boxp{\charx} A \iimp \boxp{\charx_{1}} \cdots \boxp{\charx_{n}} A)$ \\
\hline
$\forall w \exists u (w R_{\charx} u)$ & $\forall w_{0}, \ldots, w_{n} (w_{0} R_{\charx_{1}} w_{1} \land \cdots \land w_{n-1} R_{\charx_{n}} w_{n} \iimp w_{0} R_{\charx} w_{n})$\\
\hline
\end{tabular}
\egroup
\end{center}

\caption{Axioms are displayed in the first row and their related frame conditions are displayed directly underneath them. We note that when $n=0$ in the $\ipa$, the related frame condition is taken to be $\forall w (w R_{\charx} w)$.}
\label{fig:axioms-related-conditions}
\end{figure}

\begin{definition}[Syntactic Notions for Extensions] We define the axiomatization $\h\ikm(\albet,\axs)$ to be $\h\ikm(\albet) \cup \axs$, and define the logic $\ikm(\albet,\axs)$ to be the smallest set of formulae from $\lang{\albet}$ closed under substitutions of the axioms from $\h\ikm(\albet)$ and $\axs$ and applications of the inference rules. A formula $A$ is defined to be an \emph{$\ikm(\albet,\axs)$-theorem}, written $\der A$, \ifandonlyif $A \in \ikm(\albet,\axs)$, and a formula $A$ is said to be \emph{derivable} from a set of formulae $\fseti \subseteq \lang{\albet}$, written $\fseti \der A$, \ifandonlyif for some $B_{1}, \ldots, B_{n} \in \fseti$, $\der B_{1} \land \cdots \land B_{n} \iimp A$.
\end{definition}

\begin{definition}[Semantic Notions for Extensions] We define a \emph{bi-relational $(\albet,\axs)$-model} to be a bi-relational $\albet$-model satisfying each frame condition related to an axiom $A \in \axs$. A formula $A$ is defined to be \emph{globally true} on a bi-relational $(\albet,\axs)$-model $M$, written $M \ent A$, \ifandonlyif $M,u \sat A$ for all worlds $u \in W$ of $M$. A formula $A$ is defined to be \emph{$(\albet,\axs)$-valid}, written $\ent A$, \ifandonlyif $A$ is globally true on every bi-relational $(\albet,\axs)$-model. Last, we say that a set $\fseti$ of formulae \emph{semantically implies} a formula $A$, written $\fseti \ent A$, \ifandonlyif for all bi-relational $(\albet,\axs)$-models $M$ and each $w \in W$ of $M$, if $M, w \sat B$ for each $B \in \fseti$, then $M,w \sat A$.
\end{definition}

\begin{remark} Note that the axiomatization $\h\ikm(\albet) = \h\ikm(\albet,\emptyset)$ and that a bi-relational $(\albet,\emptyset)$-model is a bi-relational $\albet$-model.
\end{remark}

 One can prove strong soundness for each intuitionistic grammar logic (i.e. for any $\fseti \subseteq \lang{\albet}$ and $A \in \lang{\albet}$, if $\fseti \der A$, then $\fseti \ent A$) by showing that each axiom in $\h\ikm(\albet,\axs)$ is $(\albet,\axs)$-valid and that each inference rule preserves $(\albet,\axs)$-validity. The converse of this result, strong completeness, is shown by means of a typical canonical model construction. Proofs of both facts can be found in Lyon~\cite{Lyo21b}.

\begin{theorem}[Soundness and Completeness~\cite{Lyo21b}]\label{thm:sound-complete-logic}
$\fseti \der A$ \iffi $\fseti \ent A$.
\end{theorem}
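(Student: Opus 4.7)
The plan is to prove the two directions separately, in the standard Hilbert-style pattern. For soundness, I would argue by induction on the length of a derivation witnessing $\fseti \der A$. Unfolding the definition of derivability from assumptions, it suffices to verify that (i) every axiom of $\h\ikm(\albet,\axs)$ is $(\albet,\axs)$-valid and (ii) the necessitation rule $R1$ and modus ponens preserve global truth. The propositional case reduces to the usual soundness of intuitionistic propositional logic under the Kripke clauses; axioms A1--A5, A8, A9 are the intuitionistic Fischer--Servi schemata whose validity on bi-relational $\albet$-models follows directly from the semantic clauses in Definition~\ref{def:semantic-clauses} together with the forward and backward confluence conditions (F1) and (F2); A6 follows immediately from the clause for $\xdia$; and A7 is the tense interaction axiom, whose validity crucially uses (F3), which identifies $R_{\charx}$ with the converse of $R_{\conv{\charx}}$. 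For the parametric axioms in $\axs$, the seriality axiom $\D$ corresponds in one step to its frame condition, while each $\ipa$ axiom is discharged by chaining $n$ instances of the path condition from \fig~\ref{fig:axioms-related-conditions} and invoking Persistence (\lem~\ref{lem:persistence}) to transport the antecedent along $\canirel$ where needed.

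For completeness I would mount a canonical model construction tailored to the bi-relational, tense-like setting. Call a set $\Gamma \subseteq \lang{\albet}$ a \emph{prime theory} if it is deductively closed under $\der$, consistent, and has the disjunction property. A Lindenbaum-style extension lemma proves that whenever $\fseti \not\der A$, there exists a prime theory $\Gamma \supseteq \fseti$ with $A \notin \Gamma$. Let $\candom$ be the collection of all prime theories, set $\canirel \; := \; \subseteq$, and define the accessibility relation $\canmrel$ by a symmetric condition on the pair $(\charx, \conv{\charx})$ involving $\xbox$-membership and $\xdia$-membership, so that (F3) holds by design and axiom A7 is needed to justify the symmetric formulation. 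Set $\canv(\Gamma) := \Gamma \cap \prop$. The bulk of the work is then the truth lemma, $\canmod, \Gamma \sat A$ \iffi $A \in \Gamma$, proved by induction on $A$; the nontrivial cases $\iimp$, $\xbox$ and $\xdia$ require standard existence lemmas that build, given $\xbox B \notin \Gamma$ or $\xdia B \in \Gamma$, a suitable $\canmrel$-related prime theory, using A2--A5, A8 and A9 to obtain the closure conditions required for primeness. Completeness then follows: given $\fseti \not\der A$, a prime theory $\Gamma$ containing $\fseti$ but omitting $A$ yields a bi-relational $(\albet,\axs)$-model counter-witness to $\fseti \ent A$.

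The principal obstacle is verifying that the canonical tuple is actually a bi-relational $(\albet,\axs)$-model, i.e. that it satisfies (F1), (F2), (F3) \emph{and} each frame condition associated with an axiom in $\axs$. Conditions (F1) and (F2) have no classical analogue and are the technical heart of the argument; they are established by interleaving primeness with modal closure, using A8 and A9 to extract an intermediate prime theory that witnesses the required diamond, along the lines of~\cite{Sim94}. For each $\ipa$ axiom, the path condition in the canonical model is obtained by iterating the corresponding diamond direction of the axiom $n$ times and applying the prime-theory extension lemma at each step, while seriality for $\D$ is immediate from the $\xdia$ existence lemma applied to any $\xbox$-formula. Since the detailed verifications are already carried out in Lyon~\cite{Lyo21b}, I would then simply cite that work for the complete argument.
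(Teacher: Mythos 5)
Your proposal follows exactly the route the paper indicates: soundness by verifying that each axiom of $\h\ikm(\albet,\axs)$ is $(\albet,\axs)$-valid and that the inference rules preserve validity, and strong completeness by a canonical model construction over prime theories, with the detailed verifications deferred to Lyon~\cite{Lyo21b}. The paper itself only sketches this argument and cites the same reference, so your (more fleshed-out) outline is essentially the same proof.
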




    \subsection{Formal Grammars and Languages}


A central component to the structural refinement methodology---i.e. the extraction of nested calculi from labeled---is the use of inference rules (viz. reachability rules) whose applicability depends on strings generated by formal grammars~\cite{Lyo21thesis}. We therefore introduce grammar-theoretic notions that are essential to the functionality of such rules, and formally define such rules in \sect~\ref{sec:refinement}.

\begin{definition}[$\albetstr$] We let $\concat$ be the \emph{concatenation operation} with $\varepsilon$ the \emph{empty string}. We define the set $\albet^{*}$ of \emph{strings over $\albet$} to be the smallest set such that:
\begin{itemize}

\item $\albet \cup \{\varepsilon\} \subseteq \albet^{*}$

\item $\text{If } \stra \in \albet^{*} \text{ and } \charx \in \albet \text{, then } \stra \concat \charx \in \albet^{*}$

\end{itemize}
\end{definition}

For a set $\albetstr$ of strings, we use $\stra$, $\strb$, $\strc$, \etc (potentially annotated) to represent strings in $\albetstr$. As usual, the empty string $\empstr$ is taken to be the identity element for the concatenation operation, i.e. $\stra \concat \empstr = \empstr \concat \stra = \stra$ for $\stra \in \albet^{*}$. Furthermore, we will not usually mention the concatenation operation in practice and will let $\stra \cate \strb := \stra \concat \strb$, that is, we denote concatenation by simply gluing two strings together. Beyond concatenation, another useful operation to define on strings is the \emph{converse operation}.

\begin{definition}[String Converse~\cite{DemNiv05}] We extend the converse operation to strings as follows:
\begin{itemize}

\item $\conv{\varepsilon} := \varepsilon$;

\item $\text{If } \stra = \charx_{1} \cdots \charx_{n} \text{, then } \conv{\stra} := \conv{\charx}_{n} \cdots \conv{\charx}_{1}$.

\end{itemize}
\end{definition}











 We define strings of modalities accordingly: if $\stra = \chara_{0} \cate \chara_{1} \cdots \chara_{n}$, then $[ \stra ] =  [ \chara_{0} ] [ \chara_{1} ] \cdots [ \chara_{n} ]$ and $\langle \stra \rangle =  \langle \chara_{0} \rangle \langle \chara_{1} \rangle \cdots \langle \chara_{n} \rangle$, with $\strabox \phi = \stradia \phi = \phi$ when $\stra = \empstr$. Hence, every $\ipa$ may be written in the form $(\langle \stra \rangle A \iimp \langle \charx \rangle A) \land ([ \charx ] A \iimp [\stra] A)$, where $\langle \stra \rangle = \langle \charx_{1} \rangle \cdots \langle \charx_{n} \rangle$ and $[ \stra ] = [ \charx_{1} ] \cdots [ \charx_{n} ]$. We make use of this notation to compactly define \emph{$\axs$-grammars}, which are types of \emph{Semi-Thue systems}~\cite{Pos47}, encoding information contained in a set $\axs$ of axioms, and employed later on in the definition of our reachability rules. 

\begin{definition}[$\axs$-grammar~\cite{Lyo21a}]\label{def:grammar} An \emph{$\axs$-grammar} is a set $\g{\axs}$ such that:
\begin{center}
$(\charx \pto \stra), (\conv{\charx} \pto \conv{\stra}) \in \g{\axs}$ \iffi $(\langle \stra \rangle A \iimp \langle \charx \rangle A) \land ([ \charx ] A \iimp [\stra] A) \in \axs$.
\end{center}
We call rules of the form $\charx \pto \stra$ \emph{production rules}, where $\charx \in \albet$ and $\stra \in \albetstr$.
\end{definition}


An $\axs$-grammar $\g{\axs}$ is a type of string re-writing system. For example, if $\charx \pto \stra \in \g{\axs}$, we may derive the string $\strb \cate \stra \cate \strc$ from $\strb \cate \charx \cate \strc$ in one-step by applying the mentioned production rule. Through repeated applications of production rules to a given string $\stra \in \albetstr$, one derives new strings, the collection of which, determines a language. Let us make such notions precise by means of the following definition:

\begin{definition}[Derivation, Language~\cite{Lyo21a}]\label{def:derivation-language} Let $\g{\axs}$ be an $\axs$-grammar. The \emph{one-step derivation relation} $\osdr$ holds between two strings $\stra$ and $\strb$ in $\albetstr$, written $\stra \osdr \strb$, \iffi there exist $\stra', \strb' \in \albetstr$ and $\charx \pto \strc \in \g{\axs}$ such that $\stra = \stra' \cate \charx \cate \strb'$ and $\strb = \stra' \cate \strc \cate \strb'$.  The \emph{derivation relation} $\dr$ is defined to be the reflexive and transitive closure of $\osdr$. For two strings $\stra, \strb \in \albetstr$, we refer to $\stra \dr \strb$ as a \emph{\cfg-derivation of $\strb$ from $\stra$}, and define 
 its \emph{length} to be equal to the minimal number of one-step derivations needed to derive $\strb$ from $\stra$ in $\g{\axs}$. Last, for a string $\stra \in \albet^{*}$, the \emph{language of $\stra$ relative to $\g{\axs}$} is defined to be the set $\glang(\stra) := \{\strb \ | \ \stra \dr \strb \}$.
\end{definition}



\section{Labeled Sequent Systems}\label{sec:labeled-systems}


We generalize Simpson's labeled sequent systems for intuitionistic modal logics~\cite{Sim94} to the multi-modal case with converse modalities. Our systems make use of \emph{labels} (which we occasionally annotate) from a denumerable set $\lab := \{w,u,v,\ldots\}$, as well as two distinct types of formulae: (i) \emph{labeled formulae}, which are of the form $w : A$ with $w \in \lab$ and $A \in \lang{\albet}$, and (ii) \emph{relational atoms}, which are of the form $wR_{\charx}u$ for $w,u \in \lab$ and $\charx \in \albet$. We define a \emph{labeled sequent} to be a formula of the form $\rel, \Gamma \sar w : A$, where $\rel$ is a (potentially empty) multiset of relational atoms, and $\Gamma$ is a (potentially empty) multiset of labeled formulae. We will occasionally use $\lseq$ and annotated versions thereof to denote labeled sequents, and we let $\lab(\rel)$, $\lab(\Gamma)$, and $\lab(\lseq)$ be the set of labels occurring in a multiset of relational atoms $\rel$, a multiset of labeled formulae $\Gamma$, and a labeled sequent $\lseq$, respectively. For a string $\stra = \charx_{1} \cdots \charx_{n}$, we let $wR_{\stra}u := wR_{\charx_{1}}w_{1}, w_{1}R_{\charx_{2}}w_{2}, \ldots, w_{n-1}R_{\charx_{n}}u$, 
 and note that $wR_{\empstr}u := (w = u)$. 
For a multiset $\Gamma$ of labeled formulae, we let $w : \Gamma$ be the multiset $\{u : A \in \Gamma \ | \ w = u\}$, which is equal to the empty multiset $\emptyset$ when $w \not \in \lab(\Gamma)$.
 
 For any alphabet $\albet$ and set $\axs$ of axioms, we obtain a calculus $\calc$, which is displayed in \fig~\ref{fig:labeled-calculi}. Such systems can be seen as intuitionistic variants of the labeled systems $\mathsf{G3KM}(S)$ for classical grammar logics~\cite{Lyo21thesis}, obtained by fixing a single labeled formula on the right-hand-side of the sequent arrow.

\begin{figure}[!t]
\noindent\hrule

\begin{center}
\begin{tabular}{c c} 
\AxiomC{}
\RightLabel{$\id$}
\UnaryInfC{$\rel, \Gamma, w : p \sar w : p$}
\DisplayProof

&

\AxiomC{}
\RightLabel{$\botl$}
\UnaryInfC{$\rel, w : \bot,\Gamma \sar u : A$}
\DisplayProof
\end{tabular}
\end{center}

\begin{center}
\begin{tabular}{c @{\hskip .7em} c}
\AxiomC{$\rel, \Gamma, w : A \sar u : C$}
\AxiomC{$\rel, \Gamma, w : B \sar u : C$}
\RightLabel{$\disl$}
\BinaryInfC{$\rel, \Gamma, w : A \vee B \sar u : C$}
\DisplayProof

&

\AxiomC{$\rel, \Gamma, w : A \sar w : B$}
\RightLabel{$\iimpr$}
\UnaryInfC{$\rel, \Gamma \sar w : A \iimp B$}
\DisplayProof
\end{tabular}
\end{center}

\begin{center}
\begin{tabular}{c c}
\AxiomC{$\rel, wR_{\charx}u, \Gamma \sar v : A$}
\RightLabel{$\ddr^{\dag}$}
\UnaryInfC{$\rel, \Gamma \sar v : A$}
\DisplayProof

&

\AxiomC{$\rel, \Gamma \sar w : A_{i}$}
\RightLabel{$\disr~i \in \{1,2\}$}
\UnaryInfC{$\rel, \Gamma \sar w : A_{1} \vee A_{2}$}
\DisplayProof
\end{tabular}
\end{center}

\begin{center}
\begin{tabular}{c @{\hskip 1em} c}
\AxiomC{$\rel, \Gamma, w :A, w :B \sar u : C$}
\RightLabel{$\conl$}
\UnaryInfC{$\rel, \Gamma, w :A \wedge B \sar u : C$}
\DisplayProof

&

\AxiomC{$\rel, \Gamma \sar w : A$}
\AxiomC{$\rel, \Gamma \sar w : B$}
\RightLabel{$\conr$}
\BinaryInfC{$\rel, \Gamma \sar w : A \wedge B$}
\DisplayProof
\end{tabular}
\end{center}

\begin{center}
\begin{tabular}{c}
\AxiomC{$\rel, \Gamma, w :A \iimp B \sar w :A$}
\AxiomC{$\rel, \Gamma, w : B \sar u : C$}
\RightLabel{$\iimpl$}
\BinaryInfC{$\rel, \Gamma, w :A \iimp B \sar u : C$}
\DisplayProof 
\end{tabular}
\end{center}

\begin{center}
\begin{tabular}{c c}
\AxiomC{$\rel, w R_{\charx} u, \Gamma, u : A \sar v : B$}
\RightLabel{$\xdial^{\dag}$}
\UnaryInfC{$\rel, \Gamma, w : \xdia A \sar v : B$}
\DisplayProof

&

\AxiomC{$\rel, w R_{\charx} u, \Gamma \sar u : A$}
\RightLabel{$\xdiar$}
\UnaryInfC{$\rel, w R_{\charx} u, \Gamma \sar w : \xdia A$}
\DisplayProof
\end{tabular}
\end{center}

\begin{center}
\begin{tabular}{c c}
\AxiomC{$\rel, w R_{\charx} u, \Gamma \sar u : A$}
\RightLabel{$\xboxr^{\dag}$}
\UnaryInfC{$\rel, \Gamma \sar w : \xbox A$}
\DisplayProof

&

\AxiomC{$\rel, w R_{\charx} u, \Gamma, w : \xbox A, u : A \sar v : C$}
\RightLabel{$\xboxl$}
\UnaryInfC{$\rel, w R_{\charx} u, \Gamma, w : \xbox A \sar v : C$}
\DisplayProof
\end{tabular}
\end{center}

\begin{center}
\begin{tabular}{c c}
\AxiomC{$\rel, w R_{\stra} u, w R_{\charx} u,  \Gamma \sar v : A$}
\RightLabel{$\ipar$}
\UnaryInfC{$\rel, w R_{\stra} u, \Gamma \sar v : A$}
\DisplayProof

&

\AxiomC{$\rel, w R_{\charx} u, u R_{\conv{\charx}} w,  \Gamma \sar v : A$}
\RightLabel{$\convr$}
\UnaryInfC{$\rel, w R_{\charx} u, \Gamma \sar v : A$}
\DisplayProof
\end{tabular}
\end{center}

\hrule
\caption{The labeled calculi $\calc$. We have $\ddr$ as a rule in the calculus, if $\D \in \axs$, and an $\ipar$ rule in the calculus, for each $\ipa$ of the form $(\langle \stra \rangle A \iimp \langle \charx \rangle A) \land ([ \charx ] A \iimp [\stra] A) \in \axs$. Furthermore, we have a $\xdial$, $\xdiar$, $\xboxl$, $\xboxr$, and $\convr$ rule for each $\charx \in \albet$. The side condition $\dag$ states that the rule is applicable only if $u$ is fresh.} 
\label{fig:labeled-calculi}
\end{figure}

We classify the $\id$ and $\botl$ rules as \emph{initial rules}, the $\ddr$, $\ipar$, and $\convr$ rules as \emph{structural rules}, and the remaining rules in \fig~\ref{fig:labeled-calculi} as \emph{logical rules}. Initial rules serve as the axioms of our labeled systems and initiate proofs, structural rules operate on relational atoms, and logical rules construct complex logical formulae. 
 The $\ddr$, $\xdial$, and $\xboxr$ rules possess a side condition, which stipulates that the rule is applicable only if the label $u$ is \emph{fresh}, i.e. if the rule is applied, then the label $u$ will not occur in the conclusion. We define the \emph{auxiliary}, \emph{principal}, and \emph{active} formulae of a rule in the usual way (cf.~\cite[Chapter 1]{Tak13}), i.e., a formula is \emph{auxiliary} in a rule if it is explicitly mentioned in the premise(s) and is used to derive the \emph{principal} formula, which is the formula explicitly mentioned in the conclusion; a formula is \emph{active} if it is auxiliary or principal. For example, in the $\xboxr$ rule, $w R_{\charx} u$ and $u : A$ are auxiliary, $w : \xbox A$ is principal, and all such formulae are active. A \emph{proof} in $\calc$ is constructed in the usual fashion by successively applying logical or structural rules to initial rules, and the \emph{height} of a proof is defined to be the longest sequence of sequents from the conclusion of the proof to an initial rule.

 We point out that the $\ddr$ and $\ipar$ structural rules form a proper subclass of Simpson's \emph{geometric structural rules}~(see~\cite[p.~126]{Sim94}) used to generate labeled sequent systems for $\ik$ extended with any number of \emph{geometric axioms}. When $\stra = \empstr$ in $\ipar$, i.e. when $(A \iimp \langle \charx \rangle A) \land ([ \charx ] A \iimp A) \in \axs$, the structural rule $(i_{\charx}^{\empstr})$ is defined accordingly: 
\begin{center}
\AxiomC{$\rel, w R_{\charx} w,  \Gamma \sar u : A$}
\RightLabel{$(i_{\charx}^{\empstr})$}
\UnaryInfC{$\rel, \Gamma \sar u : A$}
\DisplayProof
\end{center}

 We now prove that each calculus $\calc$ is complete, that is, every $(\albet,\axs)$-valid formula is provable in $\calc$. (NB. Soundness will be shown in the next section.) To accomplish this, we first prove a sequence of height-preserving admissibility and invertibility results, which will be helpful in proving cut-elimination (see \thm~\ref{thm:cut-elim}), and ultimately, in establishing completeness. A rule is \emph{(hp-)admissible} \iffi if each premise $\lseq_{1}, \ldots, \lseq_{n}$ of the rule has a proof (of heights $h_{1}, \ldots, h_{n}$, respectively), then the conclusion has a proof (of height $h \leq \max\{h_{1}, \ldots, h_{n}\}$).\footnote{We use the prefix `hp-' to stand for `height-preserving.'} The (hp-)admissible rules we consider are shown in \fig~\ref{fig:admiss-rules}. A rule is \emph{hp-invertible} \iffi if the conclusion has a proof of height $h$, then each premise has a proof of height $h$ or less. Let us now establish a variety of proof-theoretic properties, which hold for each calculus $\calc$. 

\begin{figure}[t]\label{fig:admiss-rules}
\noindent\hrule

\begin{center}
\begin{tabular}{c c}
\AxiomC{$\rel, \Gamma \sar w : \bot$}
\RightLabel{$\botr$}
\UnaryInfC{$\rel, \Gamma \sar u : A$}
\DisplayProof

&

\AxiomC{$\rel, \Gamma \sar w : A$}
\RightLabel{$\sub$}
\UnaryInfC{$\rel(u/v), \Gamma(u/v) \sar w : A(u/v)$}
\DisplayProof
\end{tabular}
\end{center}
\begin{center}
\begin{tabular}{c c}
\AxiomC{$\rel, wR_{\charx}u, wR_{\charx}u, \Gamma \sar v : A$}
\RightLabel{$\ctrli$}
\UnaryInfC{$\rel, wR_{\charx}u, \Gamma \sar v : A$}
\DisplayProof

&

\AxiomC{$\rel, \Gamma, w : A, w : A \sar u : B$}
\RightLabel{$\ctrlii$}
\UnaryInfC{$\rel, \Gamma, w : A \sar u : B$}
\DisplayProof
\end{tabular}
\end{center}
\begin{center}
\begin{tabular}{c c}
\AxiomC{$\rel, \Gamma \sar w : A$}
\RightLabel{$\wkl$}
\UnaryInfC{$\rel', \rel, \Gamma, \Gamma' \sar w : A$}
\DisplayProof

&

\AxiomC{$\rel, \Gamma \sar w : A$}
\AxiomC{$\rel', \Gamma', w : A \sar u : B$}
\RightLabel{$\cut$}
\BinaryInfC{$\rel, \rel', \Gamma, \Gamma' \sar u : B$}
\DisplayProof
\end{tabular}
\end{center}

\hrule
\caption{Admissible rules.}
\end{figure}

\begin{lemma}\label{lem:generalized-id}
If $A \in \lang{\albet}$, then $\rel, \Gamma, w : A \sar w : A$ is provable in $\calc$.
\end{lemma}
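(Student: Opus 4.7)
The plan is to prove the claim by induction on the complexity of the formula $A$, following the standard strategy for generalized identity lemmas in sequent calculi. The base cases will be immediate from the initial rules, and each inductive case will be handled by applying the corresponding left and right introduction rules for the main connective of $A$, appealing to the inductive hypothesis on the immediate subformulae.

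First I would handle the base cases. If $A = p$ for some $p \in \prop$, then $\rel, \Gamma, w : p \sar w : p$ is an instance of $\id$. If $A = \bot$, then $\rel, \Gamma, w : \bot \sar w : A$ is an instance of $\botl$. For the inductive step, the propositional cases are straightforward: for $A = B \lor C$ apply $\disl$ upwards, giving two premises each of which is obtained from the inductive hypothesis followed by $\disr$; for $A = B \land C$ apply $\conr$, then $\conl$ on each premise, and invoke the inductive hypothesis; for $A = B \iimp C$ apply $\iimpr$ to reduce to $\rel, \Gamma, w : B \iimp C, w : B \sar w : C$, then apply $\iimpl$, yielding two premises which follow from the inductive hypothesis on $B$ and $C$ respectively.

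For the modal cases I would use the rules with the freshness side condition. If $A = \xdia B$, I apply $\xdial$ upwards, introducing a fresh label $u$ and the relational atom $wR_{\charx}u$ together with the labeled formula $u : B$, then apply $\xdiar$ to reduce to showing $\rel, wR_{\charx}u, \Gamma, u : B \sar u : B$, which is exactly an instance of the inductive hypothesis (with context augmented accordingly). The case $A = \xbox B$ is dual: apply $\xboxr$ to introduce a fresh $u$ with $wR_{\charx}u$, and then $\xboxl$ to reduce to $\rel, wR_{\charx}u, \Gamma, w : \xbox B, u : B \sar u : B$, which again follows from the inductive hypothesis.

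The only mildly delicate point is that in each inductive step, the formula obtained from the inductive hypothesis sits in a larger context than the one supplied by the principal rule application; technically this requires that $\rel$ and $\Gamma$ may be freely enlarged, i.e., the rule $\wkl$ from Figure~\ref{fig:admiss-rules}. However, since $\wkl$ is listed as admissible (and its admissibility is typically established separately and independently of this lemma, by a straightforward induction on derivation height), I would either appeal to it directly or, to avoid circularity, state the inductive hypothesis in the stronger form "for every $\rel$ and $\Gamma$, the sequent $\rel, \Gamma, w : A \sar w : A$ is derivable," so that the extra context introduced by the inductive rule application is already covered. This is the main obstacle to be aware of, but it is purely bookkeeping and does not require any new proof-theoretic machinery. \qed
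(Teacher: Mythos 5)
Your proposal is correct and follows exactly the approach the paper takes: the paper's proof is simply the one-line ``By induction on the complexity of $A$,'' and your write-up fills in precisely the intended details (initial rules for the base cases, paired left/right rules for each connective, and the fresh-label rules $\xdial$/$\xboxr$ paired with $\xdiar$/$\xboxl$ for the modalities). Your observation that the inductive hypothesis should be stated with $\rel$ and $\Gamma$ universally quantified---so that no appeal to the later-proved admissibility of $\wkl$ is needed---is the right way to handle the context bookkeeping and matches how the lemma is formulated.
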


\begin{proof} By induction on the complexity of $A$.
\end{proof}

\begin{lemma}\label{lem:botr-admiss}
The $\botr$ rule is hp-admissible in $\calc$.
\end{lemma}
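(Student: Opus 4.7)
My plan is to prove the lemma by induction on the height $h$ of the given derivation of $\rel, \Gamma \sar w : \bot$. The key observation driving the entire argument is that $\bot$ is not the principal formula of any right rule in $\calc$: there is no inference whose conclusion has $\bot$ on the right-hand side above the sequent arrow. Consequently, the last rule applied to derive $\rel, \Gamma \sar w : \bot$ must either be an initial rule or a rule in which the succedent is merely a parameter (i.e., $\botl$, $\ddr$, $\ipar$, $\convr$, $\disl$, $\conl$, $\iimpl$, $\xdial$, $\xboxl$). In every such rule, the succedent $w : \bot$ is carried unchanged from each premise to the conclusion, which makes it safe to swap it out for an arbitrary $u : A$.

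For the base case ($h = 1$), I would note that the $\id$ rule cannot produce a sequent with $\bot$ on the right (since $\id$ demands a propositional atom $p$ there), so the only applicable initial rule is $\botl$. But $\botl$ depends only on the presence of some $v : \bot \in \Gamma$ and yields a sequent for an arbitrary succedent. Hence, $\rel, \Gamma \sar u : A$ is derivable in one step by the same $\botl$ instance.

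For the inductive step, I would proceed by case analysis on the last rule. In each case, the premises have the form $\rel', \Gamma' \sar w : \bot$ (possibly with additional labels/formulae) and have strictly smaller height. Applying the induction hypothesis to each premise yields a derivation of the corresponding sequent with $u : A$ replacing $w : \bot$ on the right, of height no larger. The same rule can then be reapplied to obtain $\rel, \Gamma \sar u : A$ with the desired height bound. The only subtlety is with the rules whose side condition $\dag$ requires a fresh eigenvariable, namely $\ddr$ and $\xdial$ (note that $\xboxr$ cannot have been the last rule, since it produces $\xbox$, not $\bot$). If the new label $u$ accidentally coincides with the eigenvariable, I would invoke the hp-admissibility of label substitution (which, while only proved later, is an entirely routine property of $\calc$ and independent of $\botr$-admissibility) to rename the eigenvariable away from $u$ before reapplying the rule.

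I expect the main obstacle to be bookkeeping around the fresh-label side conditions rather than anything conceptually deep. A cleaner alternative, which I would adopt if substitution is not yet available, is to prove simultaneously by induction that the eigenvariables introduced in $\ddr$ and $\xdial$ can always be chosen outside any finite forbidden set of labels; this way one avoids the clash with $u$ at the moment the rule is reapplied, without appealing to any external admissibility result.
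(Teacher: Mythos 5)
Your proof is correct and follows essentially the same route as the paper's: induction on the height of the derivation, with the base case discharged by $\botl$ and the inductive step by permuting $\botr$ above the last rule, after observing that none of the right rules $\disr$, $\conr$, $\iimpr$, $\xdiar$, $\xboxr$ can have produced a sequent with $\bot$ in the succedent. You are in fact slightly more careful than the paper, which silently passes over the possible clash between the new succedent label $u$ and the eigenvariable of a $\ddr$ or $\xdial$ inference; your renaming (or forbidden-set) fix is the standard and correct way to handle it.
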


\begin{proof} We prove the result by induction on the height of the given proof.

\textit{Base case.} Observe that the conclusion of $\botr$ in the proof shown below left is an instance of $\botl$, and hence, we can prove the same conclusion with a single application of $\botl$ as shown below right.
\begin{center}
\begin{tabular}{c @{\hskip 1em} c @{\hskip 1em} c}
\AxiomC{}
\RightLabel{$\botl$}
\UnaryInfC{$\rel, \Gamma, w : \bot \sar w : \bot$}
\RightLabel{$\botr$}
\UnaryInfC{$\rel, \Gamma, w : \bot \sar u : A$}
\DisplayProof

&

$\leadsto$

&

\AxiomC{}
\RightLabel{$\botl$}
\UnaryInfC{$\rel, \Gamma, w : \bot \sar u : A$}
\DisplayProof
\end{tabular}
\end{center}

\textit{Inductive step.} We note that $\botr$ cannot be applied after an application of $\disr$, $\conr$, $\iimpr$, $\xdiar$, or $\xboxr$ as the labeled formula on the right of the sequent arrow cannot be of the form $w : \bot$. Hence, these cases need not be considered. For all remaining cases, the $\botr$ rule freely permutes above any other rule instance, thus establishing the inductive step. For instance, if $\iimpl$ is followed by an application of $\botr$, then $\botr$ may be permuted above the right premise of $\iimpl$ as shown below: 
\begin{flushleft}
\begin{tabular}{c @{\hskip 1em} c}
\AxiomC{$\rel, \Gamma, w :A \iimp B \sar w :A$}
\AxiomC{$\rel, \Gamma, w : B \sar u : \bot$}
\RightLabel{$\iimpl$}
\BinaryInfC{$\rel, \Gamma, w :A \iimp B \sar u : \bot$}
\RightLabel{$\botr$}
\UnaryInfC{$\rel, \Gamma, w :A \iimp B \sar v : C$}
\DisplayProof 

&

$\leadsto$
\end{tabular}
\end{flushleft}
\begin{flushright}
\AxiomC{$\rel, \Gamma, w :A \iimp B \sar w :A$}
\AxiomC{$\rel, \Gamma, w : B \sar u : \bot$}
\RightLabel{$\botr$}
\UnaryInfC{$\rel, \Gamma, w : B \sar v : C$}
\RightLabel{$\iimpl$}
\BinaryInfC{$\rel, \Gamma, w :A \iimp B \sar v : C$}
\DisplayProof 
\end{flushright}
\end{proof}

\begin{lemma}\label{lem:wk-sub-admiss}
 The $\sub$ and $\wkl$ rules are hp-admissible in $\calc$.
\end{lemma}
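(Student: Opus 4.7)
The plan is to prove both hp-admissibility results by induction on the height of a given derivation in $\calc$, establishing $\sub$ first so that it may be invoked to rename eigenvariables during the proof for $\wkl$.

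For the $\sub$ rule, I would argue by induction on the height of a proof $\prf$ of $\rel, \Gamma \sar w : A$. The base cases ($\id$ and $\botl$) are immediate: applying a substitution $(u/v)$ to an initial sequent yields another initial sequent of the same form. In the inductive step, one distinguishes whether the last rule applied in $\prf$ carries a freshness side condition. For rules without an eigenvariable (e.g., the propositional rules, $\xdiar$, $\xboxl$, $\ipar$, and $\convr$), one simply applies $\sub$ to the premise(s) by the induction hypothesis and re-applies the rule, since substitution commutes with the rule schema. For rules with an eigenvariable ($\ddr$, $\xdial$, $\xboxr$), if the eigenvariable $z$ of that last inference happens to clash with $u$ or $v$, one first invokes the induction hypothesis to rename $z$ to some globally fresh label $z'$ (distinct from every label in the conclusion as well as from $u$ and $v$); this renaming is itself an instance of $\sub$ available at the same height. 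After renaming, the desired substitution no longer interferes with the freshness condition, so applying $\sub$ to the premise via the induction hypothesis and re-applying the rule yields the desired proof of $\rel(u/v), \Gamma(u/v) \sar w : A(u/v)$, all at height bounded by that of $\prf$.

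For the $\wkl$ rule, I would again induct on the height of the derivation. The base cases are immediate, as weakening an axiom of the form $\id$ or $\botl$ yields another axiom of the same form. In the inductive step, for any rule without a freshness condition, one applies $\wkl$ to the premise(s) by the induction hypothesis and re-applies the rule to obtain the weakened conclusion. For rules with an eigenvariable $z$, one must ensure that $z$ does not occur in the new material $\rel', \Gamma'$; if it does, one invokes hp-admissibility of $\sub$ (already proved) to rename $z$ to a globally fresh label in the premise before applying the induction hypothesis and re-applying the rule.

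The main obstacle will be the bookkeeping surrounding the eigenvariable conditions of $\ddr$, $\xdial$, and $\xboxr$: without a preliminary renaming step, substituting or weakening can induce a name clash that violates the freshness condition at the rule application site, in which case simply permuting the new rule over the inference would be unsound at the derivation level. Handling this is standard but must be made explicit. Aside from this well-known subtlety, the arguments are routine structural inductions, and together they license the treatment of $\sub$ and $\wkl$ as silent admissible moves in the subsequent hp-invertibility, contraction-admissibility, and cut-elimination arguments for $\calc$.
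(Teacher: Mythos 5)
Your proposal is correct and follows essentially the same route as the paper: induction on the height of the derivation, with the freshness conditions of $\ddr$, $\xdial$, and $\xboxr$ handled by a double invocation of the induction hypothesis in the $\sub$ case (first to rename the eigenvariable, then to perform the intended substitution) and by a preliminary appeal to the already-established hp-admissibility of $\sub$ in the $\wkl$ case. Your explicit ordering of the two claims and the attention to eigenvariable clashes match the paper's treatment exactly.
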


\begin{proof} The hp-admissibility of each rule is shown by induction on the height of the given proof. The base cases are simple as applying $\sub$ or $\wkl$ to an instance of $\id$ or $\botl$ yields another instance of the rule. In the inductive step, we make a case distinction based on the last rule applied above $\sub$ or $\wkl$. With the exception of $\ddr$, $\xdial$, and $\xboxr$, all cases are handled by permuting $\sub$ or $\wkl$ above the rule. In the $\ddr$, $\xdial$, and $\xboxr$ cases, we must ensure that if $\sub$ or $\wkl$ is applied to the premise of the rule that the freshness condition still holds. This can be ensured in the $\sub$ case by invoking IH twice and in the $\wkl$ case by invoking the hp-admissibility of $\sub$ before invoking IH. 

For instance, in the proof shown below left, the fresh label $u$ is substituted for $v$ after $\ddr$ is applied, and therefore, we must invoke IH once to replace $u$ by a fresh label $z$, and then invoke IH a second time to substitute $u$ for $v$, as shown in the proof below right. Applying $\ddr$ after both substitutions gives the desired conclusion.
\begin{center}
\begin{tabular}{c @{\hskip 1em} c @{\hskip 1em} c}
\AxiomC{$\rel, wR_{\charx}u, \Gamma \sar v : A$}
\RightLabel{$\ddr$}
\UnaryInfC{$\rel, \Gamma \sar v : A$}
\RightLabel{$\sub$}
\UnaryInfC{$\rel(u/v), \Gamma(u/v) \sar u : A$}
\DisplayProof

&

$\leadsto$

&

\AxiomC{$\rel, wR_{\charx}u, \Gamma \sar v : A$}
\RightLabel{IH}
\UnaryInfC{$\rel, wR_{\charx}z, \Gamma \sar v : A$}
\RightLabel{IH}
\UnaryInfC{$\rel(u/v), wR_{\charx}z, \Gamma(u/v) \sar u : A$}
\RightLabel{$\ddr$}
\UnaryInfC{$\rel(u/v), \Gamma(u/v) \sar u : A$}
\DisplayProof
\end{tabular}
\end{center}
\end{proof}


\begin{lemma}\label{lem:invert}
 The following properties hold for $\calc$:
\begin{enumerate}

\item The $\disl$, $\conl$, $\xdial$, and $\xboxl$ rules are hp-invertible;

\item The $\iimpl$ rule is hp-invertible in the right premise;

\item All structural rules are hp-invertible.

\end{enumerate}
\end{lemma}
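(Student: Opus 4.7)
The plan is to prove each of the three parts by induction on the height $h$ of the given derivation in $\calc$, making use of the hp-admissibility of substitution and weakening established in Lemma~\ref{lem:wk-sub-admiss}.

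Part (3) together with the $\xboxl$ case of part (1) are the simplest: in each of these rules, the premise differs from the conclusion only by the addition of one or two relational atoms or a labeled formula in the antecedent. Hp-invertibility therefore follows immediately from hp-admissibility of $\wkl$: given a derivation of the conclusion of height $h$, weakening in the required extra material yields a derivation of the premise of the same height. For $\ddr$, whose premise introduces a fresh label $u$, we first use hp-admissible substitution to rename any clashing labels in the original derivation so that $u$ remains fresh after weakening.

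For the remaining cases---namely the $\disl$, $\conl$, and $\xdial$ cases of part (1) together with part (2)---the principal formula of the rule being inverted is eliminated in the premise, so pure weakening does not suffice and one argues by induction on $h$. In the base case, the conclusion is an instance of $\id$ or $\botl$; since the principal formula to be inverted is a complex formula rather than a propositional atom or $\bot$, the witness to the axiom lies in the surrounding context and the desired premise is itself still an instance of the same initial rule. In the inductive step, let $r$ be the last rule applied. If $r$ is the rule being inverted and its principal formula is the one we are eliminating, then the premises of $r$ are precisely the sequents we need, of height $h-1$. Otherwise, the principal formula of the rule being inverted persists unchanged in every premise of $r$, so we apply the induction hypothesis to each such premise to obtain derivations of height at most $h-1$, and then reapply $r$ to conclude with height at most $h$.

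The main obstacle is coordinating the freshness side conditions with the inductive step, particularly for the $\xdial$ case. When the last rule $r$ is itself one with a freshness condition---$\xdial$, $\xboxr$, or $\ddr$ applied at some other occurrence---we must ensure that the eigenvariable introduced by $r$ neither coincides with the label $u$ required to be fresh in the premise we are deriving nor obstructs the reapplication of $r$ after invoking the inductive hypothesis. This is resolved by appealing once more to hp-admissible substitution to rename eigenvariables to labels not occurring anywhere in the sequent under consideration prior to applying the induction hypothesis, which does not increase the derivation's height. Once these renamings are in place, the inductive step proceeds without further incident, completing the proof of all three parts.
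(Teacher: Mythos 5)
Your proof is correct and follows essentially the same route as the paper: claims 1 and 2 are shown by induction on the height of the derivation with the standard case analysis (principal vs.\ non-principal last rule, with hp-admissible $\sub$ handling eigenvariable clashes), and claim 3 follows directly from the hp-admissibility of $\wkl$. The only (harmless) deviation is that you also dispatch $\xboxl$ via weakening rather than by the inductive argument, which is legitimate since its principal formula is retained in the premise and only $u : A$ is added.
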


\begin{proof} The proofs of claims 1 and 2 are standard and are shown by induction on the height of the given proof. Claim 3 follows immediately from the hp-admissibility of $\wkl$. 
\end{proof}

\begin{lemma}\label{lem:ctr-admiss}
 The $\ctrli$ and $\ctrlii$ rules are hp-admissible in $\calc$.
\end{lemma}

\begin{proof} The hp-admissibility of $\ctrli$ and $\ctrlii$ is shown by induction on the height of the given proof. The base cases for both rules are trivial since applying either rule to any instance of $\id$ or $\botl$ gives another instance of the rule. The inductive step for $\ctrli$ is also trivial as the rule freely permutes above any other rule of $\calc$. 

 The inductive step for $\ctrlii$ requires slightly more work: we assume that a rule $(r)$ was applied, yielding a labeled sequent $\rel, \Gamma, w : A, w : A \sar u : B$, followed by an application of $\ctrlii$. If neither contraction formula $w : A$ is principal in $(r)$, then we may resolve the case by invoking IH, and then applying $(r)$. If, however, a contraction formula $w : A$ is principal in $(r)$, then we need to use \lem~\ref{lem:invert}. We show how to resolve the case where $(r)$ is $\xdial$ and $w : A = w : \xdia C$. The remaining cases are similar.
\begin{flushleft}
\begin{tabular}{c @{\hskip 1em} c}
\AxiomC{$\rel, wR_{\charx}u, \Gamma, w : \xdia C, u : C \sar z : B$}
\RightLabel{$\xdial$}
\UnaryInfC{$\rel, \Gamma, w : \xdia C, w : \xdia C \sar z : B$}
\RightLabel{$\ctrlii$}
\UnaryInfC{$\rel, \Gamma, w : \xdia C \sar z : B$}
\DisplayProof

&

$\leadsto$
\end{tabular}
\end{flushleft}
\begin{flushright}
\AxiomC{$\rel, wR_{\charx}u, \Gamma, w : \xdia C, u : C \sar z : B$}
\RightLabel{\lem~\ref{lem:invert}}
\UnaryInfC{$\rel, wR_{\charx}v, wR_{\charx}u, \Gamma, v : C, u : C \sar z : B$}
\RightLabel{$\sub$}
\UnaryInfC{$\rel, wR_{\charx}u, wR_{\charx}u, \Gamma, u : C, u : C \sar z : B$}
\RightLabel{$\ctrli$}
\UnaryInfC{$\rel, wR_{\charx}u, \Gamma, u : C, u : C \sar z : B$}
\RightLabel{IH}
\UnaryInfC{$\rel, wR_{\charx}u, \Gamma, u : C \sar z : B$}
\RightLabel{$\xdial$}
\UnaryInfC{$\rel, \Gamma, w : \xdia C \sar z : B$}
\DisplayProof
\end{flushright}
 In the output proof, shown above right, the $\sub$ rule substitutes the label $u$ for the fresh label $v$ and we apply the hp-admissibility of $\ctrli$. As all operations are height-preserving, we can apply IH after all such operations.
\end{proof}

\begin{theorem}[Cut-elimination]\label{thm:cut-elim}
 The $\cut$ rule is admissible in $\calc$.
\end{theorem}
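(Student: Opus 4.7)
The plan is to prove admissibility of $\cut$ by the standard Gentzen-style double induction, with the outer induction on the complexity of the cut formula $A$ and the inner induction on the cut-height, i.e., the sum of the heights of the derivations of the two premises. For each instance of $\cut$, I would distinguish: (i) whether one of the premises is an instance of an initial rule ($\id$ or $\botl$), (ii) whether the cut formula $A$ fails to be principal in at least one of the two last rule applications, or (iii) whether $A$ is principal in the last rules of both premises.

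Cases (i) and (ii) are routine. When one premise is initial, the conclusion of the cut is itself either an instance of an initial rule or follows from $\botl$ together with the hp-admissibility of $\wkl$ (\lem~\ref{lem:wk-sub-admiss}). When $A$ is non-principal in one premise, I permute the cut upward past the last rule, reducing the cut-height and invoking the inner induction hypothesis; this is uniform for structural rules such as $\ddr$, $\ipar$, and $\convr$, since they never introduce labeled formulae and hence cannot introduce $A$. The only subtlety is with rules carrying a freshness side condition, namely $\xboxr$, $\xdial$, and $\ddr$: here a naive permutation could violate the side condition because the fresh label might occur in the other premise, so I would first apply hp-admissibility of $\sub$ (\lem~\ref{lem:wk-sub-admiss}) to rename the eigenvariable to something genuinely fresh, and then permute.

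For case (iii), the cut on a compound formula is reduced to cuts on its proper subformulae, which enables an appeal to the outer induction hypothesis; residual duplications in the antecedent are absorbed using hp-admissibility of $\ctrli$ and $\ctrlii$ (\lem~\ref{lem:ctr-admiss}). The propositional cases are standard. For an $\iimp$-cut with $\iimpr$ above and $\iimpl$ below, I first cut $w : A \iimp B$ against the left premise of $\iimpl$ by the inner IH (to obtain $\sar w : A$), then cut on $w : A$ against the premise of $\iimpr$ to obtain $\sar w : B$, and finally cut on $w : B$ against the right premise of $\iimpl$; all three secondary cuts are on smaller formulae, so the outer IH applies, and $\ctrli$, $\ctrlii$ clean up duplicated contexts. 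The $\xdia$-case combines $\xdiar$ above with $\xdial$ below: since $\xdial$ introduces a fresh label $v$, I use $\sub$ on the right premise to identify $v$ with the witness $u$ of $\xdiar$, so that the two instances of $w R_{\charx} u$ and $u : A$ align, then apply the outer IH to cut on $u : A$ and contract. The $\xbox$-case is dual and likewise requires $\sub$ to align the eigenvariable of $\xboxr$ with the witness of $\xboxl$, followed by a cut on the smaller formula at the witness.

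The main obstacle I anticipate is the modal principal cases, where success depends on carefully orchestrating hp-admissibility of $\sub$, $\wkl$, and the contraction rules so that the reductions genuinely strictly decrease either the cut formula's complexity or, failing that, the cut-height. A related delicate point is that the structural rules $\ipar$ and $\convr$ add relational atoms but never consume them, so permuting cuts above these rules must preserve the relational multisets in the conclusion; this is handled by the fact that these rules do not disturb labeled formulae and by a judicious use of $\wkl$ when merging the contexts of the two premises. Once each case is discharged, the double induction closes and yields admissibility of $\cut$ throughout $\calc$, from which completeness of $\calc$ with respect to $\ent$ follows in combination with \thm~\ref{thm:sound-complete-logic}.
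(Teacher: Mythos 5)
Your proposal is correct and follows essentially the same route as the paper: a lexicographic/double induction on cut-formula complexity and cut-height, initial-rule cases discharged via hp-admissibility of $\wkl$ and $\botr$, non-principal cuts permuted upward with $\sub$ protecting freshness conditions, and principal modal cuts resolved by aligning eigenvariables via $\sub$ followed by $\ctrli$/$\ctrlii$. The only (immaterial) deviation is the order of the secondary cuts in the $\iimp$ case, and note that the first of your three cuts there is on $A \iimp B$ itself at reduced height (inner IH), not on a smaller formula as your phrasing briefly suggests.
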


\begin{proof} We prove the result by induction on the lexicographic ordering of pairs $(|A|,h_{1} + h_{2})$, where $|A|$ is the complexity of the cut formula $A$ and $h_{1} + h_{2}$ is the sum of the heights of the proofs deriving the left and right premises of $\cut$, respectively.

Let us suppose first that one of the premises of $\cut$ is an initial rule. If the left premise of $\cut$ is an instance of $\botl$, then the conclusion will be an instance of $\botl$, and thus, the conclusion is provable by $\botl$ only, without the use of $\cut$. If the left premise of $\cut$ is an instance of $\id$, then our $\cut$ is of the following shape:
\begin{center}
\AxiomC{}
\RightLabel{$\id$}
\UnaryInfC{$\rel, \Gamma, w : p \sar w : p$}
\AxiomC{$\rel', \Gamma', w : p \sar u : A$}
\RightLabel{$\cut$}
\BinaryInfC{$\rel, \rel', \Gamma, w : p, \Gamma' \sar u : A$}
\DisplayProof
\end{center}
 Observe that the conclusion may be proven without $\cut$ by applying the hp-admissibility of $\wkl$ as shown below:
\begin{center}
\AxiomC{$\rel', \Gamma', w : p \sar u : A$}
\RightLabel{$\wkl$}
\UnaryInfC{$\rel, \rel', \Gamma, w : p, \Gamma' \sar u : A$}
\DisplayProof
\end{center}
 If the right premise of $\cut$ is an instance of $\botl$ with $w : \bot$ principal, then there are two cases to consider. Either, $w : \bot$ is not the cut formula, in which case, the conclusion is an instance of $\botl$, or $w : \bot$ is the cut formula, in which case our $\cut$ is of the following shape:
\begin{center}
\AxiomC{$\rel, \Gamma \sar w : \bot$}
\AxiomC{}
\RightLabel{$\botl$}
\UnaryInfC{$\rel', \Gamma', w : \bot \sar u : A$}
\RightLabel{$\cut$}
\BinaryInfC{$\rel, \rel', \Gamma, \Gamma' \sar u : A$}
\DisplayProof
\end{center}
 Then, we may prove the conclusion without the use of $\cut$ by applying the hp-admissibility of $\botr$ and $\wkl$.
\begin{center}
\AxiomC{$\rel, \Gamma \sar w : \bot$}
\RightLabel{$\botr$}
\UnaryInfC{$\rel, \Gamma \sar u : A$}
\RightLabel{$\wkl$}
\UnaryInfC{$\rel, \rel', \Gamma, \Gamma' \sar u : A$}
\DisplayProof
\end{center}
 If the right premise of $\cut$ is an instance of $\id$ with $w : p$ principal, then we have two cases to consider. Either, $w : p$ is not the cut formula, in which case, the conclusion is an instance of $\id$, or $w : p$ is the cut formula, meaning that our $\cut$ is of the following shape:
\begin{center}
\AxiomC{$\rel, \Gamma \sar w : p$}
\AxiomC{}
\RightLabel{$\id$}
\UnaryInfC{$\rel', \Gamma', w : p \sar w : p$}
\RightLabel{$\cut$}
\BinaryInfC{$\rel, \rel', \Gamma, \Gamma' \sar w : p$}
\DisplayProof
\end{center}
 Observe that the conclusion can be proven from the left premise of $\cut$ by applying the hp-admissibility of $\wkl$. Hence, $\cut$ is eliminable whenever one premise of $\cut$ is an initial rule.
 
 Let us now suppose that no premise of $\cut$ is an initial rule. We have two cases to consider: either, the cut formula is not principal in at least one premise of $\cut$ or the cut formula is principal in both premises. We consider each case in turn.
 
 We assume that neither cut formula is principal and consider the case when the left premise of $\cut$ is proven via a unary rule and the right premise of $\cut$ is proven via a binary rule. All remaining cases are shown similarly, and thus, we omit them.

\begin{center}
\AxiomC{$\rel_{1}, \Gamma_{1} \sar w : A$}
\RightLabel{$(r_{1})$}
\UnaryInfC{$\rel_{1}', \Gamma_{1}' \sar w : A$}
\AxiomC{$\rel_{2}, \Gamma_{2}, w : A  \sar \Delta_{2}$}
\AxiomC{$\rel_{2}', \Gamma_{2}', w : A  \sar \Delta_{2}'$}
\RightLabel{$(r_{2})$}
\BinaryInfC{$\rel_{2}'', \Gamma_{2}'', w : A \sar \Delta_{2}''$}
\RightLabel{$\cut$}
\BinaryInfC{$\rel_{1}', \rel_{2}'', \Gamma_{1}', \Gamma_{2}'' \sar \Delta_{2}''$}
\DisplayProof
\end{center}
 The case is resolved as shown below by lifting the $\cut$ to apply to the premise of $(r_{1})$. We note that if $(r_{1})$ must satisfy a freshness condition (e.g. as with the $\xdial$ rule), then it may be necessary to apply the hp-admissibility of $\sub$ to the premise of $(r_{1})$ prior to lifting the $\cut$ upward to ensure the condition will be met after applying $\cut$. However, this causes no issues as $h_{1} + h_{2}$ will still decrease.
\begin{center}
\AxiomC{$\rel_{1}, \Gamma_{1} \sar w : A$}
\AxiomC{$\rel_{2}, \Gamma_{2}, w : A  \sar \Delta_{2}$}
\AxiomC{$\rel_{2}', \Gamma_{2}', w : A  \sar \Delta_{2}'$}
\RightLabel{$(r_{2})$}
\BinaryInfC{$\rel_{2}'', \Gamma_{2}'', w : A \sar \Delta_{2}''$}
\RightLabel{$\cut$}
\BinaryInfC{$\rel_{1}, \rel_{2}'', \Gamma_{1}, \Gamma_{2}'' \sar \Delta_{2}''$}
\RightLabel{$(r_{1})$}
\UnaryInfC{$\rel_{1}', \rel_{2}'', \Gamma_{1}', \Gamma_{2}'' \sar \Delta_{2}''$}
\DisplayProof
\end{center}

 The case where the cut formula is principal in one premise of $\cut$ but not the other is routine, 
 so we focus on the interesting case where the cut formula is principal in both premises of $\cut$. We consider the cases where the cut formula is of the form $A \iimp B$ or $\xdia A$ and omit the remaining cases as they are similar. In the first case, our $\cut$ is of the following shape: 
\begin{center}
\begin{tabular}{c @{\hskip 1em} c @{\hskip 1em} c}
$\prf$

&

$=$

&

\AxiomC{$\rel', \Gamma', w :A \iimp B \sar w :A$}
\AxiomC{$\rel', \Gamma', w : B \sar u : C$}
\RightLabel{$\iimpl$}
\BinaryInfC{$\rel', \Gamma', w : A \iimp B \sar u : C$}
\DisplayProof
\end{tabular}
\end{center}
\begin{center}
\AxiomC{$\rel, \Gamma, w : A \sar w : B$}
\RightLabel{$\iimpr$}
\UnaryInfC{$\rel, \Gamma \sar w : A \iimp B$}
\AxiomC{$\prf$}
\RightLabel{$\cut$}
\BinaryInfC{$\rel, \rel', \Gamma, \Gamma', \sar u : C$}
\DisplayProof 
\end{center}
 We first apply $\cut$ to the conclusion of $\iimpr$ and to the left premise of $\iimpl$, thus reducing the height of $\cut$, and then apply $\cut$ two additional times to the formulae $A$ and $B$ of less complexity.
\begin{center}
\begin{tabular}{c @{\hskip 1em} c @{\hskip 1em} c}
$\prf'$

&

$=$

&

\AxiomC{$\rel, \Gamma, w : A \sar w : B$}
\RightLabel{$\iimpr$}
\UnaryInfC{$\rel, \Gamma \sar w : A \iimp B$}
\AxiomC{$\rel', \Gamma', w :A \iimp B \sar w :A$}
\RightLabel{$\cut$}
\BinaryInfC{$\rel, \rel', \Gamma, \Gamma' \sar w :A$}
\DisplayProof
\end{tabular}
\end{center}
\begin{center}
\AxiomC{$\prf'$}

\AxiomC{$\rel, \Gamma, w : A \sar w : B$}
\AxiomC{$\rel', \Gamma', w : B \sar u : C$}
\RightLabel{$\cut$}
\BinaryInfC{$\rel, \rel', \Gamma, \Gamma', w : A \sar u : C$}

\RightLabel{$\cut$}

\BinaryInfC{$\rel, \rel, \rel', \rel', \Gamma, \Gamma, \Gamma', \Gamma' \sar u : C$}
\RightLabel{$\ctrl$}
\UnaryInfC{$\rel, \rel', \Gamma, \Gamma', \sar u : C$}
\DisplayProof 
\end{center}
 
 For the second case, when our cut formula is of the form $\xdia A$, our $\cut$ is of the following shape:
\begin{center}
\AxiomC{$\rel, w R u, \Gamma \sar u : A$}
\RightLabel{$\xdiar$}
\UnaryInfC{$\rel, w R u, \Gamma \sar w : \xdia A$}
\AxiomC{$\rel', w R v, \Gamma', v : A \sar z : B$}
\RightLabel{$\xdial$}
\UnaryInfC{$\rel', \Gamma', w : \xdia A \sar z : B$}
\RightLabel{$\cut$}
\BinaryInfC{$\rel, w R u, \rel', \Gamma, \Gamma' \sar z : B$}
\DisplayProof
\end{center}
 The case is resolved as shown below, where the hp-admissibility of $\sub$ is applied to replace the fresh label $v$ by $u$.
\begin{center}
\AxiomC{$\rel, w R u, \Gamma \sar u : A$}

\AxiomC{$\rel', w R v, \Gamma', v : A \sar z : B$}
\RightLabel{$\sub$}
\UnaryInfC{$\rel', w R u, \Gamma', u : A \sar z : B$}
\RightLabel{$\cut$}

\BinaryInfC{$\rel, \rel', w R u, w R u, \Gamma, \Gamma' \sar z : B$}
\RightLabel{$\ctrli$}
\UnaryInfC{$\rel, \rel', w R u, \Gamma, \Gamma' \sar z : B$}
\DisplayProof
\end{center}
\end{proof}

Finally, we are in a position to confirm the completeness of each calculus $\calc$. Completeness relies in part on \thm~\ref{thm:sound-complete-logic}, that is, we assume that a formula $A$ is $(\albet,\axs)$-valid, implying that $A$ is provable in $\h\ikma$. We then establish completeness by showing that all axioms of $\h\ikma$ are provable in $\calc$ and that all inference rules are admissible, which implies that $A$ is provable in $\calc$. It is a straightforward exercise to show that all axioms of $\h\ikma$ are provable in $\calc$. The admissibility of $\nec$ is shown by invoking the hp-admissibility of $\wkl$ and $\sub$, and 
 we use $\cut$ (which is admissible by \thm~\ref{thm:cut-elim}) to simulate $(mp)$.

\begin{theorem}[Completeness]\label{thm:calc-complete}
If $\Vdash^{\albet}_{\axs} A$, then $\vdash w : A$ is provable in $\calc$.
\end{theorem}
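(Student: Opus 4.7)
The plan is to leverage the soundness and completeness of the Hilbert axiomatization $\h\ikma$ (\thm~\ref{thm:sound-complete-logic}) and reduce the problem to simulating $\h\ikma$ inside $\calc$. Concretely, assume $\ent A$. Then by \thm~\ref{thm:sound-complete-logic}, $\der A$, so there is a Hilbert derivation of $A$ in $\h\ikma$. I will argue by induction on the length of this derivation that $\sar w : A$ is derivable in $\calc$ (for any label $w$), by showing two things: (i) every axiom of $\h\ikma$ admits a $\calc$-proof of $\sar w : A$, and (ii) the inference rule $\nec$ is admissible in $\calc$, and \emph{modus ponens} is admissible in $\calc$.

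For (i), I would derive each axiom schema of \dfn~\ref{def:axiomatization} by a direct bottom-up proof in $\calc$. The standard intuitionistic axioms (A0) are treated by any standard derivation scheme for intuitionistic propositional logic in a G3-style calculus, using $\id$, the left/right conjunction, disjunction, and implication rules, together with \lem~\ref{lem:generalized-id} to close leaves on arbitrary formulae. The modal axioms A1--A5, A8, A9 are derived by introducing a fresh relational atom $w R_\charx u$ via an application of $\xboxr$ or $\xdiar$ at the appropriate point and then decomposing the nested modalities with $\xdial$, $\xboxl$, and $\iimpr$; A6 uses $\xdial$ together with $\botl$; and A7 (the conversion axiom $(A \iimp \xbox \xdiac A) \land (\xdia \xboxc A \iimp A)$) is where the structural rule $\convr$ plays its role, letting us flip $wR_{\charx} u$ into $u R_{\conv{\charx}} w$ so that the matching $\xdiar$/$\xboxl$ steps apply. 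For any $\ipa \in \axs$, the axiom $(\diap{\charx_1}\cdots\diap{\charx_n} A \iimp \diap{\charx} A) \land ([\charx] A \iimp [\charx_1]\cdots[\charx_n] A)$ is derived using the $\ipar$ rule, which introduces the relational atom $wR_{\charx} u$ given the path $wR_\stra u$, matching the $\xdial$/$\xdiar$ (resp. $\xboxl$/$\xboxr$) applications. The seriality axiom $\D$ is derived by applying $\ddr$ to generate a fresh $\charx$-successor, followed by $\xboxl$ and $\xdiar$.

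For (ii), the necessitation rule $\nec$ is handled as follows: given a $\calc$-proof of $\sar w : A$, apply hp-admissibility of $\sub$ (\lem~\ref{lem:wk-sub-admiss}) to rename $w$ to a fresh label $u$, then hp-admissibility of $\wkl$ to add the relational atom $w R_\charx u$, and finally apply $\xboxr$ with $u$ fresh to conclude $\sar w : \xbox A$. For \emph{modus ponens}, which is built into the axiomatization (A0) of propositional intuitionistic logic, given proofs of $\sar w : A$ and $\sar w : A \iimp B$, apply \lem~\ref{lem:invert}(2) (hp-invertibility of $\iimpl$ on the right premise) and \lem~\ref{lem:generalized-id} to obtain a proof of $w : A \iimp B \sar w : B$, and then cut on $A \iimp B$ using admissibility of $\cut$ (\thm~\ref{thm:cut-elim}); a simpler route is to cut directly on $A$ between $\sar w : A$ and a suitable derivation of $w : A \sar w : B$ extracted from the proof of $\sar w : A \iimp B$ by invertibility. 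Combining (i) and (ii), the induction on the Hilbert derivation goes through and yields $\sar w : A$ in $\calc$.

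The main obstacle is not a single hard step but rather the bookkeeping of axiom derivations, particularly for A7, A8, A9 and the $\ipa$ axioms, where the interaction between the freshness side conditions of $\xboxr$, $\xdial$, $\ddr$ and the structural rules $\convr$, $\ipar$ must be choreographed carefully so that each leaf is closed by $\id$ via \lem~\ref{lem:generalized-id}. The use of cut-elimination to simulate \emph{modus ponens} is what makes the proof structure clean: were cut not admissible, one would need a detour through a cut-full system, so \thm~\ref{thm:cut-elim} is doing real work here even though the statement of completeness does not mention cut.
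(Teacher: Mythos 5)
Your proposal follows essentially the same route as the paper: invoke \thm~\ref{thm:sound-complete-logic} to obtain a Hilbert derivation in $\h\ikma$, show every axiom is derivable in $\calc$, establish admissibility of $\nec$ via the hp-admissibility of $\sub$ and $\wkl$, and simulate \emph{modus ponens} via cut-elimination (\thm~\ref{thm:cut-elim}). The only quibble is that deriving $w : A \iimp B \sar w : B$ is done by applying $\iimpl$ together with weakening and \lem~\ref{lem:generalized-id} rather than by invertibility, but this does not affect the correctness of the overall argument.
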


\section{Structural Refinement}\label{sec:refinement}

We now show how to \emph{structurally refine} the labeled systems introduced in the previous section, that is, we implement a methodology introduced and applied in~\cite{CiaLyoRamTiu21,Lyo21,Lyo21thesis,LyoBer19,Lyo21a} (referred to as \emph{structural refinement}, or \emph{refinement} more simply) for simplifying labeled systems and/or permitting the extraction of nested systems. The methodology consists of eliminating structural rules (viz. the $\ipar$ and $\convr$ rules in our setting) through the addition of \emph{propagation rules} (cf.~\cite{CasCerGasHer97,Fit72,Sim94}) or \emph{reachability rules}~\cite{Lyo21thesis,Lyo22} to labeled calculi. This begets systems (which we refer to as \emph{refined labeled calculi}) that are translatable into nested systems as only labeled sequents of a treelike shape are required in proofs.

It is interesting to point out that structural refinement takes on a more complex form in the current setting than in previous approaches. Typically, refined labeled calculi inherit both completeness and soundness from their `parent' labeled systems; however, in the current setting only the completeness of refined labeled calculi is obtained through this method. Soundness must be shown directly within each refined labeled calculus as the proof relies on labeled sequents having a treelike shape.\footnote{This is reminiscent of what happens in the setting of intuitionistic modal logics; see Simpson~\cite[Section 8.1.3]{Sim94}.} We then leverage the soundness of our refined labeled calculi to prove a soundness result for their parent labeled calculi.

\subsection{Completeness}

In the first part of this section, we show how to extract refined labeled calculi from the systems introduced in the previous section, which doubles as a completeness result for the extracted systems. As mentioned above, we demonstrate that structural rules can be eliminated in favor of propagation rules. The propagation rules we introduce are motivated by those of~\cite{Lyo21thesis,Lyo21a,TiuIanGor12}, and operate by viewing a labeled sequent as an automaton, allowing for the propagation of a formula (when applied bottom-up) from a label $w$ to a label $u$ given that a certain path of relational atoms exists between $w$ and $u$ (corresponding to a string generated by an $\axs$-grammar).

The definition of our propagation rules is built atop the notions introduced in the following two definitions: 


\begin{definition}[Propagation Graph]\label{def:propagation-graph} The \emph{propagation graph} $\prgr{\rel} = (V,E)$ of a multiset of relational atoms $\rel$ is defined recursively on the structure of $\rel$:
\begin{itemize}

\item $\prgr{\empseq} := (\emptyset, \emptyset)$;

\item $\prgr{wR_{\charx}u} := (\{w,u\}, \{(w,\charx,u),(u,\conv{\charx},w)\})$;

\item $\prgr{\rel_{1},\rel_{2}} := (V_{1}\cup V_{2}, E_{1} \cup E_{2}) \text{ where } PG(\rel_{i}) = (V_{i},E_{i})$.

\end{itemize}
We will often write $w \in \prgr{\rel}$ to mean $w \in \prgrdom$, and $(w,\charx,u) \in \prgr{\rel}$ to mean $(w,\charx,u) \in \prgredges$.
\end{definition}

\begin{definition}[Propagation Path]\label{def:propagation-path} 
 We define a \emph{propagation path from $w_{1}$ to $w_{n}$ in $\prgr{\rel} := (V,E)$} to be a sequence of the following form:
$$
\ppath(w_{1},w_{n}) := w_{1}, \charx_{1}, w_{2}, \charx_{2}, \ldots , \charx_{n-1}, w_{n}
$$
such that $(w_{1}, \charx_{1}, w_{2}) , (w_{2}, \charx_{2}, w_{3}), \ldots, (w_{n-1}, \charx_{n-1}, w_{n}) \in \prgredges$. Given a propagation path of the above form, we define its \emph{converse} as shown below top and its \emph{string} as shown below bottom:
$$
\conv{\ppath}(w_{n},w_{1}) := w_{n}, \conv{\charx}_{n-1}, w_{n-1}, \conv{\charx}_{n-2}, \ldots, \conv{\charx}_{1}, w_{1}
$$
$$
\stra_{\ppath}(w_{1},w_{n}) := \charx_{1} \cate \charx_{2} \cate \cdots \cate \charx_{n-1}
$$
Last, we let $\emppath(w,w) := w$ represent an \emph{empty path} with the string of the empty path defined as $\stra_{\emppath}(w,w) := \empstr$.
\end{definition}

We are now in a position to define the operation of our propagation rules $\prdia$ and $\prbox$, which are displayed in \fig~\ref{fig:propagation-rules}. Each propagation rule $\prdia$ and $\prbox$ is applicable only if \emph{there exists a propagation path $\ppath(w,u)$ from $w$ to $u$ in the propagation graph $\prgr{\rel}$ such that the string $\stra_{\ppath}(w,u)$ is in the language $\glang(\charx)$.} We express this statement compactly by making use of its equivalent first-order representation:
$$
\exists \ppath(w,u) \in \prgr{\rel} (\stra_{\ppath}(w,u) \in \glang(\charx))
$$
 We consider the relational atoms that give rise to a propagation path used in an application of $\prdia$ or $\prbox$ to be \emph{active}. We provide further intuition 
 regarding such rules by means of an example:

\begin{example}\label{ex:propagation-graph-path} Let $\rel := vR_{\charx}u, uR_{\conv{\chary}}w$. We give a graphical depiction of $\prgr{\rel}$:
\begin{center}
\begin{minipage}[t]{.5\textwidth}
\xymatrix{
  v \ar@/^1pc/@{.>}[rr]|-{\charx} & &  u\ar@/^1pc/@{.>}[rr]|-{\conv{\chary}} \ar@/^1pc/@{.>}[ll]|-{\conv{\charx}} & &  w \ar@/^1pc/@{.>}[ll]|-{\chary}
}
\end{minipage}
\begin{minipage}[t]{.5\textwidth}
\begin{tabular}{c}
\AxiomC{ }
\noLine
\UnaryInfC{$\Lambda := vR_{\charx}u, uR_{\conv{\chary}}w, w : [z] p, v : p \sar v : q$}
\DisplayProof
\end{tabular}
\end{minipage}
\end{center}

Let $\axs := \{(\langle \chary \rangle \langle \conv{\charx} \rangle A \iimp \langle z \rangle A) \land ([z] A \iimp [\chary] [\conv{\charx}] A)\}$, so that the corresponding $\axs$-grammar is $\g{\axs} = \{z \pto \chary \conv{\charx}, \conv{z} \pto \conv{\chary} \charx \}$. Then, the path $\ppath(w,v) := w, \chary, u, \conv{\charx}, v$ exists between $w$ and $v$. The first production rule of $\g{\axs}$ implies that $\stra_{\ppath}(w,v) = \chary \conv{\charx} \in \glang(z)$. Therefore, we are permitted to (top-down) apply the propagation rule $(p_{[z]})$ to $\Lambda$ to delete the labeled formula $v : p$, letting us derive $vR_{\charx}u, uR_{\conv{\chary}}w, w : [z] p \sar v : q$. The relational atoms $vR_{\charx}u, uR_{\conv{\chary}}w$ are active in the inference.
\end{example}

\begin{figure}[t]
\noindent\hrule

\begin{center}
\begin{tabular}{c c}
\AxiomC{$\rel, \Gamma \sar u : A$}
\RightLabel{$\prdia^{\dag}$} 
\UnaryInfC{$\rel, \Gamma \sar w : \xdia A$}
\DisplayProof

&

\AxiomC{$\rel, \Gamma, w : \xbox A, u : A \sar v : B$}
\RightLabel{$\prbox^{\dag}$} 
\UnaryInfC{$\rel, \Gamma, w : \xbox A \sar v : B$}
\DisplayProof
\end{tabular}
\end{center}
$$
\dag:= \exists \ppath(w,u) \in \prgr{\rel} (\stra_{\ppath}(w,u) \in \glang(\charx))
$$

\hrule
\caption{Propagation rules. Each rule is applicable only if the side condition $\dag$ holds.}
\label{fig:propagation-rules}
\end{figure}

\begin{remark}\label{rem:diar-boxl-propagation-rule-instances}
The $\xdiar$ and $\xboxl$ rules are instances of $\prdia$ and $\prbox$, respectively.
\end{remark}

\begin{definition}[Refined Labeled Calculus] We define the \emph{refined labeled calculus} $\rcalc := (\calc \setminus \mathbf{R}) \cup \{\prdia,\prbox\}$, where
$$
\mathbf{R} := \{\ipar \ | \ (\langle \stra \rangle A \iimp \langle \charx \rangle A) \land ([ \charx ] A \iimp [\stra] A) \in \axs\} \cup \{\xdiar, \xboxl, \convr \ | \ \charx \in \albet\}.
$$
\end{definition}

 We show that each calculus $\rcalc$ is complete by means of a proof transformation procedure. That is, we show that through the elimination of structural rules we can transform a proof in $\calc$ into a proof in $\rcalc$. We first prove three crucial lemmata, and then argue the elimination result.

\begin{lemma}\label{lem:permutation-1} Let $\rel_{1} := \rel, w R_{\stra} u, w R_{\chary} u$ and $\rel_{2} := \rel, w R_{\stra} u$. Suppose we are given a proof in $\calc \cup \rcalc$ 
 ending with:
\begin{center}
\begin{tabular}{c}
\AxiomC{$\rel, w R_{\stra} u, w R_{\chary} u, \Gamma \sar v : A$}
\RightLabel{$\prdia$}
\UnaryInfC{$\rel, w R_{\stra} u, w R_{\chary} u,  \Gamma \sar z : \xdia A$}
\RightLabel{$(i^{\stra}_{\chary})$}
\UnaryInfC{$\rel, w R_{\stra} u, \Gamma \sar z : \xdia A$}
\DisplayProof
\end{tabular}
\end{center}
where the side condition $\exists \ppath(z,v) \in \prgr{\rel_{1}} (\stra_{\ppath}(z,v) \in \glang(\charx))$ holds due to $\prdia$. Then, $\exists \ppath'(z,v) \in \prgr{\rel_{2}} (\stra_{\ppath'}(z,v) \in \glang(\charx))$, that is to say, the $(i^{\stra}_{\chary})$ rule is permutable with the $\prdia$ rule. 
\end{lemma}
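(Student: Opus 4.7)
The plan is to construct the required propagation path $\ppath'(z,v)$ in $\prgr{\rel_{2}}$ by surgically replacing, inside $\ppath(z,v)$, every traversal of the extra edge contributed by $wR_{\chary}u$ with a detour through the sequence of relational atoms encoded by $wR_{\stra}u$ already present in $\rel_{2}$, and then pushing the corresponding change of strings through the grammar $\g{\axs}$.

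First I would write $\stra = \charx_{1} \cate \cdots \cate \charx_{n}$ so that the shorthand $wR_{\stra}u$ unfolds to $wR_{\charx_{1}}w_{1}, \ldots, w_{n-1}R_{\charx_{n}}u$. By \dfn~\ref{def:propagation-graph}, the graph $\prgr{\rel_{2}}$ then already contains the forward path $w, \charx_{1}, w_{1}, \charx_{2}, \ldots, \charx_{n}, u$ with string $\stra$, and its converse from $u$ to $w$ with string $\conv{\stra}$. I would then build $\ppath'(z,v)$ from $\ppath(z,v)$ by scanning through the latter and substituting each traversal of the edge $(w,\chary,u)$ by this forward path, and each traversal of $(u,\conv{\chary},w)$ by its converse. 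Every other step of $\ppath(z,v)$ uses a relational atom already present in $\rel_{2}$, so the resulting $\ppath'(z,v)$ is a valid propagation path in $\prgr{\rel_{2}}$ whose string $\stra_{\ppath'}(z,v)$ is obtained from $\stra_{\ppath}(z,v)$ by replacing each occurrence of $\chary$ with $\stra$ and each occurrence of $\conv{\chary}$ with $\conv{\stra}$.

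To close the argument I would verify that $\stra_{\ppath'}(z,v) \in \glang(\charx)$. Since the rule $(i^{\stra}_{\chary})$ is present in $\calc$, the axiom $(\langle \stra \rangle A \iimp \langle \chary \rangle A) \land ([\chary] A \iimp [\stra] A)$ belongs to $\axs$, and by \dfn~\ref{def:grammar} the production rules $\chary \pto \stra$ and $\conv{\chary} \pto \conv{\stra}$ both lie in $\g{\axs}$. Unpacking $\stra_{\ppath}(z,v) \in \glang(\charx)$ as $\charx \dr \stra_{\ppath}(z,v)$, I would extend this derivation by one application of $\chary \pto \stra$ or $\conv{\chary} \pto \conv{\stra}$ for each edge replaced in the first step, yielding $\charx \dr \stra_{\ppath'}(z,v)$ and hence $\stra_{\ppath'}(z,v) \in \glang(\charx)$. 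The main obstacle is purely bookkeeping: one must carry out the substitution uniformly when $\ppath(z,v)$ traverses $(w,\chary,u)$ (or its converse) multiple times, possibly interleaved with other edges, and one should separately record the degenerate case $\stra = \empstr$, in which $wR_{\stra}u$ forces $w=u$ and each traversal of $(w,\chary,w)$ is deleted, with the matching production rule being $\chary \pto \empstr$. Once the substitution is laid out carefully, the grammar step is immediate from \dfn~\ref{def:grammar}.
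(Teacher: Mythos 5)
Your proposal is correct and follows essentially the same route as the paper: unfold $wR_{\stra}u$ into its constituent relational atoms, splice the corresponding path (or its converse) into $\ppath(z,v)$ in place of each traversal of $(w,\chary,u)$ or $(u,\conv{\chary},w)$, and then mirror this on strings by applying the productions $\chary \pto \stra$ and $\conv{\chary} \pto \conv{\stra}$, which lie in $\g{\axs}$ by \dfn~\ref{def:grammar}. Your explicit treatment of the degenerate case $\stra = \empstr$ is a small addition the paper's proof leaves implicit, but otherwise the arguments coincide.
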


\begin{proof} Let $w R_{\stra} u = w R_{\charx_{1}} u_{1}, \ldots, u_{n} R_{\charx_{n}} u$. We have two cases: either (i) the relational atom $w R_{\chary} u$ is not active in the $\prdia$ inference, or (ii) it is. As case (i) is easily resolved, we show (ii).

 We suppose that the relational atom $w R_{\chary} u$ is active in $\prdia$, i.e. $w R_{\chary} u$ occurs along the propagation path $\ppath(z,v)$. To prove the claim, we need to show that $\exists \ppath'(z,v) \in \prgr{\rel_{2}} (\stra_{\ppath'}(z,v) \in \glang(\charx))$. We construct such a propagation path by performing the following operation on $\ppath(z,v)$: replace each occurrence of $w, \chary, u$ in $\prgr{\rel_{1}}$ with the propagation path $\ppath''(w,u)$ shown below left and replace each occurrence of $u, \conv{\chary}, w$ in $\prgr{\rel_{1}}$ with the propagation path $\conv{\ppath}''(u,w)$ shown below right:
$$
\ppath''(w,u) = w, \charx_{1}, u_{1}, \ldots, u_{n}, \charx_{n}, u
\qquad
\conv{\ppath}''(u,w) = u, \conv{\charx}_{n}, u_{n}, \ldots, u_{1}, \conv{\charx}_{1}, w
$$
 We let $\ppath'(z,v)$ denote the path obtained by performing the above operations on $\ppath(z,v)$. $\ppath''(w,u)$ corresponds to the edges $(w,\charx_{1},u_{1}), \ldots, (u_{n},\charx_{n},u) \in \prgr{\rel_{1}}$ and $\conv{\ppath}''(u,w)$ corresponds to the edges $(u,\conv{\charx}_{n},u_{n}), \ldots, (u_{1},\conv{\charx}_{1},w) \in \prgr{\rel_{1}}$, obtained from the relational atoms $w R_{\stra} u \in \rel_{1}$ (by \dfn~\ref{def:propagation-graph}). Since the sole difference between $\prgr{\rel_{1}}$ and $\prgr{\rel_{2}}$ is that the former is guaranteed to contain the edges $(w,\chary,u)$ and $(u,\conv{\chary},w)$ obtained from $w R_{\chary} u$, while the latter is not, and since $\ppath'(z,v)$ omits the use of such edges (i.e. $w, \chary, u$ and $u,\conv{\chary}, w$ do not occur in $\ppath'(z,v)$), we have that $\ppath'(z,v)$ is a propagation path in $\prgr{\rel_{2}}$.

 To complete the proof, we need to additionally show that $\stra_{\ppath'}(z,v) \in \glang(\charx)$. By assumption, $\stra_{\ppath}(z,v) \in \glang(\charx)$, which implies that $\charx \dr \stra_{\ppath}(z,v)$ by \dfn~\ref{def:derivation-language}. Since $(i^{\stra}_{\chary})$ is a rule in $\calc$, it follows that $(\chary \pto \stra), (\conv{\chary} \pto \conv{\stra}) \in \g{\axs}$ by \dfn~\ref{def:grammar}. If we apply $\chary \pto \stra$ to each occurrence of $\chary$ in $\stra_{\ppath}(z,v)$ corresponding to the edge $(w,\chary,u)$ (and relational atom $w R_{\chary} u$), and apply $\conv{\chary} \pto \conv{\stra}$ to each occurrence of $\conv{\chary}$ in $\stra_{\ppath}(z,v)$ corresponding to the edge $(u,\conv{\chary},w)$ (and relational atom $w R_{\chary} u$), we obtain the string $\stra_{\ppath'}(z,v)$ and show that $\charx \dr \stra_{\ppath'}(z,v)$, i.e. $\stra_{\ppath'}(z,v) \in \glang(\charx)$.
\end{proof}

\begin{lemma}\label{lem:permutation-2} Let $\rel_{1} := \rel, w R_{\stra} u, w R_{\chary} u$ and $\rel_{2} := \rel, w R_{\stra} u$. Suppose we are given a proof in $\calc \cup \rcalc$ 
 ending with:
\begin{center}
\begin{tabular}{c}
\AxiomC{$\rel, w R_{\stra} u, w R_{\chary} u, \Gamma, z : \xbox A, v : A \sar t : B$}
\RightLabel{$\prbox$}
\UnaryInfC{$\rel, w R_{\stra} u, w R_{\chary} u,  \Gamma, z : \xbox A, v : A \sar t : B$}
\RightLabel{$(i^{\stra}_{\chary})$}
\UnaryInfC{$\rel, w R_{\stra} u, \Gamma, z : \xbox A \sar t : B$}
\DisplayProof
\end{tabular}
\end{center}
where the side condition $\exists \ppath(z,v) \in \prgr{\rel_{1}} (\stra_{\ppath}(z,v) \in \glang(\charx))$ holds due to $\prdia$. Then, $\exists \ppath'(z,v) \in \prgr{\rel_{2}} (\stra_{\ppath'}(z,v) \in \glang(\charx))$, that is to say, the $(i^{\stra}_{\chary})$ rule is permutable with the $\prbox$ rule.
\end{lemma}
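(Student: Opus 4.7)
The plan is to mirror the proof strategy of Lemma \ref{lem:permutation-1}, adapted to the $\prbox$ rule. The underlying reason this should work is that $\prdia$ and $\prbox$ carry the identical side condition on propagation paths, so the combinatorial core of the argument is unchanged; only the logical rule being permuted differs, and neither rule touches $\rel$ in any way beyond reading off a propagation path from it.

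First I would split into two cases based on whether the relational atom $w R_{\chary} u$ occurs along the propagation path $\ppath(z,v)$ that witnesses the side condition of $\prbox$. If $w R_{\chary} u$ is inactive, then every edge of $\ppath(z,v)$ already lies in $\prgr{\rel_{2}}$ and we may simply set $\ppath'(z,v) := \ppath(z,v)$, so that $\stra_{\ppath'}(z,v) = \stra_{\ppath}(z,v) \in \glang(\charx)$ trivially.

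The main case is when $w R_{\chary} u$ is active. Writing $w R_{\stra} u = w R_{\charx_{1}} u_{1}, \ldots, u_{n} R_{\charx_{n}} u$ with $\stra = \charx_{1} \cdots \charx_{n}$, I would construct $\ppath'(z,v)$ by systematically replacing each occurrence of the segment $w, \chary, u$ in $\ppath(z,v)$ by the longer path $\ppath''(w,u) = w, \charx_{1}, u_{1}, \ldots, u_{n}, \charx_{n}, u$, and replacing each occurrence of $u, \conv{\chary}, w$ by its converse $\conv{\ppath}''(u,w)$. By Definition \ref{def:propagation-graph}, all edges used in these replacements arise from the relational atoms $w R_{\stra} u \in \rel_{2}$, so $\ppath'(z,v)$ is a bona fide propagation path in $\prgr{\rel_{2}}$.

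Finally, I would verify $\stra_{\ppath'}(z,v) \in \glang(\charx)$. Since the structural rule $(i^{\stra}_{\chary})$ is present in $\calc$, Definition \ref{def:grammar} guarantees that $(\chary \pto \stra), (\conv{\chary} \pto \conv{\stra}) \in \g{\axs}$. Applying these productions to each occurrence of $\chary$ and $\conv{\chary}$ in $\stra_{\ppath}(z,v)$ corresponding to the edges obtained from $w R_{\chary} u$ transforms $\stra_{\ppath}(z,v)$ into exactly $\stra_{\ppath'}(z,v)$. Since $\charx \dr \stra_{\ppath}(z,v)$ by hypothesis, composing the derivation with the additional production steps gives $\charx \dr \stra_{\ppath'}(z,v)$, i.e. $\stra_{\ppath'}(z,v) \in \glang(\charx)$. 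I do not expect any substantive obstacle: the only delicate bookkeeping is keeping track of the converse edges so that uses of $\chary$ in its backward orientation are rewritten using $\conv{\chary} \pto \conv{\stra}$ rather than $\chary \pto \stra$, preserving path validity.
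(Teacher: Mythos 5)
Your proposal is correct and matches the paper's intent exactly: the paper's own proof of this lemma is simply ``Similar to the proof of \lem~\ref{lem:permutation-1} above,'' and you have carried out precisely that adaptation, reusing the case split on whether $w R_{\chary} u$ is active, the path-replacement construction via $\ppath''(w,u)$ and $\conv{\ppath}''(u,w)$, and the grammar argument using $(\chary \pto \stra), (\conv{\chary} \pto \conv{\stra}) \in \g{\axs}$. No gaps.
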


\begin{proof}
Similar to the proof of \lem~\ref{lem:permutation-1} above. 
\end{proof}

\begin{lemma}\label{lem:permutation-3} For each $\charx \in \albet$, the $\convr$ rule is hp-admissible in $\rcalc$.
\end{lemma}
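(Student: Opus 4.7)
The plan is to argue by straightforward induction on the height of the derivation of $\rel, wR_{\charx}u, uR_{\conv{\charx}}w, \Gamma \sar v : A$ in $\rcalc$, showing that the extra relational atom $uR_{\conv{\charx}}w$ can be dropped without increasing the height. The crucial observation that drives the whole argument is that, by \dfn~\ref{def:propagation-graph}, the clause for $\prgr{wR_{\charx}u}$ already contributes \emph{both} edges $(w,\charx,u)$ and $(u,\conv{\charx},w)$ to the propagation graph. Consequently, $\prgr{\rel, wR_{\charx}u, uR_{\conv{\charx}}w}$ and $\prgr{\rel, wR_{\charx}u}$ coincide as labeled directed graphs: the edges supplied by $uR_{\conv{\charx}}w$ are already present from $wR_{\charx}u$.

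The base cases ($\id$ and $\botl$) are immediate, since the relational atom $uR_{\conv{\charx}}w$ plays no role in their applicability. For the inductive step, I would make a case distinction on the last rule $(r)$ applied in the derivation. For every purely logical rule, as well as for the remaining structural rule $\ddr$, the atom $uR_{\conv{\charx}}w$ is merely passive context: one invokes the induction hypothesis on each premise to delete it, and reapplies $(r)$ to obtain the conclusion at the same height. Any freshness condition (on $\ddr$ or $\xboxr$) is preserved under deletion of a relational atom, so no further bookkeeping is needed.

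The only cases that require real justification are the two propagation rules $\prdia$ and $\prbox$. Here the rule fires only provided that some propagation path $\ppath(w',u')$ exists in $\prgr{\rel, wR_{\charx}u, uR_{\conv{\charx}}w}$ whose string lies in the appropriate language $\glang(\charz)$. By the graph-equality observation above, any such path is simultaneously a propagation path in $\prgr{\rel, wR_{\charx}u}$ with the same string; so after invoking the induction hypothesis on the premise, the propagation rule can be reapplied with its side condition still satisfied.

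I do not anticipate a genuine obstacle: the lemma is essentially a syntactic consequence of how the propagation graph is defined to automatically include both directions of each $R_{\charx}$-edge. The only point requiring care is verifying, in the $\prdia$ and $\prbox$ cases, that the two propagation graphs really agree, which follows directly by unfolding \dfn~\ref{def:propagation-graph}. \qed
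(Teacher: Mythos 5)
Your proposal is correct and follows essentially the same route as the paper's proof: induction on derivation height, with the crux being that $\prgr{\rel, wR_{\charx}u, uR_{\conv{\charx}}w}$ and $\prgr{\rel, wR_{\charx}u}$ coincide because the propagation graph of $wR_{\charx}u$ already contributes both the edge $(w,\charx,u)$ and its converse $(u,\conv{\charx},w)$, so the side conditions of $\prdia$ and $\prbox$ survive the deletion. This is exactly the observation the paper uses in its $\prdia$ case.
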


\begin{proof} Let $\charx \in \albet$. We prove the result by induction on the height of the given proof. The base cases are trivial as any application of $\convr$ to $\id$ or $\botl$ yields another instance of the rule. Furthermore, the inductive step follows from the fact that $\convr$ permutes above every rule of $\rcalc$. 
 We show the $\prdia$ case as the remaining cases are simple or similar. 

 Let us suppose that our proof ends with an application of $\prdia$ followed by an application of $(c_{\chary})$ as shown below left. We let $\rel_{1} = \rel, w R_{\chary} u, u R_{\conv{\chary}} w$ and $\rel_{2} = \rel, w R_{\chary} u$. We know that $\exists \ppath(z,v) \in \prgr{\rel_{1}} (\stra_{\ppath}(z,v) \in \glang(\charx))$ by the side condition on $\prdia$. Moreover, we may assume that $u R_{\conv{\chary}} w$ is active in the application of $\prdia$ because the two rules freely permute in the alternative case. If we apply $(c_{\chary})$ first, as shown in the proof below right, observe that since $w R_{\chary} u$ still occurs in the conclusion, $\prgr{\rel_{1}} = \prgr{\rel_{2}}$. Hence, the side condition of $\prdia$ still holds, showing that the rule may be applied after $(c_{\chary})$.
\begin{center}
\begin{tabular}{c @{\hskip 1em} c @{\hskip 1em} c}
\AxiomC{$\rel, w R_{\chary} u, u R_{\conv{\chary}} w, \Gamma \sar v : A$}
\RightLabel{$\prdia$}
\UnaryInfC{$\rel, w R_{\chary} u, u R_{\conv{\chary}} w, \Gamma \sar z : \xdia A$}
\RightLabel{$(c_{\chary})$}
\UnaryInfC{$\rel, w R_{\chary} u, \Gamma \sar z : \xdia A$}
\DisplayProof

&

$\leadsto$

&

\AxiomC{$\rel, w R_{\chary} u, u R_{\conv{\chary}} w, \Gamma \sar v : A$}
\RightLabel{$(c_{\chary})$}
\UnaryInfC{$\rel, w R_{\chary} u, \Gamma \sar v :A$}
\RightLabel{$\prdia$}
\UnaryInfC{$\rel, w R_{\chary} u, \Gamma \sar z : \xdia A$}
\DisplayProof
\end{tabular}
\end{center}
\end{proof}

 To improve the comprehensibility of the above lemmata, we provide an example of permuting instances of structural rules above an instance of a propagation rule.

\begin{example}\label{ex:permutation} Let $\axs := \{(\langle \conv{\chary} \rangle \langle \charz \rangle A \iimp \langle \charx \rangle A) \land ([\charx] A \iimp [\conv{\chary}] [\charz] A)\}$ so that the $\axs$-grammar $\g{\axs} =  \{\charx \pto \conv{\chary} \charz, \conv{\charx} \pto \conv{\charz} \chary\}$. In the top proof below, we let $\rel_{1} = wR_{\charz}v, wR_{\chary}u, uR_{\conv{\chary}}w, uR_{\charx}v$ and assume that $\prdia$ is applied due to the existence of the propagation path $\ppath(u,v) = u, \charx, v$ in $\prgr{\rel_{1}}$, where $\stra_{\ppath}(u,v)  = \charx \in \glang(\charx)$ by \dfn~\ref{def:derivation-language}. The propagation graph $\prgr{\rel_{1}}$ corresponding to the top sequent of the proof shown below left is shown below right:

\begin{center}
\begin{minipage}{.33\textwidth}
\begin{center}
\begin{tabular}{c}
\AxiomC{}
\RightLabel{$\id$}
\UnaryInfC{$wR_{\charz}v, wR_{\chary}u, uR_{\conv{\chary}}w, uR_{\charx}v, v : p \sar v : p$}
\RightLabel{$\prdia$}
\UnaryInfC{$wR_{\charz}v, wR_{\chary}u, uR_{\conv{\chary}}w, uR_{\charx}v, v : p \sar u : \xdia p$}
\RightLabel{$(i^{\conv{\chary}\charz}_{\charx})$}
\UnaryInfC{$wR_{\charz}v, wR_{\chary}u, uR_{\conv{\chary}}w, v : p \sar u : \xdia p$}
\RightLabel{$(c_{\chary})$}
\UnaryInfC{$wR_{\charz}v, wR_{\chary}u, v : p \sar u : \xdia p$}
\DisplayProof
\end{tabular}
\end{center}
\end{minipage}
\begin{minipage}{.3\textwidth}
\ \\
\end{minipage}
\begin{minipage}{.33\textwidth}
\begin{center}
\begin{tabular}{c}
\xymatrix{
w \ar@/^-1pc/@{.>}[dd]|-{\chary} \ar@/^1pc/@{.>}[drr]|-{\charz} &	& \\
 & &  v \ar@/^1pc/@{.>}[ull]|-{\conv{\charz}}\ar@/^1pc/@{.>}[dll]|-{\conv{\charx}} \\
u \ar@/^-1pc/@{.>}[uu]|-{\conv{\chary}}\ar@/^1pc/@{.>}[urr]|-{\charx} & &  
}
\end{tabular}
\end{center}
\end{minipage}
\end{center}

Let $\rel_{2} = wR_{\charz}v, wR_{\chary}u, uR_{\conv{\chary}}w$. If we apply $\charx \pto \conv{\chary} \charz \in \g{\axs}$ to $\stra_{\ppath}(u,v) = \charx$, then we obtain the string $\conv{\chary} \charz$. Hence, $\charx \dr \conv{\chary} \charz$, i.e. $\conv{\chary} \charz \in \glang(\charx)$, meaning that a propagation path $\ppath'(u,v)$ ($= u, \conv{\chary}, w, \charz, v$) exists in $\prgr{\rel_{2}}$ such that $\stra_{\ppath'}(u,v) = \conv{\chary} \charz \in \glang(\charx)$. We may therefore apply $(i^{\conv{\chary}\charz}_{\charx})$ and then $\prdia$ as shown below left; the propagation graph $\prgr{\rel_{2}}$ is shown below right:
\begin{center}
\begin{minipage}{.33\textwidth}
\begin{center}
\begin{tabular}{c} 
\AxiomC{}
\RightLabel{$\id$}
\UnaryInfC{$wR_{\charz}v, wR_{\chary}u, uR_{\conv{\chary}}w, uR_{\charx}v, v : p \sar v : p$}
\RightLabel{$(i^{\conv{\chary}\charz}_{\charx})$}
\UnaryInfC{$wR_{\charz}v, wR_{\chary}u, uR_{\conv{\chary}}w, v : p \sar v : p$}
\RightLabel{$\prdia$}
\UnaryInfC{$wR_{\charz}v, wR_{\chary}u, uR_{\conv{\chary}}w, v : p \sar u : \xdia p$}
\RightLabel{$(c_{\chary})$}
\UnaryInfC{$wR_{\charz}v, wR_{\chary}u, v : p \sar u : \xdia p$}
\DisplayProof
\end{tabular}
\end{center}
\end{minipage}
\begin{minipage}{.30\textwidth}
\ \\
\end{minipage}
\begin{minipage}{.33\textwidth}
\begin{center}
\begin{tabular}{c}
\xymatrix{
w \ar@/^-1pc/@{.>}[dd]|-{\chary} \ar@/^1pc/@{.>}[drr]|-{\charz} &	& \\
 & &  v \ar@/^1pc/@{.>}[ull]|-{\conv{\charz}} \\
u \ar@/^-1pc/@{.>}[uu]|-{\conv{\chary}} & &  
}
\end{tabular}
\end{center}
\end{minipage}
\end{center}

 Furthermore, if we let $\rel_{3} = wR_{\charz}v, wR_{\chary}u$, then observe that $\prgr{\rel_{2}} = \prgr{\rel_{3}}$, and thus, the propagation path $\ppath'(u,v)$ exists in $\prgr{\rel_{3}}$ as well, showing that $(c_{\chary})$ can be permuted above $\prdia$, as shown below:
\begin{center}
\AxiomC{}
\RightLabel{$\id$}
\UnaryInfC{$wR_{\charz}u, wR_{\chary}v, vR_{\conv{\chary}}w, uR_{\charx}v, v : p \sar v : p$}
\RightLabel{$(i^{\conv{\chary}\charz}_{\charx})$}
\UnaryInfC{$wR_{\charz}v, wR_{\chary}u, uR_{\conv{\chary}}w, v : p \sar v : p$}
\RightLabel{$(c_{\chary})$}
\UnaryInfC{$wR_{\charz}v, wR_{\chary}u, v : p \sar v : p$}
\RightLabel{$\prdia$}
\UnaryInfC{$wR_{\charz}v, wR_{\chary}u, u : p \sar u : \xdia p$}
\DisplayProof
\end{center}
 Since the conclusion of $(c_{\chary})$ is an instance of $\id$, we can delete the applications of $(i^{\conv{\chary}\charz}_{\charx})$ and $(c_{\chary})$, thus giving the proof shown below, which is free of structural rules and exists in $\rcalc$. Thus, we have shown how a proof in $\calc$ can be transformed into a proof in $\rcalc$.
\begin{center}
\AxiomC{}
\RightLabel{$\id$}
\UnaryInfC{$wR_{\charz}v, wR_{\chary}u, v : p \sar v : p$}
\RightLabel{$\prdia$}
\UnaryInfC{$wR_{\charz}v, wR_{\chary}u, u : p \sar u : \xdia p$}
\DisplayProof
\end{center}
\end{example}

\begin{theorem}\label{thm:calc-to-rcalc}
Every proof in $\calc$ can be transformed into a proof in $\rcalc$. 
\end{theorem}

\begin{proof} We consider a proof in $\calc$, which is a proof in $\calc \cup \rcalc$ (meaning \lem~\ref{lem:permutation-1} and~\ref{lem:permutation-2} are applicable). By Remark~\ref{rem:diar-boxl-propagation-rule-instances}, each instance of $\xdiar$ and $\xboxl$ can be replaced by a $\prdia$ and $\prbox$ instance, respectively, meaning we may assume our proof is free of $\xdiar$ and $\xboxl$ instances. We show that the proof can be transformed into a proof in $\rcalc$ by induction on its height, that is, we consider a topmost occurrence of a structural rule $\ipar$ or $\convr$ and show that it can be eliminated. Through successively eliminating topmost instances of structural rules, we obtain a proof in $\rcalc$.

\textit{Base case.} The base case is trivial as any application of $\ipar$ or $\convr$ to $\id$ or $\botl$ yields another instance of the rule.

\textit{Inductive step.} It is straightforward to verify that $\ipar$ freely permutes above all rules in $\calc \cup \rcalc$ with the exception of $\ipar$, $\convr$, $\prdia$, and $\prbox$ (this follows from the fact that all other rules do not have active relational atoms in their conclusion). Since we are considering a topmost application of $\ipar$, we need not consider the permutation of $\ipar$ above $\ipar$ or $\convr$. The last two cases of permuting $\ipar$ above $\prdia$ and $\prbox$ follow from \lem~\ref{lem:permutation-1} and~\ref{lem:permutation-2}, respectively. Last, by \lem~\ref{lem:permutation-3}, we know that $\convr$ can be eliminated as we are considering a topmost occurrence of a structural rule, and thus, the proof above $\convr$ is a proof in $\rcalc$. 
\end{proof}

The following is a consequence of the above theorem and Theorem~\ref{thm:calc-complete}.

\begin{theorem}[$\rcalc$ Completeness]\label{thm:ikal-complete} If $\Vdash_{\axs}^{\albet} A$, then $\vdash w : A$ is provable in $\rcalc$.
\end{theorem}

\subsection{Soundness}

We now establish a soundness result for each refined labeled calculus $\rcalc$, which relies on a semantics for our labeled sequents, given below. We then leverage this soundness result, along with the proof transformation detailed in \thm~\ref{thm:calc-to-rcalc}, to transfer the soundness of the refined labeled calculi to their parent labeled calculi.

\begin{definition}[Sequent Semantics]\label{def:sequent-semantics-kms} Let $M := (W, \leq, \{R_{\charx} \ | \ \charx \in \albet \}, V)$ be a bi-relational $(\albet,\axs)$-model with $I : \ \lab \mapsto W$ an \emph{interpretation function} mapping labels to worlds.  We define the \emph{satisfaction} of relational atoms and labeled formulae as follows:
\begin{itemize}

\item $M, I \models_{\axs}^{\albet} \rel$ \ifandonlyif for all $wR_{\charx}u \in \rel$, $I(w) R_{\charx} I(u)$;

\item $M, I \models_{\axs}^{\albet} \Gamma$ \ifandonlyif for all $w : A \in \Gamma$, $M, I(w) \Vdash A$.

\end{itemize}

A labeled sequent $\Lambda := \rel, \Gamma \sar w : B$ is \emph{satisfied} in $M$ with $I$, written $M,I \models_{\axs}^{\albet} \Lambda$, \iffi if $M, I \models_{\axs}^{\albet} \rel$ and $M, I \models_{\axs}^{\albet} \Gamma$, then $M, I \models_{\axs}^{\albet} w : B$. A labeled sequent $\Lambda$ is \emph{falsified} in $M$ with $I$ \iffi $M, I \not\models_{\axs}^{\albet} \Lambda$, that is, $\Lambda$ is not satisfied by $M$ with $I$.

Last, a labeled sequent $\Lambda$ is \emph{$(\albet,\axs)$-valid}, written $\models_{\axs}^{\albet} \Lambda$, \iffi it is satisfiable in every bi-relational $(\albet,\axs)$-model $M$ with every interpretation function $I$. We say that a labeled sequent $\Lambda$ is \emph{$(\albet,\axs)$-invalid} \iffi $\not\models_{\axs}^{\albet} \Lambda$, i.e. $\Lambda$ is not $(\albet,\axs)$-valid.
\end{definition}

Before moving on to prove our soundness result, we discuss the structure of our proof as it differs from usual soundness proofs for labeled calculi. The proof proceeds in two stages: first, we show that every proof in $\rcalc$ of a \emph{labeled tree sequent} (cf.~\cite{GorRam12,Lyo21}), which is a labeled sequent that has the structure of a tree, contains only labeled tree sequents, i.e. the proof is a \emph{labeled tree proof} (see \dfn~\ref{def:lab-tree-seq} below). Second, we show that the conclusion of any labeled tree proof in $\rcalc$ is $(\albet,\axs)$-valid. As with intuitionistic mono-modal logics, the restriction to labeled tree proofs in the soundness proof is required due to the $\iimpr$ rule; cf. Simpson~\cite[Section 8.1.3]{Sim94}. An application of $\iimpr$ semantically corresponds to a `jump' along the intuitionistic relation $\leq$ in a bi-relational $(\albet,\axs)$-model (see \thm~\ref{thm:rcalc-sound} below). When both the premise $\rel, \Gamma, w : A \sar w : B$ and conclusion $\rel, \Gamma \sar w : A \iimp B$ of $\iimpr$ are labeled tree sequents, then it is possible to show via the (F1) and (F2) conditions that the satisfaction of $\rel$ can be `lifted' upward along the intuitionistic relation, which suffices for the soundness proof to go through.


\begin{definition}[Labeled Tree Sequent/Proof]\label{def:lab-tree-seq}
We define a \emph{labeled tree sequent} to be a labeled sequent $\Lambda := \rel, \Gamma \sar \Delta$ such that $\rel$ forms a tree and all labels in $\Gamma, \Delta$ occur in $\rel$ (unless $\rel$ is empty, in which case every labeled formula in $\Gamma, \Delta$ must share the same label). 

 We define a \emph{labeled tree proof} to be a proof containing only labeled tree sequents. We say that a labeled tree proof has the \emph{fixed root property} \iffi every labeled sequent in the proof has the same root.
\end{definition}

\begin{lemma}\label{lem:labeled-tree-derivations}
Every proof in $\rcalc$ of a labeled tree sequent is a labeled tree proof with the fixed root property.
\end{lemma}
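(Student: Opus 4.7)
The plan is to proceed by induction on the height of the given derivation in $\rcalc$, showing that at every step the sequent encountered is a labeled tree sequent and that its root coincides with the root of the conclusion. For the base case, any derivation of height $0$ consists of a single application of $\id$ or $\botl$; since this sole sequent is the labeled tree sequent we started with, the derivation is vacuously a labeled tree proof with the fixed root property.

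For the inductive step, I would perform a case distinction on the last rule $(r)$ applied, assuming the conclusion $\lseq$ is a labeled tree sequent with root $r$, and showing that each premise is a labeled tree sequent with the same root $r$. The cases split into two groups. First, all propositional rules ($\disl$, $\disr$, $\conl$, $\conr$, $\iimpl$, $\iimpr$) and both propagation rules ($\prdia$, $\prbox$) leave the multiset $\rel$ of relational atoms unchanged, so the tree structure and its root are preserved; one only has to check that every label appearing as a labeled formula in a premise already occurs in $\rel$ (or is the shared label, if $\rel = \empseq$). For the propagation rules, this is exactly what the side condition $\exists \ppath(w,u) \in \prgr{\rel}(\stra_{\ppath}(w,u) \in \glang(\charx))$ guarantees, since it forces both endpoints $w$ and $u$ to be nodes of $\prgr{\rel}$. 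For the propositional rules this is immediate from inspection of the rules, since the labels occurring in the premises are drawn from those occurring in the conclusion.

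The second group consists of the rules $\xdial$, $\xboxr$, and (if present) $\ddr$, each of which introduces a fresh label $u$ via a new relational atom $wR_{\charx}u$. Here $w$ is the label of the principal formula in the conclusion (for $\xdial$ and $\xboxr$) or an existing label in $\lseq$ (for $\ddr$), and so it already occurs in $\rel$ (or is the unique shared label if $\rel = \empseq$). Adding a fresh leaf $u$ attached at $w$ to a tree rooted at $r$ yields a tree still rooted at $r$; when $\rel = \empseq$ in the conclusion, the shared label $w$ becomes the root of the one-edge tree in the premise, which matches the root of the conclusion under the convention that the shared label is taken as the root. All labels of labeled formulae in the premise still appear in the extended $\rel$, so the premise is a labeled tree sequent with root $r$. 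The inductive hypothesis then applies to the subproof above the premise, completing the step.

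The main obstacle I anticipate is purely bookkeeping: carefully handling the edge case $\rel = \empseq$ in the conclusion to ensure that the convention for the root of the resulting single-edge tree agrees with the shared label of the conclusion, and making sure, for each rule that manipulates labeled formulae on the right of the sequent arrow (in particular $\iimpl$, $\prdia$), that the new consequent label in the premise is genuinely present in $\rel$ of the conclusion. Beyond this case analysis, no further machinery is needed.
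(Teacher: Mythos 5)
Your proposal is correct and follows essentially the same route as the paper: a bottom-up/inductive case analysis showing that every rule of $\rcalc$ either leaves $\rel$ unchanged or attaches a fresh leaf via a new relational atom, so the labeled-tree-sequent property and the root are preserved throughout. You are in fact slightly more careful than the paper's sketch, which groups $\ddr$ with the relational-atom-preserving rules and does not separately discuss the $\rel = \empseq$ edge case, but the substance of the argument is identical.
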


\begin{proof} Let us consider a proof in $\rcalc$ of a labeled tree sequent in a bottom-up fashion. With the exception of $\xdial$ and $\xboxr$, every rule of $\rcalc$ bottom-up preserves the relational atoms present in the conclusion of the rule (only changing the formulae associated with labels), showing that if the conclusion is a labeled tree sequent, then the premises will be labeled tree sequents. In the case of $\xdial$ and $\xboxr$, a new relational atom is added to the premise, thus creating a new branching edge due to the freshness condition of the rule, but demonstrating that the premise is a labeled tree sequent nonetheless. Hence, the property of being a labeled tree sequent will be inherited by all labeled sequents occurring in the proof, and since no rule of $\rcalc$ changes the `root' of the conclusion as only forward edges are bottom-up added to labeled tree sequents, the proof will have the fixed-root property.
\end{proof}

\begin{theorem}[$\rcalc$ Soundness]\label{thm:rcalc-sound}
Let $\rel, \Gamma \vdash w : A$ be a labeled tree sequent. If $\rel, \Gamma \vdash w : A$ is provable in $\rcalc$, then $\models^{\albet}_{\axs} \rel, \Gamma \vdash w : A$.
\end{theorem}

\begin{proof} We prove the result by induction on the height of the given proof, which is a labeled tree proof by \lem~\ref{lem:labeled-tree-derivations}. As the base cases are trivial, we omit them, and with the exception of $\iimpr$, all cases of the inductive step are straightforward and are similar to those given for classical grammar logics~\cite[\thm~5]{Lyo21thesis}. Hence, we prove the $\iimpr$ case of the inductive step.

 We argue by contraposition, that is, we show that if the conclusion is $(\albet,\axs)$-invalid, then the premise is $(\albet,\axs)$-invalid. Let $M$ be a $(\albet,\axs)$-model and $I$ an interpretation function such that $M, I \not\models_{\axs}^{\albet} \rel, \Gamma \sar w : A \iimp B$. Then, $M, I \not\models_{\axs}^{\albet} w : A \iimp B$, which implies that there exists a world $u \in W$ of $M$ such that $I(w) \leq u$, $M, u \Vdash_{\axs}^{\albet} A$, and $M, u \not\Vdash_{\axs}^{\albet} B$. We now define a new interpretation $I'$ such that $M, I' \not\models_{\axs}^{\albet} \rel, \Gamma, w : A \sar w : B$. 

First, we set $I'(w) = u$. Second, we let $wR_{\charx_{1}}v_{1}, \ldots, wR_{\charx_{n}}v_{n}$ be all relational atoms in $\rel$ of the form $wR_{\charx}v$ (containing all successors of $w$), and $zR_{\chary}w$ be the relational atom in $\rel$ with $z$ the predecessor of $w$ (if $w$ is the root of $\rel$, then $w$ does not have any predecessors, so skip this step; recall that $\rel$ forms a tree). 
 Since $I(w) R_{\charx_{i}} I(v_{i})$ and $I(z) R_{\chary} I(w)$ hold in $M$ and $I(w) \leq u$, we know there exists a $v_{i}'$ and $z'$ such that $I(v_{i}) \leq v_{i}'$ and $uR_{\charx_{i}}v_{i}'$ by condition (F2), and $I(z) \leq z'$ and $z'R_{\chary}u$ by condition (F1) (see \dfn~\ref{def:bi-relational-model} for these conditions). Thus, if we set $I'(v_{i}) = v_{i}'$ and $I'(z) = z'$, then it follows that $I'(w) R_{\charx_{i}} I'(v_{i})$ and $I'(z') R_{\chary} I'(w)$. Moreover, since $M, I \models_{\axs}^{\albet} v_{i} : \Gamma$ and $M, I \models_{\axs}^{\albet} z : \Gamma$, it follows that $M, I' \models_{\axs}^{\albet} v_{i} : \Gamma$ and $M, I' \models_{\axs}^{\albet} z : \Gamma$ by \lem~\ref{lem:persistence} as $I(v_{i}) \leq I'(v_{i})$ and $I(z) \leq I'(z)$. We 
 successively repeat this process 
 until all descendants and ancestors are of $w$ in $\rel$ are defined for $I'$. 

One can confirm that $M, I' \models_{\axs}^{\albet} \rel, \Gamma$, and since $I'(w) = u$, we also know that $M, I \models_{\axs}^{\albet} w : A$, but $M, I \not\models_{\axs}^{\albet} w : B$. Hence, the premise of $\iimpr$ is $(\albet,\axs)$-invalid.
\end{proof}

\begin{theorem}[$\calc$ Soundness]\label{thm:calc-sound}
Let $\rel, \Gamma \vdash w : A$ be a labeled tree sequent. If $\rel, \Gamma \vdash w : A$ is provable in $\calc$, then $\models^{\albet}_{\axs} \rel, \Gamma \vdash w : A$.
\end{theorem}

\begin{proof} Follows from Theorems~\ref{thm:calc-to-rcalc} and~\ref{thm:rcalc-sound}.
\end{proof}

\subsection{Additional Properties}

In the last part of this section, we show how to transforms proofs from $\rcalc$ into proofs in $\calc$. This backward transformation lets us transfer the admissibility and invertibility properties of the latter calculi to the former calculi as witnessed in \cor~\ref{cor:properties-rcalc}.

\begin{lemma}\label{lem:deleting-rel-atoms}
If $\rel, uR_{\charx}v, \Gamma \vdash w : A$ is provable in $\calc$ and $\exists \ppath(u,v) \in \prgr{\rel} (\stra_{\ppath}(u,v) \in \glang(\charx))$, then $\rel, \Gamma \vdash w : A$ is provable in $\calc$.
\end{lemma}

\begin{proof} By our assumption that $\exists \ppath(u,v) \in \prgr{\rel} (\stra_{\ppath}(u,v) \in \glang(\charx))$, we know that $\charx \dr \stra_{\ppath}(u,v)$. We show the result by induction on the length of the \cfg-derivation $\charx \dr \stra_{\ppath}(u,v)$.\footnote{The length of a \cfg-derivation in an $\axs$-grammar is defined in \dfn~\ref{def:derivation-language}.}

\textit{Base case.} If the \cfg-derivation $\charx \dr \stra_{\ppath}(u,v)$ has length zero, then it follows that $\charx \osdr \charx$, implying that $\rel, uR_{\charx}v, uR_{\charx}v, \Gamma \vdash w : A$ is provable in $\calc$. A single application of the hp-admissible $\ctrli$ rule (see \lem~\ref{lem:ctr-admiss}) gives the desired result. If the \cfg-derivation $\charx \dr \stra_{\ppath}(u,v)$ has a length of one, then $\charx \pto \stra_{\ppath}(u,v)$ is a production rule in $\g{\axs}$. Hence, $\rel, uR_{\charx}v, \Gamma \vdash w : A$ is of the form $\rel',uR_{\stra'}v, uR_{\charx}v, \Gamma \vdash w : A$, where the relational atoms $uR_{\stra'}v$ give rise to the propagation path $\ppath(u,v)$. We note that $\rel',uR_{\stra}v, uR_{\charx}v, \Gamma \vdash w : A$ can be derived from the above labeled sequent by applying the hp-admissible $\wkl$ rule (see \lem~\ref{lem:wk-sub-admiss}) and the $\convr$ rules a sufficient number of times, from which $\rel',uR_{\stra}v \Gamma \vdash w : A$ is provable by a single application of $\ipar$. Again, by applying $\wkl$ and the $\convr$ rules a sufficient number of times, we derive $\rel, \Gamma \sar w : A$.

\textit{Inductive step.} Suppose the \cfg-derivation $\charx \dr \stra_{\ppath}(u,v)$ has a length of $n+1$. This implies the existence of a string $\strb$ such that $\charx \dr \strb$ and $\strb \osdr \stra_{\ppath}(u,v)$. The former is a \cfg-derivation with length $n$ and the latter is a one-step \cfg-derivation. Therefore, there exists a production rule $\chary \pto \strc \in \g{\axs}$ such that $\strb = \strb_{0}\chary\strb_{1}$ and $\stra_{\ppath}(u,v) = \strb_{1}\strc\strb_{2}$, which implies the existence of a propagation path $\ppath_{\strb_{1}}(u,z), \ppath_{\strc}(z,z'), \ppath_{\strb_{1}}(z',v)$ in $\rel$. By applying the hp-admissible $\wkl$ rule to the provable labeled sequent $\rel, uR_{\charx}v, \Gamma \vdash w : A$, we obtain a proof of $\rel, zR_{\chary}z', uR_{\charx}v, \Gamma \vdash w : A$. Since the \cfg-derivation $\charx \dr t = \strb_{0}\chary\strb_{1}$ has a length $n$ and the propagation path $\ppath_{\strb_{1}}(u,z), \ppath_{\chary}(z,z'), \ppath_{\strb_{1}}(z',v)$ occurs in $\rel,zR_{\chary}z'$, we may invoke the inductive hypothesis to derive $\rel, zR_{\chary}z', \Gamma \vdash w : A$. Last, by an argument similar to the base case, we may derive $\rel, \Gamma \sar w : A$ since $\ppath_{\strc}(z,z')$ occurs in $\rel$.
\end{proof}

\begin{theorem}\label{thm:rcalc-to-calc}
Every proof in $\rcalc$ can be transformed into a proof in $\calc$.
\end{theorem}

\begin{proof} The theorem is shown by induction on the height of the given proof. We show the non-trivial $\prdia$ case of the inductive step as the remaining cases are simple or similar.
\begin{center}
\begin{tabular}{c @{\hskip 1em} c @{\hskip 1em} c}
\AxiomC{$\rel, \Gamma \sar u : A$}
\RightLabel{$\prdia$}
\UnaryInfC{$\rel, \Gamma \sar w : \xdia A$}
\DisplayProof

&

$\leadsto$

&

\AxiomC{$\rel, \Gamma \sar u : A$}
\RightLabel{$\wkl$}
\UnaryInfC{$\rel, wR_{\charx}u, \Gamma \sar u : A$}
\RightLabel{$\xdiar$}
\UnaryInfC{$\rel, wR_{\charx}u, \Gamma \sar w : \xdia A$}
\RightLabel{\lem~\ref{lem:deleting-rel-atoms}}
\UnaryInfC{$\rel, \Gamma \sar w : \xdia A$}
\DisplayProof
\end{tabular}
\end{center}
 By the side condition on $\prdia$, we know that \lem~\ref{lem:deleting-rel-atoms} is applicable in the proof shown above right.
\end{proof}

\begin{corollary}\label{cor:properties-rcalc} The following properties hold for $\rcalc$:
\begin{enumerate}

\item The $\botr$, $\sub$, $\wkl$, $\ctrli$, $\ctrlii$, and $\cut$ rules are admissible;

\item The $\disl$, $\conl$, $\xdial$, and $\prbox$ rules are invertible;

\item The $\iimpl$ rule is invertible in the right premise.

\end{enumerate}
\end{corollary}

\begin{proof} Follows from Theorems~\ref{thm:calc-to-rcalc} and~\ref{thm:rcalc-to-calc}. For the admissible rules, suppose the premise(s) is (are) provable in $\rcalc$. Then, by \thm~\ref{thm:rcalc-to-calc}, these labeled sequents are provable in $\calc$. Since all such rules are admissible in $\calc$, the rule may be applied and the desired conclusion is provable in $\rcalc$ by \thm~\ref{thm:calc-to-rcalc}. The invertibility results are argued in a similar fashion.
\end{proof}

\section{Nested Sequent Systems}\label{sec:nested-calculi}


 In our setting, nested sequents are taken to be trees of multisets of formulae containing a unique formula that occupies a special status. We utilize a syntax which incorporates components from the nested sequents given for classical grammar logics~\cite{TiuIanGor12} and intuitionistic modal logics~\cite{Str13}. Following~\cite{Str13}, we mark a special, unique formula with a white circle $\outp$ indicating that the formula is of \emph{output polarity}, and mark the other formulae with a black circle $\inp$ indicating that the formulae are of \emph{input polarity}. A nested sequent $\ns$ is defined via the following BNF grammars:
$$
\ns ::= \Delta, \Pi \qquad \Delta ::= A_{1}^{\inp}, \ldots, A_{n}^{\inp}, (\charx_{1})\bl \Delta_{1} \br, \ldots, (\charx_{k})\bl \Delta_{k} \br \qquad \Pi ::= A^{\outp} \ | \ (\charx)\bl \ns \br
$$
 where $\charx_{1}, \ldots, \charx_{k}, \charx \in \albet$ and $A_{1}, \ldots, A_{n}, A \in \lang{\albet}$.

We assume that the comma operator associates and commutes, implying that such sequents are truly trees of multisets of formulae, and we let the \emph{empty sequent} be the empty multiset $\empseq$. We refer to a sequent in the shape of $\Delta$ (which contains only input formulae) as an \emph{LHS-sequent}, a sequent in the shape of $\Pi$ 
 as an \emph{RHS-sequent}, and a sequent $\ns$ as a \emph{full sequent}. We use both $\ns$ and $\Delta$ to denote LHS- and full sequents with the context differentiating the usage.

As for classical modal logics (e.g.~\cite{Bru09,GorPosTiu11}), we define a \emph{context} $\ns\{ \ \}\cdots\{ \ \}$ to be a nested sequent with some number of holes $\{ \ \}$ in the place of formulae. This gives rise to two types of contexts: \emph{input contexts}, which require holes to be filled with LHS-sequents to obtain a full sequent, and \emph{output contexts}, which require a single hole to be filled with an RHS-sequent and the remaining holes to be filled with LHS-sequents to obtain a full sequent. We also define the \emph{output pruning} of an input context $\ns\{ \ \}\cdots\{ \ \}$ or full sequent $\ns$, denoted $\ns^{\downarrow}\{ \ \}\cdots\{ \ \}$ and $\ns^{\downarrow}$ respectively, to be the same context or sequent with the unique output formula deleted. We note that all of the above terminology is due to~\cite{Str13}.

\begin{example} Let $\ns_{1}\{ \ \} := p^{\inp}, (\chara)\bl \adia q^{\inp}, \{ \ \} \br$ be an output context and $\ns_{2}\{ \ \}:= p^{\inp}, (\conv{\charc})\bl \adia q^{\outp}, \{ \ \} \br$ be an input context. Also, let $\Delta_{1} := \bot^{\inp}, (\charb)\bl q \iimp r^{\outp} \br$ be a full sequent and $\Delta_{2} := \bot^{\inp}, (\chara)\bl q \iimp r^{\inp} \br$ be an LHS-sequent. Observe that neither $\ns_{1}\{\Delta_{2}\}$ (shown below left) nor $\ns_{2}\{\Delta_{1}\}$ (shown below right) are full sequents since the former has no output formula and the latter has two output formulae.
$$
p^{\inp}, (\chara)\bl \adia q^{\inp}, \bot^{\inp}, (\chara)\bl q \iimp r^{\inp} \br \br
\qquad
p^{\inp}, (\conv{\charc})\bl \adia q^{\outp}, \bot^{\inp}, (\charb)\bl q \iimp r^{\outp} \br \br
$$
Conversely, both $\ns_{1}\{\Delta_{1}\}$ (shown below left) and $\ns_{2}\{\Delta_{2}\}$ (shown below right) are full sequents.
$$
p^{\inp}, (\chara)\bl \adia q^{\inp}, \bot^{\inp}, (\charb)\bl q \iimp r^{\outp} \br \br
\qquad
p^{\inp}, (\conv{\charc})\bl \adia q^{\outp}, \bot^{\inp}, (\chara)\bl q \iimp r^{\inp} \br \br
$$
\end{example}


Our nested sequent systems are presented in \fig~\ref{fig:nested-calculus} and are generalizations of those given for extensions of the intuitionistic modal logic $\ik$ with seriality and Horn-Scott-Lemmon axioms of the form $(\dia^{n} \Box A \iimp \Box^{k} A) \land (\dia^{k} A \iimp \Box^{n} \dia A)$ with $n,k \in \mathbb{N}$~\cite{Lyo21b}, which includes the logics of the intuitionistic modal cube~\cite{Str13}. 
 These systems can also be seen as intuitionistic variants of the nested sequent systems given for classical grammar logics~\cite{TiuIanGor12}. 
 Our nested systems incorporate the propagation rules $\prdia$ and $\prbox$, which rely on auxiliary notions (e.g. propagation graphs and paths) that we now define. 

\begin{definition}[Propagation Graph/Path]\label{def:propagation-graph} Let $w$ be the label assigned to the root of the nested sequent $\ns$. We define the \emph{propagation graph} $PG(\ns) := PG_{w}(\ns) = (V,E,L)$ of a nested sequent $\ns$ recursively on the structure of the nested sequent.
\begin{itemize}

\item $PG_{u}(\empseq) := (\emptyset, \emptyset, \emptyset)$;

\item $PG_{u}(A) := (\{u\}, \emptyset, \{(u,A)\}) \text{ with } A \in \{A^{\inp}, A^{\outp}\}$;

\item $PG_{u}(\Delta_{1},\Delta_{2}) := (V_{1}\cup V_{2}, E_{1} \cup E_{2}, L_{1} \cup L_{2}) \text{ where } PG_{u}(\Delta_{i}) = (V_{i},E_{i},L_{i})$;

\item $PG_{u}((\charx)\bl \ns \br) := (V' \cup \{u\}, E' \cup \{(u,\charx,v),(v,\conv{\charx},u)\},L') \text{ where $v$ is a fresh label }\\ \text{ and } PG_{v}(\ns) = (V',E',L')$.

\end{itemize}
We will often write $u \in \prgr{\ns}$ to mean $u \in \prgrdom$, and $(u,\charx,v) \in \prgr{\ns}$ to mean $(u,\charx,v) \in \prgredges$. We define \emph{propagation paths}, \emph{converses} of propagation paths, and \emph{strings} of propagation paths as in \dfn~\ref{def:propagation-path}.
\end{definition}

For input or output formulae $A$ and $B$, we use the notation $\ns\{A\}_{w}$ and $\ns\{A\}_{w}\{B\}_{u}$ to mean that $(w,A) \in L$ and $(w,A),(u,B) \in L$ in $\prgr{\ns}$, respectively. For example, if $\ns := p \iimp q^{\outp}, (\charb)\bl p^{\inp}, (\conv{\chara})\bl \abox p^{\inp} \br \br$, $\prgr{\ns} := (V,E,L)$, and $(v,p \iimp q^{\outp}),(u, p^{\inp}), (w,\abox p^{\inp}) \in L$, then both $\ns\{p \iimp q^{\outp}\}_{v}\{\abox p^{\inp}\}_{w}$ and $\ns\{p^{\inp}\}_{u}\{p \iimp q^{\outp}\}_{v}$ are valid representations of $\ns$ in our notation.

\begin{figure}[t]
\noindent\hrule

\begin{center}
\begin{tabular}{c c c} 
\AxiomC{}
\RightLabel{$\id$}
\UnaryInfC{$\ns \sbl p^{\inp}, p^{\outp} \sbr$}
\DisplayProof

&

\AxiomC{}
\RightLabel{$\botin$}
\UnaryInfC{$\ns \sbl \bot^{\inp} \sbr$}
\DisplayProof

&

\AxiomC{$\ns \sbl A^{\inp}, B^{\inp} \sbr$}
\RightLabel{$\conin$}
\UnaryInfC{$\ns \sbl A \land B^{\inp} \sbr$}
\DisplayProof
\end{tabular}
\end{center}

\begin{center}
\begin{tabular}{c c c}
\AxiomC{$\ns \sbl A^{\outp} \sbr$}
\AxiomC{$\ns \sbl B^{\outp} \sbr$}
\RightLabel{$\conout$}
\BinaryInfC{$\ns \sbl A \land B^{\outp} \sbr$}
\DisplayProof

&

\AxiomC{$\ns \sbl (\charx)\bl \empseq \br \sbr$}
\RightLabel{$\ddr$}
\UnaryInfC{$\ns \sbl \empseq \sbr$}
\DisplayProof

&

\AxiomC{$\ns \sbl A_{i}^{\outp} \sbr$}
\RightLabel{$\disout~i \in \{1,2\}$}
\UnaryInfC{$\ns \sbl A_{1} \lor A_{2}^{\outp} \sbr$}
\DisplayProof
\end{tabular}
\end{center}

\begin{center}
\begin{tabular}{c c c}
\AxiomC{$\ns \sbl A^{\inp} \sbr$}
\AxiomC{$\ns \sbl B^{\inp} \sbr$}
\RightLabel{$\disin$}
\BinaryInfC{$\ns \sbl A \lor B^{\inp} \sbr$}
\DisplayProof

&

\AxiomC{$\ns \sbl (\charx)\bl A^{\inp} \br \sbr$}
\RightLabel{$\xdiain$}
\UnaryInfC{$\ns \sbl \xdia A^{\inp} \sbr$}
\DisplayProof

&

\AxiomC{$\ns \sbl A^{\inp}, B^{\outp} \sbr$}
\RightLabel{$\iimpout$}
\UnaryInfC{$\ns \sbl A \iimp B^{\outp} \sbr$}
\DisplayProof
\end{tabular}
\end{center}

\begin{center}
\begin{tabular}{c c}
\AxiomC{$\ns^{\downarrow} \sbl A \iimp B^{\inp}, A^{\outp} \sbr$}
\AxiomC{$\ns \sbl B^{\inp} \sbr$}
\RightLabel{$\iimpin$}
\BinaryInfC{$\ns \sbl A \iimp B^{\inp} \sbr$}
\DisplayProof

&

\AxiomC{$\ns \sbl (\charx)\bl A^{\outp} \br \sbr$}
\RightLabel{$\xboxout$}
\UnaryInfC{$\ns \sbl \xbox A^{\outp} \sbr$}
\DisplayProof
\end{tabular}
\end{center}

\begin{center}
\begin{tabular}{c c}
\AxiomC{$\ns \sbl \Delta_{1} \sbr_{w} \sbl A^{\outp}, \Delta_{2} \sbr_{u}$}
\RightLabel{$\prdia^{\dag}$}
\UnaryInfC{$\ns \sbl \xdia A^{\outp}, \Delta_{1} \sbr_{w} \sbl \Delta_{2} \sbr_{u}$}
\DisplayProof

&

\AxiomC{$\ns \sbl \xbox A^{\inp}, \Delta_{1} \sbr_{w} \sbl A^{\inp}, \Delta_{2} \sbr_{u}$}
\RightLabel{$\prbox^{\dag}$}
\UnaryInfC{$\ns \sbl \xbox A^{\inp}, \Delta_{1} \sbr_{w} \sbl \Delta_{2} \sbr_{u}$}
\DisplayProof
\end{tabular}
\end{center}
$$
\dag := \exists \ppath(w,u) \in \prgr{\ns} (\stra_{\ppath}(w,u) \in \glang(\charx))
$$

\hrule
\caption{The nested sequent calculi $\ncalc$. We have a copy of $\xdiain$, $\xboxout$, $\prdia$, and $\prbox$ for each $\charx \in \albet$, and the $\ddr$ rule occurs in a calculus $\ncalc$ \ifandonlyif $\D \in \axs$. The $\prdia$ and $\prbox$ rule are applicable only if the side condition $\dag$ holds.}
\label{fig:nested-calculus}
\end{figure}


We now define our translation functions which transform a \emph{full} nested sequent into a labeled tree sequent, and vice-versa. Our translations additionally depend on \emph{sequent compositions} and \emph{labeled restrictions}. If $\Lambda_{1} := \rel_{1}, \Gamma_{1} \sar \Gamma_{1}'$ and $\Lambda_{2} := \rel_{2}, \Gamma_{2} \sar \Gamma_{2}'$, then we define its sequent composition $\Lambda_{1} \seqcomp \Lambda_{2} := \rel_{1}, \rel_{2},\Gamma_{1}, \Gamma_{2} \sar \Gamma_{1}', \Gamma_{2}'$. Given that $\Gamma$ is a multiset of labeled formulae, we define the labeled restriction $\Gamma \restriction w := \{A \ | \ w : A \in \Gamma\}$, and if $w$ is not a label in $\Gamma$, then $\Gamma \restriction w := \emptyset$. 
 Moreover, for a multiset $A_{1}, \ldots,A_{n}$ of formulae, we define $(A_{1}, \ldots,A_{n})^{\ast} := A_{1}^{\ast}, \ldots,A_{n}^{\ast}$ and $(\emptyset)^{\ast} := \emptyset$, where $\ast \in \{\inp,\outp\}$.

\begin{definition}[Translation $\ltr$] We define $\ltr_{w}(\ns) := \rel, \Gamma \sar u : A$ as follows:

\begin{center}
\begin{minipage}{.4\textwidth}
\begin{itemize}

\item $\ltr_{v}(\empseq) := \emptyset \sar \emptyset$

\item $\ltr_{v}(A^{\inp}) := v : A \sar \empseq$

\item $\ltr_{v}(A^{\outp}) := \empseq \sar v : A$

\end{itemize}
\end{minipage}
\begin{minipage}{.55\textwidth}
\begin{itemize}

\item $\ltr_{v}(\Delta_{1},\Delta_{2}) := \ltr_{v}(\Delta_{1}) \seqcomp \ltr_{v}(\Delta_{2})$

\item $\ltr_{v}((\charx)\bl \ns \br) := (vR_{\charx}z \sar \empseq) \seqcomp \ltr_{z}(\ns)$\\ $\text{ with $z$ fresh}$

\end{itemize}
\end{minipage}
\end{center}
We note that since $\ns$ is a full sequent, the obtained labeled sequent will contain a single labeled formula in its consequent.
\end{definition}

\begin{example} We let $\ns := p \iimp q^{\outp}, (b)\bl p^{\inp}, (\conv{c})\bl \abox p^{\inp} \br \br$ and show the output labeled sequent under the translation $\ltr$.
$$
\ltr_{w}(\Sigma) = wR_{b}v,vR_{\conv{c}}u, v : p, u : \abox p \sar w : p \iimp q
$$
\end{example}

 Given that $\Lambda := \rel, \Gamma \sar w : A$ is a labeled tree sequent, we define $\Lambda' \subseteq \Lambda$ \iffi there exists a labeled tree sequent $\Lambda''$ such that $\Lambda = \Lambda' \seqcomp \Lambda''$. We then define $\Lambda_{u} := \rel', \Gamma' \sar \Delta'$ to be the unique labeled tree sequent rooted at $u$ such that $\Lambda_{u} \subseteq \Lambda$, $\Gamma' \restriction u = \Gamma \restriction u$, $\Delta' \restriction u = \Delta \restriction u$, and $\rel'$ is the downward closure of $u$ in $\rel$. Using this notation, we define the reverse translation accordingly.

\begin{definition}[Translation $\ntr$] 
 Let $\Lambda := \rel, \Gamma \sar w : A$. We define the translation $\ntr$ recursively on the tree structure of its input, starting at the root; in particular, $\ntr(\Lambda) := \ntr_{u}(\Lambda) = $
\[
  \begin{cases}
  (\Gamma \restriction u)^{\inp}, (\{w : A\} \restriction u)^{\outp} & \text{if $\rel = \empseq$}; \\
  (\Gamma \restriction u)^{\inp}, (\{w : A\} \restriction u)^{\outp}, (\charx_{1})\bl \ntr_{z_{1}}(\Lambda_{z_{1}}) \br, \ldots, (\charx_{n})\bl \ntr_{z_{n}}(\Lambda_{z_{n}}) \br & \text{otherwise}. 
  \end{cases}
\]
In the second case above, we assume that $vR_{\charx_{1}}z_{1}, \ldots vR_{\charx_{n}}z_{n}$ are all of the relational atoms occurring in the input sequent which have the form $vR_{\chary}x$.
\end{definition}

\begin{example} We let $\Lambda := wR_{b}v,vR_{\conv{c}}u, v : p, u : \abox p \sar w : p \iimp q$ and show the output nested sequent under the translation $\ntr$.
$$
\ntr(\Lambda) = p \iimp q^{\outp}, (b)\bl p^{\inp}, (\conv{c})\bl \abox p^{\inp} \br \br
$$
\end{example}

\begin{theorem}\label{thm:nested-labeled-equiv}
Every proof of a labeled tree sequent in $\rcalc$ is transformable into a proof in $\ncalc$, and vice-versa.
\end{theorem}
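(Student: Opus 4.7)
The plan is to prove both directions by induction on the height of the given derivation, exploiting the translations $\ltr$ and $\ntr$ together with Lemma~\ref{lem:labeled-tree-derivations}. For the direction from $\rcalc$ to $\ncalc$, let $\Lambda$ be a labeled tree sequent with a proof in $\rcalc$; by Lemma~\ref{lem:labeled-tree-derivations}, every sequent in that proof is a labeled tree sequent sharing a common root. Hence $\ntr$ is defined on every line, and it suffices to show that the $\ntr$-image of each inference step is a valid derivation in $\ncalc$. I would proceed rule by rule: the axiom $\id$ and $\botl$ translate directly to $\id$ and $\botin$; the propositional logical rules correspond to their nested counterparts (e.g., $\iimpr$ to $\iimpout$, $\iimpl$ to $\iimpin$), keeping in mind that an application of $\iimpr$ in the labeled system with conclusion $\rel, \Gamma \sar w : A \iimp B$ yields a premise labeled with the same $w$, so the $\ntr$-image genuinely operates within a single nested position. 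The rules $\xdial$ and $\xboxr$ involve a fresh label $u$ together with a fresh relational atom $w R_\charx u$; under $\ntr$ these become openings of a new $(\charx)\bl \cdots \br$ block, matching $\xdiain$ and $\xboxout$, respectively. For the converse direction, given a proof in $\ncalc$ of a full sequent $\ns$, each sequent translates under $\ltr$ to a labeled tree sequent (with a single labeled formula in the consequent), and we again show that every nested rule is simulated in $\rcalc$ by the analogous step (possibly combined with a trivial relabelling).

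The main non-trivial cases are the propagation rules $\prdia$ and $\prbox$, which appear in both calculi and rely on a condition of the form $\exists \ppath(w,u) \in \prgr{\cdot}(\stra_{\ppath}(w,u) \in \glang(\charx))$. In the labeled direction, the underlying relational graph of the sequent is a tree (by Lemma~\ref{lem:labeled-tree-derivations}), and $\ntr$ faithfully turns this tree into the nested-sequent tree, identifying labels with nesting positions, so that the propagation graphs coincide (up to renaming of labels / positions). Consequently, any propagation path witnessing the side condition in the labeled premise is literally the same path when viewed in the nested premise, and vice versa; the string $\stra_\ppath(w,u)$ and its membership in $\glang(\charx)$ are preserved verbatim. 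This takes care of the $\prdia$ and $\prbox$ cases in both translations.

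It remains to check the more syntactic housekeeping: the $\ddr$ rule translates between its labeled form (introducing a fresh $u$ with $w R_\charx u$) and its nested form (inserting an empty block $(\charx)\bl \empseq \br$), which is straightforward from the definitions of $\ltr$ and $\ntr$; and in the $\iimpin$ (resp.\ $\iimpl$) case one must verify that the pruning $\ns^{\downarrow}$ used in the left premise of $\iimpin$ corresponds exactly to dropping the original output formula when forming the left premise of $\iimpl$ in the labeled proof, which follows because a labeled tree sequent has a unique labeled formula in its consequent.

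The hard part, I expect, is bookkeeping rather than conceptual difficulty: ensuring that the correspondence between labels (in labeled tree sequents) and nesting positions (in nested sequents) is set up coherently throughout the induction, so that the propagation-graph equalities really hold at every step, and that fresh labels introduced by $\xdial$, $\xboxr$, and $\ddr$ consistently match freshly introduced nested blocks. Once that bookkeeping is fixed, each rule case reduces to a direct verification, and the theorem follows.
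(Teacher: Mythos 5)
Your proposal is correct and follows essentially the same route as the paper: the paper's proof simply invokes Lemma~\ref{lem:labeled-tree-derivations} together with the observation that the rules of $\rcalc$ and $\ncalc$ are images of one another under $\ntr$ and $\ltr$, which is exactly the rule-by-rule correspondence (including the matching of propagation graphs for the side conditions of $\prdia$ and $\prbox$, and of output pruning with the left premise of $\iimpl$) that you spell out in detail.
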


\begin{proof} Follows from \lem~\ref{lem:labeled-tree-derivations}, and the fact that the rules of $\rcalc$ and $\ncalc$ are translations of one another under the $\ntr$ and $\ltr$ functions. 
\end{proof}

\begin{theorem}[$\ncalc$ Soundness and Completeness]\label{thm:nested-sound-complete}
 A formula $A$ is provable in $\ncalc$ \iffi $A$ is $(\albet,\axs)$-valid.
\end{theorem}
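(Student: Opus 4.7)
The plan is to derive this theorem immediately from the previously established chain of results, by reducing derivability in $\ncalc$ to derivability of a particular labeled tree sequent in $\rcalc$, and then invoking the soundness and completeness of $\rcalc$.

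First I would observe that, by the definitions of $\ltr$ and $\ntr$, the single-formula nested sequent $A^{\outp}$ and the labeled tree sequent $\sar w : A$ correspond to one another under the two translation functions. That is, $\ltr_{w}(A^{\outp}) = (\sar w : A)$ and $\ntr(\sar w : A) = A^{\outp}$. Moreover, $\sar w : A$ is trivially a labeled tree sequent, since its relational part is empty and its only labeled formula shares the single label $w$. Hence, saying that $A$ is derivable in $\ncalc$ is the same as saying that $A^{\outp}$ is derivable as a nested sequent in $\ncalc$, which by \thm~\ref{thm:nested-labeled-equiv} is equivalent to the labeled tree sequent $\sar w : A$ being derivable in $\rcalc$.

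For the soundness direction, I would assume $A$ is derivable in $\ncalc$. By the above correspondence and \thm~\ref{thm:nested-labeled-equiv}, $\sar w : A$ is derivable in $\rcalc$. Applying \thm~\ref{thm:ikal-sound-complete}(1), the sequent $\sar w : A$ is $(\albet,\axs)$-valid, which by \dfn~\ref{def:sequent-semantics-kms} means that for every bi-relational $(\albet,\axs)$-model $M$ and every interpretation $I$, we have $M, I(w) \sat A$. Since $w$ and $I$ are arbitrary, this yields $\ent A$. For the completeness direction, I would assume $\ent A$. By \thm~\ref{thm:ikal-sound-complete}(2), $\sar w : A$ is derivable in $\rcalc$. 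Since this is a labeled tree sequent, \thm~\ref{thm:nested-labeled-equiv} converts its proof into a proof in $\ncalc$ of $A^{\outp}$, i.e. $A$ is derivable in $\ncalc$.

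There is no real obstacle here: the heavy lifting (soundness and completeness of $\rcalc$, and the bijective correspondence between labeled tree proofs in $\rcalc$ and nested proofs in $\ncalc$) has already been done in Theorems~\ref{thm:ikal-sound-complete} and~\ref{thm:nested-labeled-equiv}. The only point worth making explicit in the write-up is that the trivial labeled sequent $\sar w : A$ qualifies as a labeled tree sequent, so that \thm~\ref{thm:nested-labeled-equiv} applies in both directions.
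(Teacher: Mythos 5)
Your proposal is correct and follows exactly the paper's route: the paper's own proof is the one-line "Follows from \thm~\ref{thm:ikal-sound-complete} and~\ref{thm:nested-labeled-equiv}," and you have simply spelled out the intermediate observations (that $\sar w : A$ is a labeled tree sequent and that it corresponds to $A^{\outp}$ under the translations) that the paper leaves implicit. No differences in approach and no gaps.
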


\begin{proof}
Follows from Theorems~\ref{thm:ikal-complete}, \ref{thm:rcalc-sound}, and~\ref{thm:nested-labeled-equiv}. 
\end{proof}

 We end this section by establishing a collection of proof-theoretic properties satisfied by each nested calculus $\ncalc$. In particular, we argue that certain rules of $\ncalc$ are hp-invertible and that the rules displayed in \fig~\ref{fig:structural-rules} are hp-admissible.

\begin{figure}[t]
\noindent\hrule

\begin{center}
\begin{tabular}{c c c}
\AxiomC{$\ns \sbl \bot^{\outp} \sbr$}
\RightLabel{$\botout$}
\UnaryInfC{$\ns \sbl A^{\outp} \sbr$}
\DisplayProof

&

\AxiomC{$\ns$}
\RightLabel{$\nec$}
\UnaryInfC{$(\charx) \bl \ns \br$}
\DisplayProof

&

\AxiomC{$\ns \sbl \empseq \sbr$}
\RightLabel{$\wk$}
\UnaryInfC{$\ns \sbl \Delta \sbr$}
\DisplayProof
\end{tabular}
\end{center}

\begin{center}
\begin{tabular}{c c}
\AxiomC{$\ns \sbl (\charx)\bl \Delta_{1}\br,  (\charx)\bl \Delta_{2} \br \sbr$}
\RightLabel{$\med$}
\UnaryInfC{$\ns \sbl (\charx)\bl \Delta_{1}, \Delta_{2} \br \sbr$}
\DisplayProof

&

\AxiomC{$\ns \sbl A^{\inp}, A^{\inp} \sbr$}
\RightLabel{$\ctr$}
\UnaryInfC{$\ns \sbl A^{\inp} \sbr$}
\DisplayProof
\end{tabular}
\end{center}

\hrule
\caption{Hp-admissible rules.} 
\label{fig:structural-rules}
\end{figure}

\begin{theorem}\label{thm:nested-properties}
 The following properties hold for $\ncalc$:
\begin{enumerate}

\item The $\botout$, $\nec$, and $\wk$ rules are hp-admissible;

\item The $\disin$, $\conin$, $\xdiain$, and $\prbox$ rules are hp-invertible;

\item The $\iimpin$ rule is hp-invertible in the right premise;

\item The $\med$ and $\ctr$ rules are hp-admissible.

\end{enumerate}
\end{theorem}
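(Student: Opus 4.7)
The plan is to leverage the tight correspondence between the refined labeled calculus $\rcalc$ and the nested calculus $\ncalc$ established in Theorem~\ref{thm:nested-labeled-equiv}. Inspection of the translations $\ltr$ and $\ntr$ shows that they are structure-preserving bijections on proofs that do not alter proof-height: every rule of $\rcalc$ applied at height $h$ translates to the corresponding rule of $\ncalc$ at height $h$, and conversely. Consequently, any hp-admissibility or hp-invertibility statement transfers directly between the two formalisms, provided the corresponding nested and labeled rules match under translation.

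For Claim~1, the nested $\botr$ and $\wk$ rules correspond under $\ltr$ to (possibly iterated applications of) the labeled $\botr$ and $\wkl$ rules, both of which are hp-admissible in $\calc$ by Lemmas~\ref{lem:botr-admiss} and~\ref{lem:wk-sub-admiss}, hence in $\rcalc$ via Theorems~\ref{thm:calc-to-rcalc} and~\ref{thm:rcalc-to-calc}. For $\nec$, I would argue directly by induction on the height of the given derivation: since wrapping $\ns$ in an outer box only prepends an unreferenced root to the underlying tree, every rule acting inside is unaffected and its premises remain derivable at the same height (equivalently, in the labeled setting one prepends a fresh root with a dangling edge, which no rule of $\rcalc$ consumes). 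Claims~2 and~3 transfer directly from the invertibility statements in Corollary~\ref{cor:properties-rcalc}, whose underlying arguments (Lemma~\ref{lem:invert}) are height-preserving inductions that translate faithfully to $\ncalc$; hp-admissibility of $\ctr$ in Claim~4 is obtained analogously from the hp-admissibility of $\ctrlii$ in $\rcalc$.

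The main obstacle is the $\med$ rule in Claim~4. Given a proof of $\ns\{(\charx)\bl \Delta_1 \br, (\charx)\bl \Delta_2 \br\}$, translating via $\ltr$ yields a labeled tree proof in which the host label $w$ has two distinct fresh children $u,v$ with relational atoms $wR_{\charx}u$ and $wR_{\charx}v$. My plan is to apply the hp-admissibility of $\sub$ (Lemma~\ref{lem:wk-sub-admiss}) to substitute $u$ for $v$ throughout the translated proof, then the hp-admissibility of $\ctrli$ (Lemma~\ref{lem:ctr-admiss}) to eliminate the resulting duplicate relational atom, and finally translate back via $\ntr$ to obtain the conclusion of $\med$ at height no greater than that of the premise. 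The subtle point is verifying that identifying $u$ with $v$ does not invalidate any side condition imposed by instances of $\prdia$ or $\prbox$ occurring in the proof: since collapsing two nodes of the propagation graph can only enlarge the set of propagation paths between any pair of vertices and preserves reachability, any path witnessing the side condition $\stra_{\ppath}(z,t) \in \glang(\charx)$ in the original proof survives (possibly as a shorter path) in the collapsed graph, so every propagation-rule instance remains legal and the transformed derivation is a valid proof in $\rcalc$.
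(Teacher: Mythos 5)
Your route is genuinely different from the paper's. The paper proves all four claims by direct induction on the height of the $\ncalc$ derivation; for $\med$ it permutes the rule upward through each rule of $\ncalc$, the only non-trivial cases being $\prdia$ and $\prbox$, which are handled by observing (following Gor\'e et al.) that $\med$ preserves propagation paths; $\ctr$ is then derived using hp-invertibility (claims 2--3) together with the hp-admissibility of $\med$. Your reduction of $\med$ to label substitution $\sub$ followed by contraction of the duplicated relational atom $\ctrli$ in the translated labeled tree proof is an attractive alternative decomposition, and your closing observation --- that identifying two nodes of the propagation graph can only enlarge the set of propagation paths, so every $\prdia$/$\prbox$ side condition survives --- is exactly the same insight the paper invokes, just phrased on the labeled side.

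There is, however, a genuine gap in how you source the ``hp-'' prefix. You repeatedly justify height-preserving admissibility in $\rcalc$ by citing the hp-admissibility lemmata for $\calc$ (Lemmas~\ref{lem:botr-admiss}, \ref{lem:wk-sub-admiss}, \ref{lem:ctr-admiss}, \ref{lem:invert}) ``via Theorems~\ref{thm:calc-to-rcalc} and~\ref{thm:rcalc-to-calc}.'' Those two theorems are proof \emph{transformations} that are not height-preserving (the $\calc$-to-$\rcalc$ direction eliminates structural rules by permutation, and the converse direction inserts weakenings and invocations of Lemma~\ref{lem:deleting-rel-atoms}); this is precisely why Corollary~\ref{cor:properties-rcalc} asserts only plain admissibility and invertibility, with no ``hp-''. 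Consequently the round trip yields admissibility of $\botr$, $\wk$, $\sub$, $\ctrli$, $\ctrlii$ in $\rcalc$, but not at bounded height, and so does not deliver claims 1--4 as stated. To repair this you must redo the height-preserving inductions directly in $\rcalc$ (or $\ncalc$), the only new content being the permutation of each rule past $\prdia$ and $\prbox$; your final paragraph supplies exactly this check for $\sub$, but the analogous (easy) checks for $\wkl$, $\botr$, $\ctrli$, $\ctrlii$ and for the invertibility claims are asserted rather than performed. Once those are made explicit, your argument goes through.
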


\begin{proof} Every claim is shown by induction on the height of the given proof. As the proofs of claims 1 - 3 are standard, we omit them, and only show the proof of claim 4. We first argue the hp-admissibility of $\med$, and then use this to demonstrate the hp-admissibility of $\ctr$.

 We note that the base cases are trivial, and with the exception of the $\prdia$ and $\prbox$ rules, all cases of the inductive step are trivial as $\ctr$ freely permutes above each rule instance. Regarding the $\prdia$ and $\prbox$ cases, as discussed in Gor\'e et al.~\cite[\fig~12]{GorPosTiu11}, the $\med$ rule preserves propagation paths, and therefore, if we permute $\med$ above $\prdia$ or $\prbox$, then the rule may still be applied afterward.
 
 Let us now argue the hp-admissibility of $\ctr$ by induction on the height of the given proof. The base cases are simple as applying $\ctr$ to $\id$ or $\botin$ yields another instance of the rule, thus showing that the conclusion is provable without the use of $\ctr$. For the inductive step, if neither contraction formula $A^{\inp}$ is principal in the last rule applied above $\ctr$, then we may freely permute $\ctr$ above the rule instance. On the other hand, if one of the contraction formulae $A^{\inp}$ is principal in the rule $(r)$ applied above $\ctr$, then we use claim 2, claim 3, or the hp-admissibility of $\med$, along with the inductive hypothesis to resolve the case. For instance, if $(r)$ is the rule $\xdiain$, then our proof is as shown below left. The desired conclusion may be derived by invoking the hp-admissibility of $\xdiain$, applying the hp-admissibility of $\med$, and then applying IH, followed by an application of $\xdiain$.
\begin{center}
\begin{tabular}{c @{\hskip 1em} c @{\hskip 1em} c}
\AxiomC{$\ns \sbl (\charx) \bl A^{\inp} \br, \xdia A^{\inp} \sbr$}
\RightLabel{$\xdiain$}
\UnaryInfC{$\ns \sbl \xdia A^{\inp}, \xdia A^{\inp} \sbr$}
\RightLabel{$\ctr$}
\UnaryInfC{$\ns \sbl \xdia A^{\inp} \sbr$}
\DisplayProof

&

$\leadsto$

&

\AxiomC{$\ns \sbl (\charx) \bl A^{\inp} \br, \xdia A^{\inp} \sbr$}
\RightLabel{\thm~\ref{thm:nested-properties}-(2)}
\UnaryInfC{$\ns \sbl (\charx) \bl A^{\inp} \br, (\charx) \bl A^{\inp} \br \sbr$}
\RightLabel{$\med$}
\UnaryInfC{$\ns \sbl (\charx) \bl A^{\inp}, A^{\inp} \br \sbr$}
\RightLabel{IH}
\UnaryInfC{$\ns \sbl (\charx) \bl A^{\inp} \br \sbr$}
\RightLabel{$\xdiain$}
\UnaryInfC{$\ns \sbl \xdia A^{\inp} \sbr$}
\DisplayProof
\end{tabular}
\end{center}
\end{proof}


\section{Properties of Intuitionistic Grammar Logics}\label{sec:(un)decid}


 We now put our refined labeled and nested systems to use, proving that intuitionistic grammar logics satisfy a certain collection of properties. We first employ our nested systems in establishing the conservativity of intuitionistic grammar logics over their (mono-)modal restrictions (defined below). In the second subsection, we show that it is undecidable to check if a formula is valid in an arbitrary intuitionistic grammar logic by means of a proof-theoretic reduction from the validity problem for classical context-free grammar logics (which is known to be undecidable~\cite{BalGioMar98}). In the third and final section, we recognize that validity can be decided for \emph{simple} intuitionistic grammar logics, which are defined by restricting the $\ipa$'s that may occur as axioms. In the latter two subsections, we make use of our refined labeled systems as the syntax of such systems is better suited for our purposes.

\subsection{Conservativity}

 Each intuitionistic modal logic exists as a mono-modal fragment of an intuitionistic grammar logic. In~\cite{Lyo21a}, (cut-free) nested sequent systems were provided for an extensive class of intuitionistic modal logics, which can be seen as restricted variants of the nested systems presented in the previous section. We leverage this fact to establish the conservativity of intuitionistic grammar logics over their modal counterparts. Toward this end, we first define the class of intuitionistic modal logics from~\cite{Lyo21a}, and subsequently discuss the fundamental concepts required to state our conservativity result, ending the section with a proof thereof.
 
\begin{definition}[Intuitionistic Modal Logics] We define the language $\mathcal{L}$ to be the set of all formulae generated via the following grammar in BNF:
$$
A ::= p \ | \ \bot \ | \ A \lor A \ | \ A \land A \ | \ A \iimp A \ | \ \dia A \ | \ \Box A
$$
 where $p$ ranges over the set $\prop$ of propositional atoms. We define the base intuitionistic modal logic $\ik$ to be the smallest set of formulae closed under substitutions of the following axioms and applications of the following inference rules.
\begin{multicols}{2}
\begin{itemize}

\item[A0] Any set of axioms for propositional intuitionistic logic

\item[A1] $\Box (A \iimp B) \iimp (\Box A \iimp \Box B)$

\item[A2] $\Box (A \iimp B) \iimp (\dia A \iimp \dia B)$

\item[A3] $\inot \dia \bot$

\item[A4] $\dia (A \lor B) \iimp (\dia A \lor \dia B)$

\item[A5] $(\dia A \iimp \Box B) \iimp \Box (A \iimp B)$

\item[R0] \AxiomC{$A$}\AxiomC{$A \iimp B$}\RightLabel{(mp)}\BinaryInfC{$B$}\DisplayProof

\item[R1] \AxiomC{$A$}\RightLabel{(nec)}\UnaryInfC{$\Box A$}\DisplayProof

\end{itemize}
\end{multicols}
\noindent
We also consider extensions of $\ik$ with sets $\axsii$ of the following axioms, where $n,k \in \mathbb{N}$. We refer to \emph{\axd} as the \emph{seriality axiom} and to each \emph{\axhsl} as a \emph{Horn-Scott-Lemmon axiom}.
$$
\emph{\axd} \ : \ \Box A \iimp \dia A \qquad \emph{\axhsl} \ : \ (\dia^{n} \Box A \iimp \Box^{k} A) \land (\dia^{k} A \iimp \Box^{n} \dia A)
$$
We define the intuitionistic modal logic $\ikam$ to be the smallest set of formulae closed under substitutions of the axioms A0--A5 and $\axsii$, and closed under the inference rules R0 and R1.
\end{definition}

 For a given alphabet $\albet$, we can encode the language $\mathcal{L}$ in the language $\lang{\albet}$ by identifying the modalities $\xdia = \dia$ and $\xbox = \Box$ for a fixed character $\charx \in \albet$. We can then view the language $\mathcal{L}$ as the subset of $\lang{\albet}$ containing only \emph{$\charx$-formulae}, i.e. formulae from $\lang{\albet}$ that only use the modalities $\langle \charx \rangle$ and $[\charx ]$. Similarly, we can view each logic $\ikam$ as a subset of $\lang{\albet}$ by identifying each formula $A \in \ikam$ with the formula $B \in \lang{\albet}$ obtained by replacing every $\dia$ and $\Box$ in $A$ by $\xdia$ and $\xbox$, respectively. For the remainder of the section we view $\mathcal{L}$ and $\ikam$ in the manner just described. For each intuitionistic modal logic, we can then define a corresponding nested calculus as follows:
 
\begin{definition}[$\nikam$~\cite{Lyo21a}] We define the nested calculus $\nikam$ for $\ikam$ to be the set of rules $\id$, $\botin$, $\disin$, $\disout$, $\conin$, $\conout$, $\iimpin$, $\iimpout$, $\xdial$, $\xboxr$, $\prdia$, and $\prbox$, which also contains $\ddr$ \iffi $\axd \in \axsii$. We define the grammar $\g{\axsii}$ used in the $\prdia$ and $\prbox$ rules as: $(\xdia \pto \xdiac^{n} \cate \xdia^{k}),$ $(\xdiac \pto \xdiac^{k} \cate \xdia^{n}) \in \g{\axsii}$ \iffi $(\dia^{n} \Box A \iimp \Box^{k} A) \land (\dia^{k} A \iimp \Box^{n} \dia A) \in \axsii$.
\end{definition}

 For each intuitionistic modal logic $\ikam$, the calculus $\nikam$ is isomorphic to and functions precisely as the nested calculus introduced for the same logic in~\cite{Lyo21a}.\footnote{In~\cite{Lyo21a}, each intuitionistic modal logic is denoted $\ika$ and its nested calculus is denoted $\nika$. We have opted to use $\axsii$ as opposed to $\axs$ in this section however to distinguish sets $\axs$ of intuitionistic path axioms and sets $\axsii$ of Horn-Scott-Lemmon axioms.} Hence, the following soundness and completeness result follows from~\cite[\thm~6]{Lyo21a}.

\begin{theorem}[Soundness and Completeness~\cite{Lyo21a}]\label{thm:modal-nested-sound-complete}
Let $\albet$ be an alphabet with $\charx \in \albet$ and $A$ be an $\charx$-formula. Then, $A$ is provable in $\nikam$ \iffi $A \in \ikam$.
\end{theorem}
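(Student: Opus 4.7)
The plan is to leverage the observation that this statement is essentially a restatement of Theorem~6 of~\cite{Lyo21a} in the present paper's notation. The plan consists of first verifying the notational correspondence, then sketching the soundness and completeness arguments that underlie the cited result.

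First, I would check that $\nikam$ as defined above coincides, rule for rule, with the nested calculus of~\cite{Lyo21a} under the identification of $\dia$ with $\xdia$ and $\Box$ with $\xbox$ for the fixed character $\charx$. In particular, the productions of $\g{\axsii}$---which mention the converse character $\conv{\charx}$---match the grammar used in~\cite{Lyo21a} to encode the HSL axioms. Under this identification, a derivation in one calculus is a derivation in the other, so the theorem reduces to the cited soundness and completeness result.

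Second, for soundness, I would argue by induction on the height of a given derivation in $\nikam$, translating each nested sequent back into an $\charx$-formula via the standard nested-sequent translation and verifying that each rule preserves $\ikam$-validity. The propagation rules $\prdia$ and $\prbox$ are the delicate cases: their soundness rests on the correspondence between productions $\xdia \pto \conv{\charx}^{n}\xdia^{k}$ in $\g{\axsii}$ and the HSL axiom $(\dia^{n}\Box A \iimp \Box^{k} A) \land (\dia^{k} A \iimp \Box^{n}\dia A)$ in $\axsii$, together with the (F1) and (F2) conditions on bi-relational models that allow paths traversing converse edges to lift to accessibility chains in the underlying Kripke frame.

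Third, for completeness, I would show that every axiom A0--A5 of $\ik$, the seriality axiom $\axd$ (when $\axd \in \axsii$), and every HSL axiom in $\axsii$ is derivable in $\nikam$, and that the inference rules (mp) and (nec) are admissible. The admissibility of (nec) follows from an hp-weakening argument, whereas (mp) requires cut admissibility, which is the main technical hurdle and is established by a cut-elimination argument analogous to Theorem~\ref{thm:cut-elim}. The main obstacle throughout is the careful treatment of propagation paths that traverse converse edges in $\prgr{\ns}$, since the modal logic $\ikam$ has no converse modality in its language even though the grammar $\g{\axsii}$ that drives propagation does.
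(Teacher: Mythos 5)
Your proposal matches the paper's treatment: the paper establishes this theorem solely by observing that $\nikam$ is isomorphic to (and functions precisely as) the nested calculus of the cited work and then invoking Theorem~6 there, which is exactly your first step. The additional soundness and completeness sketch you supply is not carried out in the paper at all --- it defers entirely to the citation --- but it is consistent with how the analogous results for $\ncalc$ are obtained elsewhere in the text, so nothing in it conflicts with the paper's argument.
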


 Let us now define the notion of conservativity in our setting. Afterward, we put our calculi to use and establish the conservativity relation between specific intuitionistic modal and grammar logics.

\begin{definition}[Conservative Extension] Let $\albet$ be an alphabet with $\charx \in \albet$. 
 We define an intuitionistic grammar logic $\ikma$ to be an \emph{$\charx$-extension} of an intuitionistic modal logic $\ikam$ \iffi (1) $\axd \in \axsii$ \iffi $\text{D}\textsubscript{\charx} \in \axs$, and (2) the Horn-Scott-Lemmon axiom $(\dia^{n} \Box A \iimp \Box^{k} A) \land (\dia^{k} A \iimp \Box^{n} \dia A) \in \axsii$ \iffi the intuitionistic path axiom $(\diap{\conv{\charx}}^{n} \diap{\charx}^{k} A \iimp \diap{\charx} A) \land (\boxp{\charx} A \iimp \boxp{\conv{\charx}}^{n} \boxp{\charx}^{k} A) \in \axs$. Last, we say that an intuitionistic grammar logic $\ikma$ is an \emph{$\charx$-conservative extension} of an intuitionistic modal logic $\ikam$ \iffi for any $\charx$-formula $A$, if $A$ is a theorem of $\ikma$, then $A$ is a theorem of $\ikam$.
\end{definition}

 If one observes the rules employed in our nested calculi, they will find that every rule exhibits the \emph{sub-formula property}, that is, the formulae occurring within the premise of a rule are sub-formule of those occurring in the conclusion. By this observation, along with the observation that no rule changes the character $\charx$ indexing a nesting $(\charx)[\Sigma]$, it follows that our nested calculi possess the \emph{separation property}~\cite{GorPosTiu11}, summarized in the statement of the theorem below. We note that nested calculi also exhibit this property in the setting of classical (context-free) grammar and tense logics~\cite{GorPosTiu11,TiuIanGor12}.

\begin{theorem}[Separation]\label{thm:separation}
Let $\albet$ be an alphabet with $\charx \in \albet$, $A$ be an $\charx$-formula, and $\ikma$ be an $\charx$-extension of $\ikam$. Then, $A$ is provable in $\ncalc$ \iffi $A$ is provable in $\nikam$.
\end{theorem}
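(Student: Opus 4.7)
The plan is to prove the two directions of the biconditional separately. For the $(\Leftarrow)$ direction, I would show that every proof in $\nikam$ is, under the identification $\dia = \xdia$ and $\Box = \xbox$, already a proof in $\ncalc$. The initial and propositional rules coincide, as do $\xdiain$, $\xboxout$, and (when present) $\ddr$, which are simply the $\charx$-indexed instances of the corresponding rules of $\ncalc$. The only delicate point concerns the propagation rules, which are governed in $\nikam$ by $\g{\axsii}$ and in $\ncalc$ by $\g{\axs}$. Unfolding the definitions, each Horn-Scott-Lemmon axiom in $\axsii$ generates the productions $\xdia \pto \xdiac^{n}\xdia^{k}$ and $\xdiac \pto \xdiac^{k}\xdia^{n}$ in $\g{\axsii}$, and these coincide (via the involutivity of the converse operation) with the productions generated in $\g{\axs}$ by the corresponding $\ipa$ supplied by the $\charx$-extension; hence the side conditions of $\prdia$ and $\prbox$ match and every $\nikam$-inference is a legitimate $\ncalc$-inference.

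For the $(\Rightarrow)$ direction, I would argue by induction on the height of a given $\ncalc$-derivation $\prf$ of $A$ that every nested sequent occurring in $\prf$ uses only $\charx$-formulae and only nestings of the form $(\charx)[\cdot]$; once this is established, $\prf$ can be read back as a derivation in $\nikam$. The key global observation is that $\ncalc$ enjoys the sub-formula property (by inspection of the rules in Figure~\ref{fig:nested-calculus}), so every formula in $\prf$ is a sub-formula of $A$ and hence an $\charx$-formula. The key local observation is that no rule alters the character of an existing nesting, while the rules that introduce a \emph{new} nesting are exactly $\xdiain$, $\xboxout$, and $\ddr$: the first two introduce $(\chary)[\cdot]$ only when a modality $\langle \chary \rangle$ or $[\chary]$ is principal in the conclusion, forcing $\chary = \charx$; and $\ddr$ introduces $(\chary)[\empseq]$ only when the corresponding seriality axiom belongs to $\axs$, which under the $\charx$-extension occurs only for $\chary = \charx$.

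Consequently, throughout $\prf$ the propagation graphs have edges labeled only by $\charx$ and $\conv{\charx}$, so every propagation-path string lies in $\{\charx,\conv{\charx}\}^{*}$; the side conditions of $\prdia$ and $\prbox$ in $\ncalc$ thus reduce to membership in the $\g{\axs}$-language restricted to this two-letter alphabet, which by the grammar identification used in the $(\Leftarrow)$ direction agrees with the corresponding $\g{\axsii}$-language used by $\nikam$. The main obstacle is precisely this careful matching of $\g{\axs}$ and $\g{\axsii}$, together with the need to rule out the bottom-up introduction of ``foreign'' nestings $(\chary)[\cdot]$ with $\chary \neq \charx$; both points are dispatched by a direct unfolding of the definition of $\charx$-extension.
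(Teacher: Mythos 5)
Your proposal is correct and follows essentially the same route as the paper: the backward direction is immediate because $\ncalc$ extends $\nikam$, and the forward direction uses the sub-formula property to rule out rules and nestings indexed by characters other than $\charx$, together with the observation that the $\charx$-extension condition makes the grammars (and hence the side conditions of $\prdia$ and $\prbox$) coincide. Your version merely makes explicit, via the induction on height and the local analysis of nesting-introducing rules, what the paper's proof states more briefly.
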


\begin{proof} First, we note that if $A$ is provable in $\nikam$, then $A$ is provable in $\ncalc$ as the latter calculus is an extension of the former. Thus, the backward implication is trivial. We therefore argue that if $A$ is provable in $\ncalc$, then $A$ is provable in $\nikam$. 

Let $\prf$ be a proof of $A$ in $\ncalc$. By the subformula property of $\ncalc$, it follows that if any rule $\ydiain$, $\yboxout$, $(p_{\langle \chary \rangle})$, or $(p_{[\chary]})$ is applied in $\prf$ with $\chary \in \albet \setminus \{\charx\}$, then either $\ydia$ or $\ybox$ must occur in $A$. This contradicts the fact that $A$ is an $\charx$-formula. Hence, $\prf$ only consists of rules that exist in $\nikam$. Furthermore, since $\ikma$ is an $\charx$-extension of $\ikam$, we know that $\ddr \in \ncalc$ \iffi $\ddr \in \nikam$. In addition, the side conditions on $\prdia$ and $\prbox$ will be identical in $\ncalc$ and $\nikam$ as each Horn-Scott-Lemmon axiom $(\dia^{n} \Box A \iimp \Box^{k} A) \land (\dia^{k} A \iimp \Box^{n} \dia A)$ and intuitionistic path axiom $(\diap{\conv{\charx}}^{n} \diap{\charx}^{k} A \iimp \diap{\charx} A) \land (\boxp{\charx} A \iimp \boxp{\conv{\charx}}^{n} \boxp{\charx}^{k} A)$ give rise to the same set of production rules. We may conclude that $\prf$ is a proof of $A$ in $\nikam$.
\end{proof}

\begin{corollary}
Let $\albet$ be an alphabet with $\charx \in \albet$ and $\ikma$ be an $\charx$-extension of $\ikam$. Then, $\ikma$ is an $\charx$-conservative extension of $\ikam$.
\end{corollary}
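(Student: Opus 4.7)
The plan is to chain together the soundness and completeness results for both nested calculi with the Separation Theorem (\thm~\ref{thm:separation}), which is really the heart of the argument. The corollary essentially drops out once we have the separation property available, so the work has already been done.

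First, I would fix an $\charx$-formula $A$ and assume that $A$ is a theorem of $\ikma$. I want to conclude that $A \in \ikam$. To bridge the two logics, I would route the argument through the nested calculi: by \thm~\ref{thm:nested-sound-complete} (soundness and completeness of $\ncalc$), $A$ being a theorem of $\ikma$ is equivalent to $A$ being $(\albet,\axs)$-valid, which is equivalent to $A$ being derivable in $\ncalc$. So from the assumption I extract a proof $\prf$ of $A$ in $\ncalc$.

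Next I would apply \thm~\ref{thm:separation} to $\prf$, which gives a proof of $A$ in $\nikam$. This is the step that does the actual work, since it requires that only $\xdia$- and $\xbox$-rules appear in $\prf$ (guaranteed by the subformula property together with the fact that $A$ is an $\charx$-formula) and that the seriality rule and the side conditions on $\prdia$, $\prbox$ line up correctly between the two calculi (guaranteed by the definition of $\charx$-extension).

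Finally, I would invoke \thm~\ref{thm:modal-nested-sound-complete} (soundness and completeness of $\nikam$) to conclude from the existence of an $\nikam$-proof of $A$ that $A \in \ikam$, as desired. There is no substantive obstacle here, since the Separation Theorem has already absorbed all the delicate reasoning; the corollary is really just the semantic/axiomatic repackaging of that theorem via the two soundness-completeness results.
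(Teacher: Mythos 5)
Your proposal is correct and follows exactly the paper's own argument: theoremhood in $\ikma$ gives derivability in $\ncalc$ by \thm~\ref{thm:nested-sound-complete}, the Separation Theorem (\thm~\ref{thm:separation}) transfers the proof to $\nikam$, and \thm~\ref{thm:modal-nested-sound-complete} then yields $A \in \ikam$. Your added remarks about where the real work lies (in the separation step) match the paper's structure as well.
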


\begin{proof} Let $A$  be a theorem of $\ikma$. Then, $A$ is provable in $\ncalc$ by \thm~\ref{thm:nested-sound-complete}, from which it follows that $A$ is provable in $\nikam$ by \thm~\ref{thm:separation} above. Hence, $A$ is a theorem of $\ikam$ by \thm~\ref{thm:modal-nested-sound-complete}.
\end{proof}

\subsection{General Undecidability}

 We now establish the undecidability of the \emph{general validity problem} over the class of intuitionistic grammar logics. In other words, we show that given an arbitrary intuitionistic grammar logic $\ikm(\albet,\axs)$ and an arbitrary formula $A$, it is undecidable to determine if $\Vdash^{\albet}_{\axs} A$. We prove this result by giving a (proof-theoretic) reduction from \emph{classical context-free grammar logics}, for which it is known that determining the (in)validity of a formula for an arbitrary logic is undecidable~\cite{BalGioMar98}. Hence, we introduce classical context-free grammar logics, which we usually refer to as \emph{classical grammar logics} for simplicity, and also introduce their refined labeled systems~\cite[p.~98]{Lyo21thesis}, which are based on the nested systems of Tiu et al.~\cite{TiuIanGor12}.
 
 Classical grammar logics utilize a language similar to their intuitionistic counterparts, but where formulae are in negation normal form. We define this language, denoted $\langc{\albet}$, via the following grammar in BNF:
$$
A ::= p \ | \ \neg p \ | \ A \lor A \ | \ A \land A \ | \ \xdia A \ | \ \xbox A
$$
 where $p \in \prop$ and $x \in \albet$. Although negation is restricted to propositional atoms in $\langc{\albet}$, we can recursively define the negation $\neg A$ of an arbitrary formula $A \in \langc{\albet}$ as follows:
\begin{multicols}{2}
\begin{itemize}

\item $\neg p := \neg p$ 

\item $\neg (B \lor C) := \neg B \land \neg C$

\item $\neg \xdia B := \xbox \neg B$

\item $\neg \neg p := p$

\item $\neg (B \land C) := \neg B \lor \neg C$

\item $\neg \xbox B := \xdia \neg B$

\end{itemize}
\end{multicols}
\noindent
 We define $\bot := p \land \neg p$ for a fixed $p \in \prop$ and $A \iimp B := \neg A \lor B$. Therefore, we may assume that $\lang{\albet}$ and $\langc{\albet}$ use the same signature since every logical connective in one language occurs or can be defined in the other.
 
\begin{definition}[Classical Grammar Logic] Let $\albet$ be an alphabet. We define the base classical grammar logic $\km(\albet)$ to be the smallest set of formulae from $\langc{\albet}$ closed under substitutions of the following axioms and applications of the following inference rules. We note that we have an axiom and inference rule for each $\charx \in \albet$.
\begin{multicols}{2}
\begin{itemize}

\item[A0] Any set of axioms for propositional classical logic

\item[A1] $\xbox (A \iimp B) \iimp (\xbox A \iimp \xbox B)$

\item[A2] $A \iimp \xbox \xdiac A$

\item[R0] \AxiomC{$A$}\AxiomC{$A \iimp B$}\RightLabel{(mp)}\BinaryInfC{$B$}\DisplayProof

\item[R1] \AxiomC{$A$}\RightLabel{(nec)}\UnaryInfC{$\xbox A$}\DisplayProof

\end{itemize}
\end{multicols}
 We also consider extensions of $\km(\albet)$ with sets $\axs$ of seriality axioms $\D = \xbox A \iimp \xdia A$ and \emph{path axioms} $\langle \charx_{1} \rangle \cdots \langle \charx_{n} \rangle A \iimp \xdia A$. We define the classical grammar logic $\km(\albet,\axs)$ to  be the smallest set of formulae closed under substitutions of the axioms A0--A2 and $\axs$, and closed under the inference rules R0 and R1. Last, for an intuitionistic grammar logic $\ikma$ we define its corresponding classical grammar logic to be $\km(\albet',\axs')$ where (1) $\albet = \albet'$, (2) $\D \in \axs$ \iffi $\D \in \axs'$, and (3) $(\diap{\charx_{1}} \cdots \diap{\charx_{n}} A \iimp \diap{\charx} A) \land (\boxp{\charx} A \iimp \boxp{\charx_{1}} \cdots \boxp{\charx_{n}} A) \in \axs$ \iffi $\langle \charx_{1} \rangle \cdots \langle \charx_{n} \rangle A \iimp \xdia A \in \axs'$. For the remainder of the section, when we refer to $\ikma$ and $\km(\albet,\axs)$, we assume that the latter is the corresponding classical grammar logic of the former.
 \end{definition}
 
 For those interested in the semantics of classical (context-free) grammar logics, consult~\cite{BalGioMar98,CerPen88}.
 
 The refined labeled systems for classical grammar logics are displayed in \fig~\ref{fig:classical-calculus}, and employ labeled sequents of the form $\rel \sar \Gamma$, where $\rel$ is a multiset of relational atoms and $\Gamma$ is a multiset of labeled formulae of the form $w : A$ with $w \in \lab$ and $A \in \langc{\albet}$. We define $\neg \Gamma = w_{1} : \neg A_{1}, \ldots, w_{n} : \neg A_{n}$ for $\Gamma = w_{1} : A_{1}, \ldots, w_{n} : A_{n}$. Furthermore, we remark that the weakening right rule $\wkr$ and classical cut rule $\ccut$ (shown in \fig~\ref{fig:classical-calculus}) are admissible in each calculus $\rccalc$~\cite[\cor~1]{Lyo21thesis}. We also define the grammar $\g{\axs}$ that parameterizes the $\xdiaru$ rule as:
\begin{center}
 $(\xdia \pto \langle \charx_{1} \rangle \cdots \langle \charx_{n} \rangle), (\xdiac \pto \langle \conv{\charx}_{n} \rangle \cdots \langle \conv{\charx}_{1} \rangle) \in \g{\axs}$ \iffi $\langle \charx_{1} \rangle \cdots \langle \charx_{n} \rangle A \iimp \xdia A \in \axs$. 
\end{center}
 Observe that if $\km(\albet,\axs)$ corresponds to $\ikma$, then both logics generate the same grammar $\g{\axs}$ (see \dfn~\ref{def:grammar}).
 

\begin{figure}[t]
\noindent\hrule

\begin{center}
\begin{tabular}{c @{\hskip 1em} c @{\hskip 1em} c} 
\AxiomC{}
\RightLabel{$\idru$}
\UnaryInfC{$\rel \sar \Gamma, w : p, w : \neg p$}
\DisplayProof

&

\AxiomC{$\rel \sar \Gamma, w : A, w : B$}
\RightLabel{$\disru$}
\UnaryInfC{$\rel \sar \Gamma, w : A \lor B$}
\DisplayProof

&

\AxiomC{$\rel, w R_{\charx} u \sar \Gamma$}
\RightLabel{$\ddr^{\dag}$}
\UnaryInfC{$\rel \sar \Gamma$}
\DisplayProof
\end{tabular}
\end{center}

\begin{center}
\begin{tabular}{c @{\hskip 1em} c @{\hskip 1em} c}
\AxiomC{$\rel \sar \Gamma, w : A$}
\AxiomC{$\rel \sar \Gamma, w : B$}
\RightLabel{$\conru$}
\BinaryInfC{$\rel \sar \Gamma, w : A \land B$}
\DisplayProof

&

\AxiomC{$\rel, w R_{\charx} u \sar \Gamma, u : A$}
\RightLabel{$\xboxru^{\dag}$}
\UnaryInfC{$\rel \sar \Gamma, w : \xbox A$}
\DisplayProof
\end{tabular}
\end{center}

\begin{center}
\AxiomC{$\rel \sar \Gamma, w : \xdia A, u : A$}
\RightLabel{$\xdiaru\textit{ only if }\exists \ppath(w,u) \in \prgr{\rel} (\stra_{\ppath}(w,u) \in \glang(\charx))$}
\UnaryInfC{$\rel \sar \Gamma, w : \xdia A$}
\DisplayProof
\end{center}

\begin{center}
\begin{tabular}{c @{\hskip 1em} c}
\AxiomC{$\rel \sar \Gamma$}
\RightLabel{$\wkr$}
\UnaryInfC{$\rel \sar \Gamma, w : A$}
\DisplayProof

&

\AxiomC{$\rel \sar \Gamma, w : A$}
\AxiomC{$\rel \sar \Gamma, w : \neg A$}
\RightLabel{$\ccut$}
\BinaryInfC{$\rel \sar \Gamma$}
\DisplayProof
\end{tabular}
\end{center}

\hrule
\caption{The refined labeled calculus $\rccalc$ for the classical grammar logic $\km(\albet,\axs)$~\cite[p.~98]{Lyo21thesis}. We have a copy of $\xdiaru$ and $\xboxru$ for each $\charx \in \albet$, and $\ddr$ occurs in a calculus $\rccalc$ \iffi $\D \in \axs$. The side condition $\dag$ states the rule is applicable only if $u$ is fresh.}
\label{fig:classical-calculus}
\end{figure}

Our reduction relies on a variant of the well-known \emph{double-negation translation}, attributed to G\"odel, Gentzen, and Kolmogorov~\cite{Bus98}. We define a modal version of the translation below, and utilize it in a sequence of subsequent lemmata that are ultimately used to confirm our undecidability result.

\begin{definition}[Double-Negation Translation] We recursively define the \emph{double-negation translation} over the set of formulae in $\lang{\albet} \cup \langc{\albet}$ as follows:
\begin{multicols}{2}
\begin{itemize}

\item $\ddn{p} = \neg \neg p$

\item $\ddn{\neg p} = \neg \neg \neg p$

\item $\ddn{\bot} = \neg \neg \bot$

\item $\ddn{(A \lor B)} = \neg(\neg\ddn{A} \land \neg\ddn{B})$

\item $\ddn{(A \land B)} = \ddn{A} \land \ddn{B}$

\item $\ddn{(A \iimp B)} = \ddn{A} \iimp \ddn{B}$

\item $\ddn{(\xdia A)} = \neg \xbox \neg \ddn{A}$

\item $\ddn{(\xbox A)} = \xbox \ddn{A}$

\end{itemize}
\end{multicols}
\end{definition}

\begin{lemma}\label{lem:double-neg-implies-ddn-trans}
For any $A \in \lang{\albet}$, $w : \neg \neg \ddn{A} \sar w : \ddn{A}$ is provable in $\rcalc$.
\end{lemma}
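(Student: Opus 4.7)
The plan is to prove the lemma by induction on the complexity of $A$. Let us call a formula $C \in \lang{\albet}$ \emph{stable} \ifandonlyif $w : \neg \neg C \sar w : C$ is derivable in $\rcalc$ for every label $w \in \lab$; the goal then reduces to showing that $\ddn{A}$ is stable for every $A \in \lang{\albet}$.

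First I would establish four sufficient conditions for stability that together cover every possible shape of $\ddn{A}$: (i) every formula of the form $\neg B$ is stable; (ii) if $C$ and $D$ are stable, then $C \land D$ is stable; (iii) if $D$ is stable, then $C \iimp D$ is stable; and (iv) if $C$ is stable, then $\xbox C$ is stable. Inspection of the translation then closes the induction: when $A \in \{p, \neg p, \bot\}$, or $A = B \lor C$, or $A = \xdia B$, the formula $\ddn{A}$ is syntactically of the form $\neg E$, so (i) applies; when $A = B \land C$, $A = B \iimp C$, or $A = \xbox B$, then $\ddn{A}$ is respectively $\ddn{B} \land \ddn{C}$, $\ddn{B} \iimp \ddn{C}$, or $\xbox \ddn{B}$, and (ii)--(iv) combined with the inductive hypothesis yield the desired result.

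Facts (i)--(iii) are routine derivations. For (i), one applies $\iimpr$ and then $\iimpl$ on $w : \neg \neg \neg B$, uses a second $\iimpr$ in the left premise, and closes the branches via \lem~\ref{lem:generalized-id} and $\botl$. Facts (ii) and (iii) follow the same template, using $\conl$/$\conr$ together with the admissibility of cut (\cor~\ref{cor:properties-rcalc}) to splice in the inductive hypothesis for the components. The main obstacle is fact (iv), where the modal structure forces the introduction of a fresh label. Starting from $w : \neg \neg \xbox C \sar w : \xbox C$, I would apply $\xboxr$ to obtain $wR_{\charx}u, w : \neg \neg \xbox C \sar u : C$ for some fresh $u$; invoking cut against the inductive hypothesis (using the hp-admissible $\sub$ rule from \cor~\ref{cor:properties-rcalc} to rename the hypothesis to act on $u$), it suffices to derive $wR_{\charx}u, w : \neg \neg \xbox C \sar u : \neg \neg C$. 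Applying $\iimpr$, $\iimpl$ on $w : \neg \neg \xbox C$, and a further $\iimpr$ produces a sequent containing $w : \xbox C$ in the antecedent with target $w : \bot$. At this point the propagation rule $\prbox$ is invoked along the single-edge path $w, \charx, u$ in $\prgr{\rel}$, whose string $\charx$ lies in $\glang(\charx)$ by the trivial length-zero derivation $\charx \dr \charx$; this yields $u : C$ in the antecedent, and a final $\iimpl$ on $u : C \iimp \bot$ closes both branches via \lem~\ref{lem:generalized-id} and $\botl$. The key observation that makes the argument work is that the fresh label $u$ introduced by $\xboxr$ supplies exactly the relational atom $wR_{\charx}u$ needed to satisfy the propagation side condition of $\prbox$, so no further structural manipulation is required to align the two side conditions.
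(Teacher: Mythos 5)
Your proof is correct and follows essentially the same route as the paper's: an induction on the complexity of $A$ whose only nontrivial case, $\xbox$, is discharged by exactly the same derivation ($\xboxr$ at the bottom, a cut against the inductive hypothesis at the fresh label, the $\iimpr$/$\iimpl$ unfolding of the double negation, and a $\prbox$ propagation along the single edge $wR_{\charx}u$, which the paper writes as $\xboxl$ via Remark~\ref{rem:diar-boxl-propagation-rule-instances}). Your packaging of the remaining cases into the four stability closure conditions (i)--(iv) is a clean, systematic way of organizing the cases the paper dismisses as ``simple or similar.''
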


\begin{proof} The lemma is shown by induction on the complexity of $A$. We show the case where $A$ is of the form $\xbox B$, and omit the remaining cases as they are simple or similar. In the proof below, we invoke the admissibility of $\cut$ in $\rcalc$ (see \cor~\ref{cor:properties-rcalc}) and note that the top sequent in $\prf_{2}$ is provable by \lem~\ref{lem:generalized-id} and \thm~\ref{thm:calc-to-rcalc}.
\begin{flushleft}
\begin{tabular}{c @{\hskip 1em} c @{\hskip 1em} c}
$\prf_{1}$

&

$=$

&

\AxiomC{}
\RightLabel{$\botl$}
\UnaryInfC{$wR_{\charx}u, w : \neg \neg \xbox \ddn{B}, u : \neg \ddn{B}, w : \bot \sar u :\bot$}
\DisplayProof
\end{tabular}
\end{flushleft}

\begin{flushleft}
\begin{tabular}{c @{\hskip 1em} c @{\hskip 1em} c}
$\prf_{2}$

&

$=$

&

\AxiomC{$wR_{\charx}u, w : \neg \neg \xbox \ddn{B}, u : \neg \ddn{B}, w : \xbox \ddn{B}, u : \ddn{B} \sar w : \bot$}
\RightLabel{$\xboxl$}
\UnaryInfC{$wR_{\charx}u, w : \neg \neg \xbox \ddn{B}, u : \neg \ddn{B}, w : \xbox \ddn{B} \sar w : \bot$}
\RightLabel{$\iimpr$}
\UnaryInfC{$wR_{\charx}u, w : \neg \neg \xbox \ddn{B}, u : \neg \ddn{B} \sar w : \neg \xbox \ddn{B}$}
\DisplayProof
\end{tabular}
\end{flushleft}

\begin{center}
\AxiomC{$\prf_{1}$}

\AxiomC{$\prf_{2}$}

\BinaryInfC{$wR_{\charx}u, w : \neg \neg \xbox \ddn{B}, u : \neg \ddn{B} \sar u : \bot$}
\RightLabel{$\iimpr$}
\UnaryInfC{$wR_{\charx}u, w : \neg \neg \xbox \ddn{B} \sar u : \neg \neg \ddn{B}$}

\AxiomC{$u : \neg \neg \ddn{B} \sar u :  \ddn{B}$}
\RightLabel{$\cut$}
\BinaryInfC{$wR_{\charx}u, w : \neg \neg \xbox \ddn{B} \sar u : \ddn{B}$}
\RightLabel{$\xboxr$}
\UnaryInfC{$w : \neg \neg \xbox \ddn{B} \sar w : \xbox \ddn{B}$}
\DisplayProof
\end{center}
\end{proof}

\begin{lemma}\label{lem:dn-classical-to-int}
If $\rel \sar \Gamma$ is provable in $\rccalc$, then $\rel, \neg \ddn{\Gamma} \sar w : \bot$ is provable in $\rcalc$.
\end{lemma}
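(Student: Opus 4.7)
The plan is to proceed by induction on the height of the given derivation of $\rel \sar \Gamma$ in $\rccalc$, simulating each classical rule by a derivation in $\rcalc$ after applying the double-negation translation. The overarching strategy is that, for each rule, one derives an intermediate sequent in $\rcalc$ from which the ``negated-translated'' principal formula $w : \neg\ddn{C}$ (which has the form $w : \neg\neg C'$) can be combined via $\iimpl$ (with $\botl$ handling the right premise) to conclude $w : \bot$.

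For the base case ($\idru$), the translated sequent $\rel, \neg\ddn{\Gamma}, w : \neg\neg\neg p, w : \neg\neg\neg\neg p \sar w : \bot$ is immediately derivable by applying $\iimpl$ to $w : \neg\neg\neg\neg p$ (viewed as $\neg\neg\neg p \iimp \bot$), with the two premises discharged by \lem~\ref{lem:generalized-id} and $\botl$.

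For the inductive step, the propositional rules $\disru$ and $\conru$ follow the uniform pattern described above: from the IH(s), hp-admissible rules together with $\conl$ and $\iimpr$ allow one to derive an intermediate sequent $\rel, \neg\ddn{\Gamma} \sar w : \neg C'$, which then combines with the principal assumption $w : \neg\neg C'$ via $\iimpl$ and $\botl$ to close the case. For $\conru$ in particular, the two IHs must be glued together using the admissibility of $\cut$ in $\rcalc$ (\cor~\ref{cor:properties-rcalc}). The structural rule $\ddr$ simulates itself, as the same rule exists in $\rcalc$ under the identical side condition.

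The crux of the argument lies in the modal cases. For $\xboxru$, the IH yields $\rel, w R_{\charx} u, \neg\ddn{\Gamma}, u : \neg\ddn{A} \sar w : \bot$ with $u$ fresh; an hp-admissible $\botr$ (producing $u : \bot$ on the right) followed by $\iimpr$ gives $\rel, w R_{\charx} u, \neg\ddn{\Gamma} \sar u : \neg\neg\ddn{A}$, and invoking \lem~\ref{lem:double-neg-implies-ddn-trans} through admissible $\cut$ strips the double negation to obtain $\rel, w R_{\charx} u, \neg\ddn{\Gamma} \sar u : \ddn{A}$; the rule $\xboxr$ (still applicable since $u$ remains fresh) then yields $\rel, \neg\ddn{\Gamma} \sar w : \xbox\ddn{A}$, which combines with $w : \neg\xbox\ddn{A}$ via $\iimpl$ to close the case. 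For the propagation rule $\xdiaru$, the IH is $\rel, \neg\ddn{\Gamma}, w : \neg\neg\xbox\neg\ddn{A}, u : \neg\ddn{A} \sar w : \bot$; weakening in $w : \xbox\neg\ddn{A}$ on the left sets up a bottom-up application of $\prbox$ (whose side condition coincides with the propagation-path condition already supplied by $\xdiaru$), after which $\iimpr$ followed by $\iimpl$ against $w : \neg\neg\xbox\neg\ddn{A}$ completes the case. The main obstacle is the $\xboxru$ case: stripping the double negation so as to match the form required by $\xboxr$ depends critically on \lem~\ref{lem:double-neg-implies-ddn-trans}, which is the pivotal fact making the double-negation translation compatible with the intuitionistic modal calculus.
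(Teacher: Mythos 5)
Your proposal is correct and follows essentially the same route as the paper's proof: induction on the height of the $\rccalc$-derivation, closing the base case with $\iimpl$ against $w:\neg^{4}p$ (using \lem~\ref{lem:generalized-id} and $\botl$), and handling the pivotal $\xboxru$ case by applying $\iimpr$, cutting against \lem~\ref{lem:double-neg-implies-ddn-trans} to strip the double negation, applying $\xboxr$, and finishing with $\iimpl$. The only differences are cosmetic (you weaken in $w:\neg\xbox\ddn{A}$ at the end rather than immediately after the inductive hypothesis) and that you additionally spell out the $\xdiaru$/$\prbox$ case, which the paper leaves to the reader.
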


\begin{proof} We prove the result by induction on the height of the proof of $\rel \sar \Gamma$ in $\rccalc$. Below, we use $\neg^{n} A$ to denote $A$ prefixed with $n \in \mathbb{N}$ negation symbols.

\textit{Base case.} We may transform an instance of $\id$ in $\rccalc$ into a proof of the desired labeled sequent in $\rcalc$. Recall that in the intuitionistic setting $\neg A = A \iimp \bot$, thus explaining the $\iimpl$ inference applied in the output proof, whose left premise is provable by \lem~\ref{lem:generalized-id} and \thm~\ref{thm:calc-to-rcalc}.
\begin{flushleft}
\begin{tabular}{c c}
\AxiomC{}
\RightLabel{$\id$}
\UnaryInfC{$\rel \sar \Gamma, u : p, u : \neg p$}
\DisplayProof

&

$\leadsto$
\end{tabular}
\end{flushleft}
\begin{flushright}
\AxiomC{$\rel, \neg \ddn{\Gamma}, u : \neg^{3} p, u : \neg^{4} p \sar u : \neg^{3} p$}
\AxiomC{}
\RightLabel{$\botl$}
\UnaryInfC{$\rel, \neg \ddn{\Gamma}, u : \neg^{3} p, u : \bot \sar w : \bot$}
\RightLabel{$\iimpl$}
\BinaryInfC{$\rel, \neg \ddn{\Gamma}, u : \neg^{3} p, u : \neg^{4} p \sar w : \bot$}
\DisplayProof
\end{flushright}

\textit{Inductive step.} We show the $\xboxru$ case as the remaining cases are shown similarly. Below, the top sequent in $\prf$ is provable by IH and the right premise of $\cut$ is provable by \lem~\ref{lem:double-neg-implies-ddn-trans}.
\begin{flushleft}
\begin{tabular}{c @{\hskip 1em} c @{\hskip 1em} c}
$\prf$

&

$=$

&

\AxiomC{$\rel, u R_{\charx} v, \neg \ddn{\Gamma}, v : \neg \ddn{A} \sar v : \bot$}
\RightLabel{$\wkl$}
\UnaryInfC{$\rel, u R_{\charx} v, \neg \ddn{\Gamma}, u : \neg \xbox \ddn{A}, v : \neg \ddn{A} \sar v : \bot$}
\RightLabel{$\iimpr$}
\UnaryInfC{$\rel, u R_{\charx} v, \neg \ddn{\Gamma}, u : \neg \xbox \ddn{A} \sar v : \neg \neg \ddn{A}$}
\DisplayProof
\end{tabular}
\end{flushleft}

\begin{center}
\AxiomC{$\prf$}
\AxiomC{$v : \neg \neg \ddn{A} \sar v : \ddn{A}$}
\RightLabel{$\cut$}
\BinaryInfC{$\rel, u R_{\charx} v, \neg \ddn{\Gamma}, u : \neg \xbox \ddn{A} \sar v : \ddn{A}$}
\RightLabel{$\xboxr$}
\UnaryInfC{$\rel, \neg \ddn{\Gamma}, u : \neg \xbox \ddn{A} \sar u : \xbox \ddn{A}$}

\AxiomC{}
\UnaryInfC{$\rel, \neg \ddn{\Gamma},  u : \bot \sar w : \bot$}

\RightLabel{$\iimpl$}
\BinaryInfC{$\rel, \neg \ddn{\Gamma}, u : \neg \xbox \ddn{A} \sar w : \bot$}
\DisplayProof
\end{center}
\end{proof}

\begin{lemma}\label{lem:dn-implies-formula}
For any $A \in \langc{\albet}$, $\sar w : \neg \ddn{A}, w : A$ is provable in $\rccalc$.
\end{lemma}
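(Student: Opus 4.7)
The plan is to proceed by induction on the complexity of $A \in \langc{\albet}$. The first step is a small preliminary calculation: under the recursive definition of $\neg$ in $\langc{\albet}$ (which is involutive on non-atoms, making $\neg\neg p$ equal to $p$), the iterated negations produced by the double-negation translation collapse to NNF. Specifically, one computes $\neg \ddn{p} = \neg p$, $\neg \ddn{\neg p} = p$, $\neg \ddn{B \land C} = \neg \ddn{B} \lor \neg \ddn{C}$, $\neg \ddn{B \lor C} = \neg \ddn{B} \land \neg \ddn{C}$, $\neg \ddn{\xdia B} = \xbox \neg \ddn{B}$, and $\neg \ddn{\xbox B} = \xdia \neg \ddn{B}$. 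In each case, the principal connective of the NNF of $\neg \ddn{A}$ is precisely the classical dual of the principal connective of $A$, which is exactly what the rules of $\rccalc$ exploit.

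With this in hand, I would discharge the atomic base cases $A \in \{p, \neg p\}$ as immediate instances of $\idru$. For the Boolean cases, the plan is to apply $\conru$ or $\disru$ on the right to decompose both $\neg \ddn{A}$ and $A$ simultaneously, and to close each resulting leaf by invoking the inductive hypothesis on $B$ (resp. $C$) together with (height-preserving) weakening of the unused labeled formula. For the modal case $A = \xdia B$, where $\neg \ddn{A} = \xbox \neg \ddn{B}$, I would apply $\xboxru$ bottom-up to $w : \xbox \neg \ddn{B}$, introducing a fresh label $u$ and relational atom $w R_\charx u$, and then apply $\xdiaru$ to the occurrence $w : \xdia B$. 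The side condition on $\xdiaru$ is satisfied because the single-edge propagation path $\ppath(w,u) := w, \charx, u$ in $\prgr{w R_\charx u}$ has string $\stra_{\ppath}(w,u) = \charx \in \glang(\charx)$ trivially. The resulting premise $w R_\charx u \sar u : \neg \ddn{B}, u : B, w : \xdia B$ is then obtained from the inductive hypothesis $\sar u : \neg \ddn{B}, u : B$ by weakening. The case $A = \xbox B$ is handled dually: apply $\xboxru$ to $w : \xbox B$ and $\xdiaru$ to $w : \xdia \neg \ddn{B}$ along the same one-edge path.

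The main obstacle I foresee is the bookkeeping required to treat $\neg \ddn{A}$ as an NNF formula inside $\rccalc$, but this is purely a matter of unfolding the recursive definition of $\neg$ on non-atoms in $\langc{\albet}$. A secondary point is the admissibility of a generalised weakening rule---adding both relational atoms and labeled formulae---in $\rccalc$; this is a standard property of such labeled calculi and is verifiable by a straightforward height-preserving induction analogous to \lem~\ref{lem:wk-sub-admiss}.
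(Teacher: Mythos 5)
Your proposal is correct and follows exactly the route the paper intends: the paper's proof is simply ``a straightforward induction on the complexity of $A$'', and your argument is a fully worked-out version of that induction, with the key observation that $\neg\ddn{A}$ unfolds in $\langc{\albet}$ to the De Morgan dual of $A$ at each layer so that $\idru$, $\conru$/$\disru$, and the $\xboxru$/$\xdiaru$ pair (via the trivial one-edge propagation path) close each case. The only auxiliary fact you rely on, admissibility of weakening by relational atoms as well as labeled formulae in $\rccalc$, is indeed standard and holds as you describe.
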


\begin{proof} The result can be shown by a straightforward induction on the complexity of $A$.
\end{proof}

\begin{lemma}\label{lem:dn-int-to-classical}
If $\rel, \Gamma \sar w : A$ is provable in $\rcalc$, then $\rel \sar \neg \Gamma, w : A$ is provable in $\rccalc$.
\end{lemma}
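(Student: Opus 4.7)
The plan is to argue by induction on the height of the given $\rcalc$-derivation, performing a case analysis on the last rule applied and, in each case, constructing the required $\rccalc$-derivation from the IH. My main tools will be the admissibility of $\wkr$ and $\ccut$ in $\rccalc$ (cf.~\cite[\cor~1]{Lyo21thesis}) together with the unfoldings $\bot = p \land \neg p$ and $A \iimp B = \neg A \lor B$ of intuitionistic connectives into the classical signature, and the resulting identities $\neg (A \lor B) = \neg A \land \neg B$, $\neg (A \land B) = \neg A \lor \neg B$, $\neg (A \iimp B) = A \land \neg B$, $\neg \xdia A = \xbox \neg A$, and $\neg \xbox A = \xdia \neg A$. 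These identities ensure that every intuitionistic left-rule corresponds to a classical right-rule applied to the negated principal formula.

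Base cases are immediate: an instance of $\id$ maps to an instance of $\idru$, and $\botl$ is handled by applying $\disru$ to the $\idru$-instance $\sar w : \neg p, w : p$ and then weakening, since $\neg \bot = \neg p \lor p$. In the inductive step, the propositional cases $\conr$, $\conl$, $\disr$, $\disl$, and $\iimpr$ together with the structural case $\ddr$ each reduce to a single application of the matching classical rule after a De Morgan manipulation on the IH (for example, $\disl$ yields two IH-sequents with $w : \neg A$ and $w : \neg B$ on the right, which $\conru$ combines into $w : \neg A \land \neg B = w : \neg (A \lor B)$). The modal rules $\xdial$ and $\xboxr$ both translate to applications of $\xboxru$, with the $\xdial$ case relying on $\neg \xdia A = \xbox \neg A$; the eigenvariable freshness condition is preserved since IH leaves labels untouched. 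The propagation rules $\prdia$ and $\prbox$ translate to applications of the classical $\xdiaru$ after at most one $\wkr$, and their path-based side conditions transfer verbatim because both calculi are parameterised by the same $\axs$-grammar $\g{\axs}$.

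The delicate case is $\iimpl$. IH on its two premises yields $\rel \sar \neg \Gamma, w : A \land \neg B, w : A$ and $\rel \sar \neg \Gamma, w : \neg B, u : C$ (using $\neg (A \iimp B) = A \land \neg B$), while the goal is $\rel \sar \neg \Gamma, w : A \land \neg B, u : C$. My plan is to weaken each IH-sequent by the missing formula, apply $\conru$ to the pair $w : A$ and $w : \neg B$ in the succedent to build a second copy of $w : A \land \neg B$, and then contract the duplicate. Contraction on the right is not listed as primitive in $\rccalc$, but it is derivable from $\ccut$ together with a generalised $\idru$-instance $\sar w : \neg D, w : D$, which itself follows by a straightforward induction on the complexity of $D$.

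The main obstacle is precisely this $\iimpl$ case, since the intuitionistic rule keeps $w : A \iimp B$ in the antecedent of its left premise, so the corresponding IH already carries the full $\neg (A \iimp B) = A \land \neg B$ on the right alongside $w : A$, and no single classical rule rearranges this shape into the goal. Routing through $\conru$ followed by a classical contraction on the freshly produced copy of $w : A \land \neg B$ is the key step that bridges the gap.
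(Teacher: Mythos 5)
Your proposal is correct and follows essentially the same route as the paper: induction on the height of the $\rcalc$-derivation, translating each rule via the classical definitions $\bot = p \land \neg p$, $A \iimp B = \neg A \lor B$ and the resulting De Morgan identities, with $\wkr$ and the admissibility of $\ccut$ doing the structural work. The paper only displays the $\botl$, $\prdia$, and $\xdial$ cases and dismisses the rest as ``similar,'' so your explicit treatment of $\iimpl$ --- weakening, $\conru$, and a right contraction derived from $\ccut$ together with the generalised identity $\sar w : \neg D, w : D$ --- is a correct filling-in of a case the paper leaves implicit.
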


\begin{proof} We prove the lemma by induction on the height of the given proof.

\textit{Base case.} We show the $\botl$ case as the $\id$ case is trivial. Observe that in the output proof we use the definition of negation and $\bot$ as $p \land \neg p$ in $\langc{\albet}$ to show how the proof is translated from $\rcalc$ to $\rccalc$.
\begin{center}
\begin{tabular}{c @{\hskip 1em} c @{\hskip 1em} c}
\AxiomC{}
\RightLabel{$\botl$}
\UnaryInfC{$\rel, \Gamma, w : \bot \sar u : A$}
\DisplayProof

&

$\leadsto$

&

\AxiomC{}
\RightLabel{$\id$}
\UnaryInfC{$\rel \sar \neg \Gamma, w : \neg p, w : p, w : A$}
\RightLabel{$\disru$}
\UnaryInfC{$\rel \sar \neg \Gamma, w : \neg p \lor p, w : A$}
\RightLabel{$=$}
\UnaryInfC{$\rel \sar \neg \Gamma, w : \neg \bot, w : A$}
\DisplayProof
\end{tabular}
\end{center}

\textit{Inductive step.} We show the $\prdia$ and $\xdial$ cases as the remaining are similar. For the $\prdia$ case, we invoke the admissibility of $\wkr$ in $\rccalc$ and note that the side condition of $\prdia$ holds in the output proof as $\rel$ is unaffected in the translation.
\begin{center}
\begin{tabular}{c @{\hskip 1em} c @{\hskip 1em} c}
\AxiomC{$\rel, \Gamma \sar u : A$}
\RightLabel{$\prdia$}
\UnaryInfC{$\rel, \Gamma \sar w : \xdia A$}
\DisplayProof

&

$\leadsto$

&

\AxiomC{$\rel \sar \neg \Gamma, u : A$}
\RightLabel{$\wkr$}
\UnaryInfC{$\rel \sar \neg \Gamma, w : \xdia A, u : A$}
\RightLabel{$\xdiaru$}
\UnaryInfC{$\rel \sar \neg \Gamma, w : \xdia A$}
\DisplayProof
\end{tabular}
\end{center}

 In the $\xdial$ case, we apply the definition of negation in $\langc{\albet}$ to obtain the desired result.
\begin{center}
\begin{tabular}{c @{\hskip 1em} c @{\hskip 1em} c}
\AxiomC{$\rel, w R_{\charx} u, u : A, \Gamma \sar v : B$}
\RightLabel{$\xdial$}
\UnaryInfC{$\rel, \Gamma, w : \xdia A \sar v : B$}
\DisplayProof

&

$\leadsto$

&

\AxiomC{$\rel, w R_{\charx} u \sar \neg \Gamma, u : \neg A, v : B$}
\RightLabel{$\xboxru$}
\UnaryInfC{$\rel \sar \neg \Gamma, u : \xbox \neg A, v : B$}
\RightLabel{$=$}
\UnaryInfC{$\rel \sar \neg \Gamma, u : \neg \xdia A, v : B$}
\DisplayProof
\end{tabular}
\end{center}
\end{proof}

\begin{theorem}\label{thm:classical-intuitionistic-reduction}
$A \in \mathsf{K_{m}}(\albet,\axs)$ \iffi $\ddn{A} \in \ikm(\albet,\axs)$.
\end{theorem}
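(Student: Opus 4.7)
The plan is to derive both directions of the biconditional by chaining the four preceding lemmata with the soundness/completeness results for $\rcalc$ (\thm~\ref{thm:ikal-sound-complete}) and the analogous known result for $\rccalc$ (cited in the excerpt via~\cite{Lyo21thesis}). All technical machinery is already in place, so the argument amounts to choosing the right admissible rules to glue the lemmata together.

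For the direction ($\Rightarrow$), I would assume $A \in \km(\albet,\axs)$ and appeal to completeness of $\rccalc$ to obtain a proof of $\sar w : A$ in $\rccalc$. Then \lem~\ref{lem:dn-classical-to-int} immediately yields a proof of $w : \neg \ddn{A} \sar w : \bot$ in $\rcalc$. Since in $\lang{\albet}$ the negation $\neg B$ is definitionally $B \iimp \bot$, a single application of $\iimpr$ produces $\sar w : \neg \neg \ddn{A}$, and then cutting against the conclusion of \lem~\ref{lem:double-neg-implies-ddn-trans}, namely $w : \neg \neg \ddn{A} \sar w : \ddn{A}$, via the admissibility of $\cut$ in $\rcalc$ (\cor~\ref{cor:properties-rcalc}), gives $\sar w : \ddn{A}$ in $\rcalc$. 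Invoking \thm~\ref{thm:ikal-sound-complete} once more finishes this direction, yielding $\ddn{A} \in \ikm(\albet,\axs)$.

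For the direction ($\Leftarrow$), I would assume $\ddn{A} \in \ikm(\albet,\axs)$ and again invoke \thm~\ref{thm:ikal-sound-complete} to obtain a proof of $\sar w : \ddn{A}$ in $\rcalc$. \lem~\ref{lem:dn-int-to-classical} applied with empty $\Gamma$ then transfers this derivation to $\rccalc$, yielding $\sar w : \ddn{A}$ in $\rccalc$, to which the admissible $\wkr$ rule appends $w : A$, producing $\sar w : \ddn{A}, w : A$. Independently, \lem~\ref{lem:dn-implies-formula} supplies a $\rccalc$-proof of $\sar w : \neg \ddn{A}, w : A$. Applying the admissible classical cut rule $\ccut$ (cited in the excerpt as admissible in $\rccalc$) with cut formula $\ddn{A}$ merges these two derivations into a proof of $\sar w : A$ in $\rccalc$, whence $A \in \km(\albet,\axs)$ by completeness of $\rccalc$.

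I do not anticipate a substantial obstacle, since the lemmata have been engineered precisely for this reduction; the main points requiring care are verifying that the $\ccut$ instance on the non-atomic cut formula $\ddn{A}$ lies within the admissibility result imported from~\cite{Lyo21thesis}, and tracking the interplay between $\neg$ as a primitive of $\langc{\albet}$ versus its definition $B \iimp \bot$ in $\lang{\albet}$, to ensure that both the $\iimpr$ step and the shape of $\neg \ddn{A}$ in $\neg \Gamma$ occurring throughout Lemmas~\ref{lem:double-neg-implies-ddn-trans}--\ref{lem:dn-int-to-classical} line up correctly.
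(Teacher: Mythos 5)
Your proposal is correct and follows essentially the same route as the paper: both directions chain the same four lemmata with completeness of $\rccalc$ and $\rcalc$, the admissibility of $\cut$ in $\rcalc$, and the admissibility of $\wkr$ and $\ccut$ in $\rccalc$. The only cosmetic difference is that you make the $\iimpr$ and $\wkr$ steps explicit where the paper leaves them implicit (and the paper inserts one extra cut against $u : \bot \sar w : \bot$ to align labels after Lemma~\ref{lem:dn-classical-to-int}), but the argument is the same.
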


\begin{proof} For the forward direction, we assume that $A \in \mathsf{K_{m}}(\albet,\axs)$, which implies that $\sar w : A$ is provable in $\rccalc$ by completeness. By \lem~\ref{lem:dn-classical-to-int}, it follows that $w : \neg \ddn{A} \sar u : \bot$ is provable in $\rcalc$, and so, $w : \neg \ddn{A} \sar w : \bot$ is provable in $\rcalc$ as $u : \bot \sar w : \bot$ is an instance of $\botl$ and $\cut$ is admissible. Therefore, $\sar w : \neg \neg \ddn{A}$ is provable in $\rcalc$, meaning $\sar w : \ddn{A}$ is provable in $\rcalc$ by \lem ~\ref{lem:double-neg-implies-ddn-trans} and the admissibility of $\cut$.

For the backward direction, we assume that $\ddn{A} \in \ikm(\albet,\axs)$, which implies that $\sar w : \ddn{A}$ is provable in $\rcalc$ by completeness. By \lem~\ref{lem:dn-int-to-classical}, $\sar w : \ddn{A}$ is provable in $\rccalc$, showing that $\sar w : A$ is provable in $\rccalc$ since $\sar w : \neg \ddn{A}, w : A$ is provable in $\rccalc$ by \lem~\ref{lem:dn-implies-formula} and $\ccut$ is admissible.
\end{proof}

\begin{corollary}
It is undecidable whether a formula of an arbitrarily given intuitionistic grammar logic is a theorem.
\end{corollary}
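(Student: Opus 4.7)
The plan is to obtain undecidability of the general theoremhood problem for intuitionistic grammar logics as an immediate consequence of Theorem~\ref{thm:classical-intuitionistic-reduction}, by packaging that theorem as a many--one reduction from the corresponding problem for classical grammar logics, which is already known to be undecidable by del Cerro and Penttonen~\cite{CerPen88}.

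First, I would recall the undecidability result from~\cite{CerPen88}: there is no algorithm which, given an alphabet $\albet$, a finite set $\axs$ of seriality and path axioms, and a formula $A \in \langc{\albet}$, decides whether $A \in \km(\albet,\axs)$. Next, I would observe that the translation $(\albet,\axs,A) \mapsto (\albet,\axs,\ddn{A})$ used in Theorem~\ref{thm:classical-intuitionistic-reduction} is effectively computable: the double-negation translation $\ddn{\cdot}$ is defined by a straightforward structural recursion, and the correspondence between a classical grammar logic $\km(\albet,\axs)$ and its associated intuitionistic grammar logic $\ikma$ is a syntactic identification of the axiom sets (seriality for seriality, and each path axiom $\langle \charx_{1} \rangle \cdots \langle \charx_{n} \rangle A \iimp \xdia A$ with the corresponding intuitionistic path axiom $(\diap{\charx_{1}} \cdots \diap{\charx_{n}} A \iimp \diap{\charx} A) \land (\boxp{\charx} A \iimp \boxp{\charx_{1}} \cdots \boxp{\charx_{n}} A)$).

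I would then invoke Theorem~\ref{thm:classical-intuitionistic-reduction}, which asserts exactly that $A \in \km(\albet,\axs)$ if and only if $\ddn{A} \in \ikma$. This yields a computable many--one reduction from the general theoremhood problem for classical grammar logics to the general theoremhood problem for intuitionistic grammar logics. Thus, if the latter problem were decidable, we could decide the former by first computing $\ddn{A}$ and then querying the decision procedure, contradicting~\cite[\thm~3.1]{CerPen88}. Hence, the theoremhood problem for arbitrary intuitionistic grammar logics is undecidable.

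There is essentially no hard step remaining: all the genuine work has been done in establishing the proof-theoretic reduction (Theorem~\ref{thm:classical-intuitionistic-reduction}), which in turn leans on soundness and completeness of $\rcalc$ and $\rccalc$, admissibility of $\cut$ and $\ccut$, and the Lemmas~\ref{lem:double-neg-implies-ddn-trans}--\ref{lem:dn-int-to-classical} that let the double-negation translation bridge the two proof systems. The only minor thing to be careful about is confirming that $\ddn{\cdot}$ is computable and that the correspondence between $\axs$ in the classical setting and $\axs$ in the intuitionistic setting is uniform (both are finite syntactic objects of the same shape), so the reduction is indeed a Turing reduction between decision problems in the usual sense.
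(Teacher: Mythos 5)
Your proposal is correct and follows exactly the paper's own argument: the corollary is derived immediately from Theorem~\ref{thm:classical-intuitionistic-reduction} together with the undecidability of theoremhood for arbitrary classical grammar logics from~\cite[\thm~3.1]{CerPen88}, with your additional remarks on the computability of $\ddn{\cdot}$ and the uniformity of the axiom-set correspondence merely making explicit what the paper leaves implicit. (One trivial slip: what you construct is a many--one reduction, not just a Turing reduction, but either suffices here.)
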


\begin{proof} Follows from \thm~\ref{thm:classical-intuitionistic-reduction} and the undecidability of the validity problem for classical context-free grammar logics~\cite{BalGioMar98}.
\end{proof}

\subsection{Decidability}

 Despite the undecidability of the general validity problem, certain subclasses of intuitionistic grammar logics remain decidable. In this section, we identify such a subclass by relating it to the class of intuitionistic modal logics proven decidable by Simpson~\cite[\sect~7.3]{Sim94}. 
 Simpson established the decidability of $\ik$ extended with combinations of the seriality, reflexivity, and symmetry axioms by demonstrating that every validity of such a logic has a finite number of proofs within a certain form. Thus, decidability of a formula is obtained by searching through this finite set, and if a proof is found, then the formula is known to be valid, and if a proof is not found, then the formula is known to be invalid. 
 
 With basic modifications and extensions, Simpson's decidability method may be straightforwardly adapted to our setting of intuitionistic grammar logics. For an arbitrary alphabet $\albet$, we obtain the decidability of all intuitionistic grammar logics $\ikma$ such that 
$$
\axs \subseteq\{(\ydia^{n} A \iimp \xdia A) \land (\xbox A \iimp \ybox^{n} A) \ \vert \ \charx, \chary \in \albet, 0 \leq n \leq 1\} \cup \{\D \ \vert \ \charx \in \albet\}.
$$
 We refer to all such intuitionistic grammar logics as \emph{simple}. Moreover, we explicitly present the propagation rules in \fig~\ref{fig:propagation-rules-simple} that appear in the refined labeled systems for simple intuitionistic grammar logics. For each axiom of the form $(A \iimp \xdia A) \land (\xbox A \iimp A)$, $\rcalc$ includes the rules $\refi$ and $\refii$, and for each axiom of the form $(\ydia A \iimp \xdia A) \land (\xbox A \iimp \ybox A)$, $\rcalc$ includes the $\prdiai$, $\prdiaii$, $\prboxi$, and $\prboxii$ rules. For any simple intuitionistic grammar logic $\ikma$, all propagation rules of $\rcalc$ take the form of a (pair of) rule(s) in \fig~\ref{fig:propagation-rules-simple}.
 
 \begin{figure}[t]
\noindent\hrule

\begin{center}
\begin{tabular}{c c}
\AxiomC{$\rel, \Gamma \sar w : A$}
\RightLabel{$\refi$}
\UnaryInfC{$\rel, \Gamma \sar w : \xdia A$}
\DisplayProof

&

\AxiomC{$\rel, \Gamma, w : \xbox A, w : A \sar v : B$}
\RightLabel{$\refii$}
\UnaryInfC{$\rel, \Gamma, w : \xbox A \sar v : B$}
\DisplayProof
\end{tabular}
\end{center}

\begin{center}
\begin{tabular}{c c}
\AxiomC{$\rel, w R_{\chary} u, \Gamma \sar u : A$}
\RightLabel{$\prdiai$}
\UnaryInfC{$\rel, w R_{\chary} u, \Gamma \sar w : \xdia A$}
\DisplayProof

&

\AxiomC{$\rel, w R_{\chary} u, \Gamma, w : \xbox A, u : A \sar v : B$}
\RightLabel{$\prboxi$}
\UnaryInfC{$\rel, w R_{\chary} u, \Gamma, w : \xbox A \sar v : B$}
\DisplayProof
\end{tabular}
\end{center}
\begin{center}
\begin{tabular}{c c}
\AxiomC{$\rel, u R_{\conv{\chary}} w, \Gamma \sar u : A$}
\RightLabel{$\prdiaii$}
\UnaryInfC{$\rel, u R_{\conv{\chary}} w, \Gamma \sar w : \xdia A$}
\DisplayProof

&

\AxiomC{$\rel, u R_{\conv{\chary}} w, \Gamma, w : \xbox A, u : A \sar v : B$}
\RightLabel{$\prboxii$}
\UnaryInfC{$\rel, u R_{\conv{\chary}} w, \Gamma, w : \xbox A \sar v : B$}
\DisplayProof
\end{tabular}
\end{center}

\hrule
\caption{Propagation rules for simple intuitionistic grammar logics.}
\label{fig:propagation-rules-simple}
\end{figure}
 
 We note that the proof of decidability for this class of logics is \emph{almost identical} to Simpson's proof with the exception that basic modifications are required to handle the use of multiple modalities and converse modalities. By reading through Simpson's proof while taking our simple intuitionistic grammar logics into account, it is straightforward to verify the following theorem:
 
\begin{theorem}\label{thm:decid-simple-int-gram-logics}
The validity problem for simple intuitionistic grammar logics is decidable.
\end{theorem}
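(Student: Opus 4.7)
The plan is to adapt the decidability argument of Simpson~\cite[\sect~7.3]{Sim94} to our refined labeled calculi $\rcalc$ for simple intuitionistic grammar logics. Since the side conditions of the propagation rules in Figure~\ref{fig:propagation-rules-simple} only test propagation paths of length at most one, all reasoning is essentially local, matching the combinatorial structure that underlies Simpson's proof.

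First, I would fix a target formula $A$ and restrict attention to bottom-up proof search in $\rcalc$ starting from $\sar w : A$. By the shape of the rules, every labeled sequent encountered carries only labeled formulae whose underlying formula lies in a finite subformula closure of $A$. Combined with admissibility of contraction (Corollary~\ref{cor:properties-rcalc}), each label can be associated with a bounded \emph{type}, namely a subset of this closure, so only finitely many distinct types per label exist.

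Next, I would impose a blocking discipline on the rules that introduce fresh labels, that is $\iimpr$, $\xdial$, and $\xboxr$: whenever a fresh successor label would be created whose potential type is already realized by some ancestor, the rule application is disallowed. Together with the finite branching factor induced by finitely many candidate rule instances, this ensures that every proof-search branch is finite, and hence search terminates. Effectivity of each step follows because the propagation-rule side conditions reduce to inspecting a single edge of $\prgr{\rel}$.

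Finally, I would show completeness of the blocked search: if exhaustive search fails, a bi-relational $(\albet,\axs)$-model falsifying $A$ is extractable from a saturated failed branch, where worlds correspond to labels (modulo blocked pairs), $\leq$ reflects the subsumption of types, and each $R_{\charx}$ follows from the relational atoms in the sequent. The main obstacle is verifying conditions (F1), (F2), and (F3) of Definition~\ref{def:bi-relational-model} for this model: (F3) is delivered by the presence of converse-style axioms driving $\prdiaii$ and $\prboxii$, while (F1) and (F2) require that saturation at a label propagates appropriately to its $\leq$-successors through both forward and backward edges. Simpson handles this coordination in the mono-modal case via a careful choice of saturation conditions; extending his argument to simple intuitionistic grammar logics is essentially bookkeeping, since each relevant axiom only couples a single pair of one-step modalities, keeping the verifications finitary and local. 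Combining termination with soundness (Theorem~\ref{thm:ikal-sound-complete}) and the model-extraction argument then yields the decidability claim.
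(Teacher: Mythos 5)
Your overall strategy (adapt Simpson's method to $\rcalc$, exploit the locality of the propagation rules for simple logics, bound the information per label via the subformula closure) matches the paper's intent, but your argument for the completeness of the restricted search takes a different and, as written, unsupported route. The paper never extracts counter-models from failed branches. Instead it argues entirely within the space of proofs: it shows that every sequent occurring in a (pseudo-)derivation of $\rel, \Gamma \sar w : A$ can be assumed \emph{bounded} (labels have depth at most the modal depth of the end-sequent, and the modal depth of formulae decreases with label depth), defines a subsumption pre-order $\preceq$ on bounded sequents via morphisms, proves that the induced equivalence has finitely many classes, and shows that every derivable sequent has an \emph{irredundant} derivation (no sequent subsumed by one below it). Exhaustive search over irredundant bounded derivations then terminates, and the ``search fails implies invalid'' direction is discharged by the already-proved completeness of $\rcalc$ (Theorem~\ref{thm:ikal-sound-complete}), with no semantic construction required.

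The gap in your version is precisely the step you call ``essentially bookkeeping.'' Because you justify your blocking discipline by counter-model extraction rather than by a proof-transformation lemma, you owe a construction of a bi-relational $(\albet,\axs)$-model from a saturated failed branch. In $\rcalc$ the intuitionistic relation $\leq$ is not represented by any relational atom, so it must be synthesized from scratch, and conditions (F1) and (F2) of Definition~\ref{def:bi-relational-model} couple $\leq$ with every $R_{\charx}$ in both directions; verifying them for a relation defined by ``subsumption of types'' is exactly the hard part of semantic completeness for Simpson-style intuitionistic systems and is not handled in Simpson's decidability argument (his Section~7.3 is purely proof-theoretic). The difficulty is compounded by the fact that $\iimpl$ is invertible only in its right premise (Corollary~\ref{cor:properties-rcalc}), so a failed branch is not a saturated object in the usual tableau sense and the search must backtrack over the choice points this rule creates; your notion of ``saturated failed branch'' is therefore not well defined without further work. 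To repair the proof along the paper's lines, replace the model-extraction step by a lemma stating that any derivation can be transformed into one respecting your blocking discipline (the analogue of the paper's irredundancy lemma), and then conclude invalidity of unprovable sequents directly from Theorem~\ref{thm:ikal-sound-complete}.
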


 Due to the ease with which this proof is adapted to our setting, we omit it from the main text and provide a sketch of the proof in the appendix for the interested reader.

\section{Concluding Remarks}\label{sec:conclusion}

 In this paper we provided an in-depth study of intuitionistic grammar logics and their associated proof theory. We supplied this class of logics with labeled calculi, obtained from each logic's semantics, and showed that certain hp-admissibility and hp-invertibility results obtained for these systems with a proof of syntactic cut-elimination. Subsequently, we showed how to apply the structural refinement methodology to derive deductively equivalent `refined' labeled calculi, from which nested calculi were extracted. Moreover, these derivative systems were shown to exhibit favorable admissibility and invertibility properties just as their parent labeled systems. We then concluded by employing our refined labeled and nested systems in the establishment of conservativity and (un)decidability results.
 
 A few interesting open problems still remain. For instance, certain inference rules in our provided systems are not (fully) invertible (e.g. the $\iimpl$ rule), giving rise to the question of if variants of these systems can be produced which admit the complete invertibility of every inference rule. In relation to intuitionistic grammar logics more specifically, it could be worthwhile to investigate if such logics possess the Craig interpolation property, by adapting proof-theoretic methods of interpolation to our setting~\cite{FitKuz15,LyoTiuGorClo20}. Finally, as decidability was only recognized to hold for a relatively small class of intuitionistic grammar logics, it would be of interest to determine decidability for larger classes of such logics.\\

\noindent
\funding{This work was supported by the European Research Council Consolidator Grant [771779].} 



\bibliographystyle{elsarticle-num} 
\bibliography{bibliography}






\appendix

\section{Decidability Proof}\label{app:decid} 

 Recall that for a simple intuitionistic grammar logic $\ikma$ the set $\axs$ of axioms is defined as follows:
$$
\axs \subseteq\{(\ydia^{n} A \iimp \xdia A) \land (\xbox A \iimp \ybox^{n} A) \ \vert \ \charx, \chary \in \albet, 0 \leq n \leq 1\} \cup \{\D \ \vert \ \charx \in \albet\}.
$$
 For the remainder of the appendix, we assume that $\axs$ is defined as above. Furthermore, recall that for a simple intuitionistic grammar logic, the propagation rules from \fig~\ref{fig:propagation-rules-simple} are used in the refined labeled calculus $\rcalc$.

 We will repeat definitions and lemmata due to Simpson to give the reader intuition regarding the proof of decidability, but will omit various details as they can be found in Simpson's PhD thesis~\cite[\sect~7.3]{Sim94}.

 We recursively define the modal depth $\md{A}$ of a formula $A$ as follows: $\md{p} = \md{\bot} = 0$, $\md{B \odot C} = \max\{\md{B},\md{C}\}$ for $\odot \in \{\lor,\land,\iimp\}$, and $\md{\triangledown B} = \md{B} + 1$ for $\triangledown \in \{\xdia, \xbox \ \vert \ \charx \in \albet\}$. We also recursively define the set $\sufo{A}$ of subformulae of a formula $A$ as follows: $\sufo{p} = \{p\}$, $\sufo{\bot} = \bot$, $\sufo{B \odot C} = \{B \odot C\} \cup \sufo{B} \cup \sufo{C}$ for $\odot \in \{\lor,\land,\iimp\}$, and $\sufo{\triangledown B} = \{\triangledown B\} \cup \sufo{B}$ for $\triangledown \in \{\xdia, \xbox \ \vert \ \charx \in \albet\}$. For a multiset $\Gamma$ of labeled formulae, we define 
$$
\md{\Gamma} = \max\{\md{A} \ \vert \ w : A \in \Gamma \} \text{ and } \sufo{\Gamma} = \bigcup_{w : A \in \Gamma} \sufo{A}.
$$ 
 For a multiset $\rel$ of relational atoms, we define an \emph{$\rel$-extension} as follows: (1) $\rel$ is an $\rel$-extension, and (2) if $\rel'$ is an $\rel$-extension and $u \not \in \lab(\rel)$, then $\rel' \cup \{wRu\}$ is an $\rel$-extension. If $\rel'$ is an $\rel$-extension, then a label $u \in \lab(\rel')$ has depth $n \geq 0$ \iffi there exists a sequence $w_{0}R_{\charx_{1}}w_{1}, \ldots, w_{n-1}R_{\charx_{n}}w_{n} \in \rel'$ such that $w_{0} \in \lab(\rel)$, $w_{i} \in \lab(\rel') \setminus \lab(\rel)$, and $w_{n} = u$. The depth of a label is well-defined since every label has only one such accessing sequence by the definition of an $\rel$-extension. We define an $\rel$-extension $\rel'$ to be \emph{bounded} relative to a labeled sequent $\rel, \Gamma \sar w : A$ \iffi the depth of every label in $\rel'$ is less than or equal to $\md{\Gamma, w : A}$, and we define the \emph{bounded-restriction} of $\rel'$ relative to a labeled sequent $\rel, \Gamma \sar w : A$ to be the set of all relational atoms from $\rel'$ whose labels have a depth less than or equal to $\md{\Gamma, w : A}$. A labeled sequent $\rel', \Gamma' \sar w' : A'$ is defined to be \emph{bounded} relative to a labeled sequent $\rel, \Gamma \sar w : A$ \iffi (1) $\rel'$ is a bounded $\rel$-extension, and (2) if $u : B \in \Gamma', w' : A'$, then $B \in \sufo{\Gamma, w : A}$ and $\md{B} \leq \md{\Gamma, w : A} - n$ where $n$ is the depth of $u$ in $\rel'$. 
 
 We define a \emph{pseudo-derivation} of a labeled sequent $\rel, \Gamma \sar w : A$ in $\rcalc$ to be a derivation in $\rcalc$ from any finite collection of labeled sequents (which need not be instances of $\id$ or $\botl$), and we define its \emph{size} to be the number of inferences it contains. When a pseudo-derivation is derivable from labeled sequents that are instances of $\id$ or $\botl$, then it is a proof. Moreover, we stipulate that a pseudo-derivation $\prf$ of a labeled sequent $\rel, \Gamma \sar w : A$ is \emph{bounded} \iffi every labeled sequent in $\prf$ is bounded relative to $\rel, \Gamma \sar w : A$.

\begin{lemma} Let $\prf$ be a pseudo-derivation of $\rel, \Gamma \sar w : A$ from the labeled sequents in $\{\rel_{i}, \Gamma_{i} \sar w_{i} : A_{i} \ \vert \ 1 \leq i \leq n\}$ and let $d = \md{\Gamma, w : A}$. Then, the following holds:
\begin{enumerate}

\item For each $1 \leq i \leq n$, $\rel_{i}$ is an $\rel$-extension and if $u : B \in \Gamma_{i} \cup \{w_{i} : A_{i}\}$, then $u$ has a depth $n \leq d$, $B \in \sufo{\Gamma, w : A}$, and $\md{B} \leq d - n$.

\item There exists a bounded pseudo-derivation of $\rel, \Gamma \sar w : A$ from the labeled sequents in $\{\rel_{i}', \Gamma_{i}' \sar w_{i} : A_{i} \ | \ 1 \leq i \leq n\}$ where each $\rel_{i}'$ is the bounded-restriction of $\rel_{i}$.

\end{enumerate}
\end{lemma}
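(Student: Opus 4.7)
The plan is to prove both claims by induction on the size of $\prf$, handling part 1 first since it provides the key depth bound needed for part 2. Throughout, I write $d = \md{\Gamma, w : A}$ for the target bound.

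For part 1, the base case (size 0) is immediate: the pseudo-derivation consists of $\rel, \Gamma \sar w : A$ alone, so every label of the single premise lies at depth 0 in $\rel$ and every labeled formula $u : B$ has $B \in \sufo{\Gamma, w : A}$ with $\md{B} \leq d$ trivially. For the inductive step, I case-split on the last (lowest) rule of $\prf$ and verify the invariants are preserved when passing from the conclusion up to the premises. The only rules introducing fresh labels are $\xdial$, $\xboxr$, and $\ddr$; for $\xboxr$ (symmetrically $\xdial$), the fresh $u$ in the premise lies one deeper than $w$, and since the principal formula $\xbox B$ in the conclusion satisfies $\md{\xbox B} \leq d - \text{depth}(w)$ by IH, the new formula $B$ at $u$ satisfies $\md{B} = \md{\xbox B} - 1 \leq d - (\text{depth}(w)+1) = d - \text{depth}(u)$, exactly as required. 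The subformula condition is preserved because $B \in \sufo{\xbox B}$. For $\ddr$ no labeled formula is added, so only the $\rel$-extension clause must be checked, which is immediate from the freshness condition. The remaining logical and propagation rules introduce no new labels and only replace principal formulae with subformulae thereof, so both the depth and the subformula bounds carry over unchanged.

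For part 2, I would construct the bounded pseudo-derivation by applying bounded-restriction pointwise to every sequent appearing in $\prf$, then collapsing any rule applications that become trivial. The critical observation, enabled by part 1, is that \emph{every labeled formula} in $\prf$ already lies at depth $\leq d$; consequently, any rule application in $\prf$ whose active relational atom involves a label of depth $> d$ must be a $\ddr$ step, since every other rule of $\rcalc$ either introduces a labeled formula at its fresh label (ruled out by the modal-depth bound from part 1) or operates on a relational atom $wR_\chary u$ whose endpoints both carry labeled formulae (hence both at depth $\leq d$). Thus, after bounded-restriction, each logical and propagation rule application still has all its active formulae and relational atoms present and remains a valid instance, while each $\ddr$ step introducing a label beyond depth $d$ collapses to a trivial step (premise equals conclusion after pruning) which we simply delete. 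The resulting tree is a bounded pseudo-derivation of $\rel, \Gamma \sar w : A$ whose leaves are exactly the $\rel_i', \Gamma_i' \sar w_i : A_i$, since the $\Gamma_i$ and $w_i : A_i$ themselves lie within the depth bound and are therefore unchanged.

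The principal obstacle will be in part 2, verifying that the propagation rules $\prdiai$, $\prdiaii$, $\prboxi$, $\prboxii$, as well as $\xdial$ and $\xboxr$, remain applicable after bounded-restriction is taken. For each such rule, I need to confirm that the specific relational atom named in its schema is preserved; this follows because the rule's premise contains a labeled formula at every label involved in the active relational atom, forcing those labels to satisfy depth $\leq d$ by part 1 and hence forcing their connecting relational atom to survive the bounded-restriction. A secondary delicate point is formulating the induction cleanly: since a single $\ddr$ application in $\prf$ may sit above a whole subderivation working at depth $> d$, the IH must be applied to each maximal subderivation whose conclusion still lies within depth $d$, pruning the wasteful branches above in one go rather than rule-by-rule.
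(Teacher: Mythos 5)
Your proposal is correct and follows essentially the same route as the paper: induction on the size of the pseudo-derivation, checking rule by rule that the depth/subformula/modal-depth invariant propagates (with the fresh labels of $\xdial$, $\xboxr$ forced within depth $d$ because they carry formulae, and only $\ddr$ able to exceed it), and then obtaining the bounded pseudo-derivation by restricting relational atoms and observing that all non-$\ddr$ rule instances survive while deep $\ddr$ steps become vacuous. The paper merely writes out a single representative case ($\prdiaii$) and defers the rest to Simpson's Lemma 7.3.5, whereas you sketch the uniform argument explicitly; the substance is the same.
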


\begin{proof} The result is shown by induction on the size of the given pseudo-derivation $\prf$. We only show the $\prdiaii$ case of the inductive step as all cases are argued in an almost identical fashion as in~\cite[\lem~7.3.5]{Sim94}. 

 Suppose that $\prdiaii$ was applied at the top of $\prf$ as shown below.
\begin{center}
\AxiomC{$\rel', v R_{\conv{\chary}} u, \Gamma' \sar v : B$}
\RightLabel{$\prdiaii$}
\UnaryInfC{$\rel', v R_{\conv{\chary}} u, \Gamma' \sar u : \xdia B$}
\AxiomC{$\cdots$}
\noLine
\BinaryInfC{$\vdots$}
\noLine
\UnaryInfC{ }
\UnaryInfC{$\rel, \Gamma \sar w : A$}
\DisplayProof
\end{center}
 We know that $u$ occurs at a depth $n \leq d - 1$ because $\md{\xdia B} \leq d-1$ by the inductive hypothesis. This implies claim 1 since $v$ occurs at a depth $n-1 < n \leq d - 1$, $B \in \sufo{\Gamma, w : A}$, and $\md{B} \leq d - (n-1) \leq d - n$. For claim 2, we suppose that we have a bounded pseudo-derivation of $\rel'', \Gamma' \sar u : \xdia B$ as shown below, where $\rel''$ is the bounded-restriction of $\rel'$. Observe that a single application of $\prdiaii$ gives the desired bounded pseudo-derivation.
\begin{center}
\AxiomC{$\rel'', v R_{\conv{\chary}} u, \Gamma' \sar u : \xdia B$}
\AxiomC{$\cdots$}
\noLine
\BinaryInfC{$\vdots$}
\noLine
\UnaryInfC{ }
\UnaryInfC{$\rel, \Gamma \sar w : A$}
\DisplayProof
\end{center}
\end{proof}

 We now define a pre-order on all labeled sequents $\rel', \Gamma' \sar w' : A'$ which are bounded relative to a given labeled sequent $\rel, \Gamma \sar w : A$. This will be used to define an equivalence relation over the set of such sequents, partitioning the set into a finite number of equivalence classes, and ultimately permitting us to restrict the number of proofs of $\rel, \Gamma \sar w : A$ considered during proof-search to a finite number. For the remainder of the appendix, we fix the labeled sequent $\lseq = \rel, \Gamma \sar w : A$ and only consider labeled sequents bounded relative to this one.
 
 Let $\lseq_{0} = \rel_{0}, \Gamma_{0} \sar w_{0} : A_{0}$ and $\lseq_{1} = \rel_{1}, \Gamma_{1} \sar w_{1} : A_{1}$ be bounded relative to $\rel, \Gamma \sar w : A$. We define a \emph{morphism} from $\lseq_{0}$ to $\lseq_{1}$ to be a function $f$ such that (1) for all $u \in \lab(\lseq)$, $f(u) = u$, (2) if $u : B \in \Gamma_{0}$, then $f(u) \in \Gamma_{1}$, (3) $f(w_{0}) = w_{1}$, and (4) if $uR_{\charx}v \in \rel_{0}$, then $f(u)R_{\charx}f(v) \in \rel_{1}$. We then define the pre-order $\preceq$ as follows: $\lseq_{0} \preceq \lseq_{1}$ \iffi $A_{0} = A_{1}$ and there exists a morphism $f$ from $\lseq_{0}$ to $\lseq_{1}$. The equivalence relation $\cong$ is then defined as: $\lseq_{0} \cong \lseq_{1}$ \iffi $\lseq_{0} \preceq \lseq_{1}$ and $\lseq_{1} \preceq \lseq_{0}$.

\begin{lemma}
 The equivalence relation $\cong$ partitions the set of labeled sequents bounded relative to $\lseq$ into a finite number of classes.
\end{lemma}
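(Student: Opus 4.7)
The plan is to associate each $\cong$-equivalence class with a canonical finite labeled tree and then bound the number of such canonical forms. First I would fix $d = \md{\Gamma, w : A}$ and the finite set $F = \sufo{\Gamma, w : A}$, and let $C \subseteq \albet$ be the finite set of characters that occur within modal connectives in $F$ together with those for which the seriality rule $\ddr$ is present in $\rcalc$. An easy induction on the definition of $\rel$-extension shows that, without loss of generality, only characters in $C$ decorate the edges of bounded $\rel$-extensions relative to $\lseq$: new relational atoms arise either from modal rules (whose principal formula fixes the character in $C$) or from $\ddr$ (whose character is one for which $\D \in \axs$).

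Next I would encode every bounded labeled sequent $\lseq' = \rel', \Gamma' \sar w' : A'$ by the data consisting of (i) the RHS formula $A' \in F$, (ii) the marked RHS label, and (iii) for each $v \in \lab(\rel)$, the tree of new successors attached at $v$ in $\rel'$, where each edge carries a character from $C$, each node carries a multiset of formulas drawn from $F$, and the depth is at most $d$ (with $\md{B} \leq d - n$ for each formula sitting at depth $n$). Because a morphism $f$ fixes every label of $\lab(\rel)$, satisfies $f(w_0) = w_1$, preserves edge-characters, and sends each occurrence $u : B$ to an occurrence $f(u) : B$ in the target---while \emph{not} being required to be injective---the relation $\cong$ collapses sibling subtrees that are indistinguishable up to the \emph{set} (rather than multiset) of formulas they carry at each position.

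Then I would define the canonical representative of a $\cong$-class to be the tree in which the decoration at each node is a set of formulas from $F$, and in which no two children along the same character carry isomorphic subtrees. By induction on depth, the number of such canonical subtrees of depth at most $k$ is finite: at depth $0$ there are at most $2^{|F|}$ options, and the step from $k$ to $k+1$ picks a decoration from $2^{|F|}$ and, for each character of $C$, a subset of the (finitely many) canonical subtrees of depth at most $k$. Aggregating over the finite set $\lab(\rel)$, the finitely many possible RHS formulas, and the finitely many possible RHS-label positions yields a finite bound on the number of canonical forms.

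The main obstacle is verifying that two bounded sequents share a canonical form if and only if they are $\cong$-equivalent. The non-trivial direction is sufficiency: given identical canonical forms, one must exhibit morphisms in both directions by mapping each node of one sequent to a node of the other carrying a superset of its formulas and admitting a compatible child-structure, exploiting the non-injectivity of morphisms to collapse duplicates. Once this correspondence is established, sending each $\cong$-class to its canonical form gives an injection into a finite set, which is the desired conclusion.
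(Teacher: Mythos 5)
The paper does not actually spell out a proof of this lemma: it points to Simpson's Proposition~7.3.6 and remarks that the argument must be generalized to multiple modalities. Your canonical-form construction is therefore a genuinely worked-out, self-contained argument rather than a variant of something in the paper, and its skeleton is sound: the new labels of a bounded $\rel$-extension form trees of depth at most $\md{\Gamma, w : A}$ hanging off $\lab(\rel)$, with node decorations drawn from the finite set $\sufo{\Gamma, w : A}$; collapsing indistinguishable siblings yields a canonical form that is $\cong$-equivalent to the sequent you started from; and the depth induction correctly bounds the number of canonical forms. Your preliminary restriction of edge characters to a finite set $C$ is also doing real work: the paper's definition of an $\rel$-extension places no constraint on the character decorating a new relational atom, so for an infinite alphabet the lemma as literally stated needs exactly this restriction to the sequents that can arise in bounded pseudo-derivations.

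There is, however, one concrete error in how you close the argument. You assert that two bounded sequents are $\cong$-equivalent if and only if they share a canonical form, and you treat the ``only if'' direction as the easy one; it is false. A morphism only requires that each occurrence $u : B$ be sent to an occurrence $f(u) : B$, i.e.\ the formula set at $f(u)$ need only \emph{contain} that at $u$, so $\cong$ can identify siblings whose subtrees are related by morphisms without being isomorphic. Concretely (assuming $p, q \in \sufo{\Gamma, w : A}$ and $\md{\Gamma, w : A} \geq 1$ so that both sequents are bounded), compare $\rel, vR_{\charx}u_{1}, vR_{\charx}u_{2}, \Gamma, u_{1} : p, u_{2} : p, u_{2} : q \sar w : A$ with $\rel, vR_{\charx}u_{2}, \Gamma, u_{2} : p, u_{2} : q \sar w : A$ for some $v \in \lab(\rel)$: mapping $u_{1} \mapsto u_{2}$ and $u_{2} \mapsto u_{2}$ gives morphisms in both directions, so the two sequents are $\cong$-equivalent, yet your canonicalization keeps both children of $v$ in the first sequent (their subtrees, decorated $\{p\}$ and $\{p,q\}$, are not isomorphic) and hence produces distinct canonical forms. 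Thus ``send each $\cong$-class to its canonical form'' is not well defined and the injection argument does not stand. The repair is cheap and uses only what you have already established: every bounded sequent is $\cong$-equivalent to its own canonical form, and there are only finitely many canonical forms, so every $\cong$-class contains at least one canonical form and the number of classes is bounded by the number of canonical forms. Drop the biconditional and the injectivity claim, and the proof goes through.
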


\begin{proof} The proof is similar to the proof of \cite[Proposition~7.3.6]{Sim94}, but with the exception that one must generalize the arguments of Simpson to account for multiple modalities.
\end{proof}

 We now define an \emph{irredundant} pseudo-derivation $\prf$ to be a bounded pseudo-derivation relative to a labeled sequent $\rel, \Gamma \sar w : A$ such that no two labeled sequents $\lseq_{0} = \rel_{0}, \Gamma_{0} \sar w_{0} : A_{0}$ and $\lseq_{1} = \rel_{1}, \Gamma_{1} \sar w_{1} : A_{1}$ occur in $\prf$, with the former above the latter, such that $\lseq_{0} \preceq \lseq_{1}$.

\begin{lemma} Let $\lseq = \rel, \Gamma \sar w : A$ with $\lseq_{i} = \rel_{i}, \Gamma_{i} \sar w_{i} : A_{i}$ bounded relative to $\lseq$ for $i \in \{1,2\}$. Then,
\begin{enumerate}

\item If $\lseq_{0} \preceq \lseq_{1}$ and $\lseq_{0}$ has a bounded proof of size $n$ relative to $\lseq$, then $\lseq_{1}$ has a bounded proof of size $n$ relative to $\lseq$.

\item If $\lseq$ is derivable in $\rcalc$, then $\lseq$ has an irredundant proof.

\end{enumerate}
\end{lemma}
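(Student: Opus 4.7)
The plan is to prove claim 1 by induction on the size $n$ of the bounded derivation $\prf_{0}$ of $\lseq_{0}$, using the given morphism $f$ from $\lseq_{0}$ to $\lseq_{1}$ to systematically transport $\prf_{0}$ into a bounded derivation $\prf_{1}$ of $\lseq_{1}$ of the same size. For the base cases, if $\lseq_{0}$ is an instance of $\id$, then $A_{0}=p$ for some $p \in \prop$ with $w_{0}:p \in \Gamma_{0}$, and by the morphism conditions $f(w_{0})=w_{1}$ and $f(w_{0}):p \in \Gamma_{1}$, so $\lseq_{1}$ is also an instance of $\id$ (since $A_{1}=A_{0}=p$). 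The $\botl$ case is similar: some $u:\bot$ in the antecedent of $\lseq_{0}$ maps to $f(u):\bot$ in the antecedent of $\lseq_{1}$. Then claim 2 will follow by a minimization argument: by the previous lemma every derivable $\lseq$ has a bounded derivation; take one of minimal size, and use claim 1 to show it must be irredundant.

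For the inductive step of claim 1, I would make a case distinction on the last rule applied in $\prf_{0}$. For the propositional rules ($\conin$-style, $\disin$-style, $\iimpin$, etc.) the morphism $f$ transports premises to legitimate premises: subformula membership and labels are preserved. For the rules introducing a fresh label, namely $\xdial$ and $\xboxr$, I would first pick a fresh label $v' \notin \lab(\lseq_{1})$ and extend $f$ by setting $f(v)=v'$ where $v$ is the fresh label introduced in $\prf_{0}$; the extended $f$ remains a morphism on the premise, and the fresh-label side condition is satisfied in the image. For the propagation rules $\refi$, $\refii$, $\prdiai$, $\prdiaii$, $\prboxi$, $\prboxii$ (which in the simple setting depend only on the presence of a single specific relational atom rather than a propagation path in an arbitrary grammar), the morphism condition immediately transports the required relational atom to $\rel_{1}$, so the same propagation rule applies in the image. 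The $\ddr$ case is handled analogously by introducing a fresh successor for $w_{1}$. At each step the inductive hypothesis gives a bounded derivation of the image premise of size $n-1$, and applying the corresponding rule yields a bounded derivation of $\lseq_{1}$ of size $n$.

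For claim 2, I would argue by contradiction. Suppose $\lseq$ is derivable but has no irredundant derivation. By the preceding lemma (moving from an arbitrary derivation to a bounded pseudo-derivation, and then recognizing it as a genuine derivation since its leaves remain axioms), $\lseq$ has a bounded derivation. Pick such a derivation $\prf$ of minimal size $N$. Since $\prf$ is not irredundant, there must exist two labeled sequents $\lseq_{0}, \lseq_{1}$ occurring in $\prf$, with $\lseq_{0}$ strictly above $\lseq_{1}$, and $\lseq_{0} \preceq \lseq_{1}$. Let $n$ be the size of the subderivation $\prf_{0}$ of $\lseq_{0}$ inside $\prf$; then the subderivation $\prf_{1}$ of $\lseq_{1}$ has size strictly greater than $n$ (it contains $\prf_{0}$ plus at least one further inference). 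By claim 1, $\lseq_{1}$ admits a bounded derivation of size $n$ (bounded relative to $\lseq$, since $\prf_{0}$ already was). Replacing $\prf_{1}$ inside $\prf$ with this shorter bounded derivation yields a bounded derivation of $\lseq$ of size less than $N$, contradicting minimality. Hence $\prf$ is irredundant.

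The main obstacle will be handling the rules with fresh-label side conditions in the inductive step of claim 1, because we must simultaneously extend the morphism $f$ to cover the newly introduced label in $\prf_{0}$ while guaranteeing that the image label is fresh in the growing target derivation and that the extended $f$ still respects boundedness relative to $\lseq$. A secondary subtlety is verifying that the subderivation of $\lseq_{0}$ extracted from $\prf$ is genuinely bounded relative to $\lseq$ (not merely relative to $\lseq_{0}$), so that claim 1 can be applied to produce a substitute subderivation of $\lseq_{1}$ that also remains bounded relative to $\lseq$ — this holds because boundedness is a global hereditary property of the original $\prf$, but it needs to be spelled out carefully to complete the replacement step in claim 2.
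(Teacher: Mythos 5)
Your proposal is correct and follows essentially the same route as the paper, which simply defers both claims to Simpson's Lemma~7.3.7 (morphism-based transport of a bounded derivation, by induction on its size) and Proposition~7.3.8 (minimal-size bounded derivations are irredundant); your write-up is in effect the argument behind that citation, including the right handling of fresh-label rules by extending the morphism and the observation that the replacement step in claim~2 preserves boundedness relative to $\lseq$. No gaps worth flagging.
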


\begin{proof} The proof of claim 1 is similar to \cite[\lem~7.3.7]{Sim94} and the proof of claim 2 is similar to \cite[Proposition~7.3.8]{Sim94}.
\end{proof}

\begin{theorem}\label{thm:proof-search}
 Let $\rcalc$ be a refined labeled calculus for a simple intuitionistic grammar logic $\ikma$. Then, it is decidable to check if $\models_{\axs}^{\albet} \rel, \Gamma \sar w : A$ for any arbitrary labeled sequent $\rel, \Gamma \sar w : A$.
\end{theorem}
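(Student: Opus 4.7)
The plan is to reduce the decidability of semantic validity to the decidability of derivability in $\rcalc$, and then show that derivability can be decided by a bounded proof-search procedure. First, by soundness and completeness (\thm~\ref{thm:ikal-sound-complete}), an arbitrary labeled sequent $\rel, \Gamma \sar w : A$ is $(\albet,\axs)$-valid \ifandonlyif it is derivable in $\rcalc$. Hence it suffices to exhibit a terminating algorithm deciding $\rcalc$-derivability for a given input sequent $\lseq := \rel, \Gamma \sar w : A$.

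For the decision procedure, I would perform backward proof-search from $\lseq$, applying the rules of $\rcalc$ in all possible ways, but restricting attention to labeled sequents that are bounded relative to $\lseq$ and pruning any branch on which a sequent $\lseq_{0}$ appears above a sequent $\lseq_{1}$ with $\lseq_{0} \preceq \lseq_{1}$. By the second claim of the final lemma, if $\lseq$ is $\rcalc$-derivable, then some branch of this search tree yields an irredundant derivation, so completeness of the search is guaranteed. Conversely, termination follows from the preceding lemma: the set of bounded sequents relative to $\lseq$ has only finitely many $\cong$-equivalence classes, so each irredundant branch ascends through finitely many classes before being pruned; since the rules of $\rcalc$ are finitely branching and each rule has only finitely many applicable instances at a given sequent, the resulting search tree is finite. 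Thus the algorithm halts, reporting success \ifandonlyif some leaf is an instance of $\id$ or $\botl$.

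Two subtleties will need to be checked. First, the rules $\xdial$, $\xboxr$, and (when $\D \in \axs$) $\ddr$ introduce fresh labels, so one must verify that the depth of introduced labels stays bounded by $\md{\Gamma, w : A}$; this is exactly what the first auxiliary lemma of the appendix guarantees, using the fact that the modal depth of the active formulae strictly decreases with each such introduction. Second, one must verify that restricting proof-search to \emph{bounded} irredundant derivations does not miss any derivable sequents; this follows from claim 2 of the same lemma, since any pseudo-derivation can be replaced by a bounded one, combined with the irredundancy result. The main conceptual obstacle is managing the interplay between freshness of labels, the modal-depth bound, and the pruning induced by $\preceq$, in the presence of multiple modalities and converse modalities. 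However, the groundwork laid by the auxiliary lemmata---whose proofs follow Simpson's template with the modest generalizations noted in the text---handles precisely these issues, so the theorem follows.
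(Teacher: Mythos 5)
Your proposal is correct and follows essentially the same route as the paper: reduce validity to $\rcalc$-derivability via soundness and completeness, then decide derivability by bottom-up search for bounded irredundant derivations, using the finiteness of $\cong$-equivalence classes for termination, the existence of irredundant derivations for completeness of the search, and a fixed choice of fresh labels to keep the rule applications at each stage finite. The paper's own proof is just a terser sketch of exactly this argument, likewise deferring the technical groundwork to the auxiliary lemmata adapted from Simpson.
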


\begin{proof} We decide the validity of a labeled sequent $\rel, \Gamma \sar w : A$ by searching for an irredundant proof of it in $\rcalc$. The proof-search algorithm is similar to the one given in~\cite[\sect~7.3.3]{Sim94}. First, we take the labeled sequent $\rel, \Gamma \sar w : A$ as input and check if it is an instance of $\id$ or $\botl$. If so, then we know that $\models_{\axs}^{\albet} \rel, \Gamma \sar w : A$, and if not, then we continue proof-search, searching for all irredundant proofs of size $1$, and then of size $2$, and so forth by applying all relevant rules of $\rcalc$ bottom-up. Only a finite number of bottom-up rule applications are possible at each stage modulo the choice of fresh labels in the $\xdial$, $\xboxr$, and $\ddr$ rules. However, this peculiarity can be overcome by fixing how fresh labels are chosen during proof-search. Also, note that it is computable to check irredundancy since the $\preceq$ relation is decidable on labeled sequents and it is decidable to check if a labeled sequent is bounded relative to the input $\rel, \Gamma \sar w : A$. Last, since the equivalence relation $\cong$ partitions the set of all labeled sequent bounded relative to $\rel, \Gamma \sar w : A$ into a finite number of equivalence classes, we know that eventually proof-search will terminate.
\end{proof}

\begin{customthm}{\ref{thm:decid-simple-int-gram-logics}}
The validity problem for simple intuitionistic grammar logics is decidable.
\end{customthm}

\begin{proof} Follows from \thm~\ref{thm:proof-search} above.
\end{proof}

\end{document}
\endinput